\newtheorem{thm}{Theorem}
\begin{document}

\title{Bipartite Randomized Response Mechanism for Local Differential Privacy}


\author{Shun Zhang}
\author{Hai Zhu}
\email{szhang@ahu.edu.cn}
\email{e23301231@stu.ahu.edu.cn}
\affiliation{%
  \institution{Anhui University}
  \city{Hefei}
  \country{China}
}


\author{Zhili Chen} 
\affiliation{%
  \institution{East China Normal University}
  \city{Shanghai}
  \postcode{200062}
  \country{China}
}
\email{zhlchen@sei.ecnu.edu.cn}

\author{Haibo Hu}
\affiliation{%
  \institution{The Hong Kong Polytechnic University}
  \city{Hung Hom}
  \country{Hong Kong}
}
\email{haibo.hu@polyu.edu.hk}


\begin{abstract}
  With the increasing importance of data privacy, Local Differential Privacy (LDP) has recently become a strong measure of privacy for protecting each user’s privacy from data analysts without relying on a trusted third party. In this paper, we consider the problem of high-utility differentially private release. Given a domain of finite integers $\{1,2,\cdots,N\}$ and a distance-defined utility function, our goal is to design a differentially private mechanism that releases an item with the global expected error as small as possible. The most common LDP mechanism for this task is the Generalized Randomized Response (GRR) mechanism that treats all candidate items equally except for the true item. In this paper, we introduce Bipartite Randomized Response mechanism (BRR), which adaptively divides all candidate items into two parts by utility  rankings given priori item. In the local search phase, we confirm how many high-utility candidates to be assigned with high release probability as the true item, which gives the locally optimal bipartite classification of all candidates. For preserving LDP, the global search phase uniformly selects the smallest number of dynamic high-utility candidates obtained locally. In particular, we give explicit formulas on the uniform number of dynamic high-utility candidates. The global expected error of our BRR is always no larger than the GRR, and can offer a decrease with a small and asymptotically exact factor. Extensive experiments demonstrate that BRR outperforms the state-of-the-art methods across the standard metrics and datasets.
\end{abstract}

\begin{CCSXML}
<ccs2012>
   <concept>
       <concept_id>10002978.10002991.10002995</concept_id>
       <concept_desc>Security and privacy~Privacy-preserving protocols</concept_desc>
       <concept_significance>500</concept_significance>
       </concept>
   <concept>
       <concept_id>10002978.10003018.10003019</concept_id>
       <concept_desc>Security and privacy~Data anonymization and sanitization</concept_desc>
       <concept_significance>500</concept_significance>
       </concept>
 </ccs2012>
\end{CCSXML}

\ccsdesc[500]{Security and privacy~Privacy-preserving protocols}
\ccsdesc[500]{Security and privacy~Data anonymization and sanitization}

\keywords{Local Differential Privacy, Bipartite Randomized Response, Utility}

\received{20 February 2007}
\received[revised]{12 March 2009}
\received[accepted]{5 June 2009}

\maketitle

\section{Introduction}

In recent years, as concerns over data privacy have intensified~\cite{TM22}, Differential Privacy (DP) has emerged as a vital approach for protecting individual information~\cite{DKT2025}. DP, first introduced by Dwork in $2006$~\cite{DMN2006}, addresses the issue of privacy leakage caused by minor changes in the data source. With rigorous theoretical guarantees, centralized DP ensures that the output information remains influenced by any single record only within a specified threshold, thus preventing third parties from inferring sensitive information based on output variations. Users have to provide original data to a central server, which acts as a trusted entity to preserve centralized DP. In contrast, Local Differential Privacy (LDP) \cite{KLNRS11} allows users to send data with added noise directly to the central server. 
This ensures that individual privacy is protected before the  data upload.
\begin{figure}[tb]
\centering
\subfigure{
 \begin{minipage}{0.47\linewidth}
 \includegraphics[width=1.00\textwidth,trim=0.4cm 0.5cm 0.35cm 0.5cm, clip]{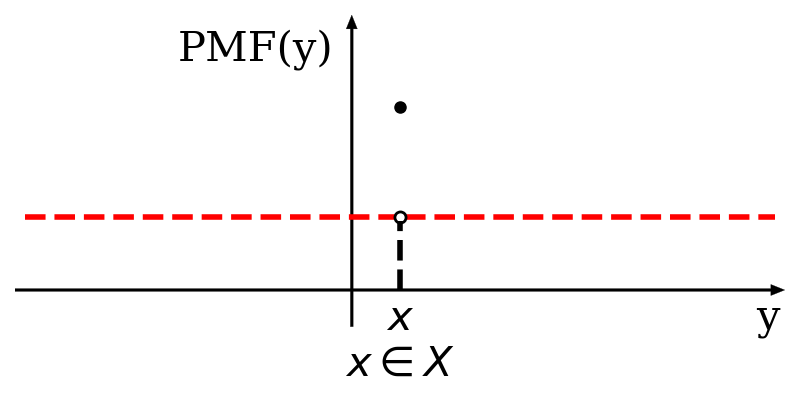} 
 \centerline{\fontsize{9}{10}\selectfont(a) GRR mechanism \cite{KOV2016}
 }
 \end{minipage}
}
\subfigure{
 \begin{minipage}{0.47\linewidth}
 \includegraphics[width=1.00\textwidth,trim=0.4cm 0.5cm 0.35cm 0.5cm, clip]{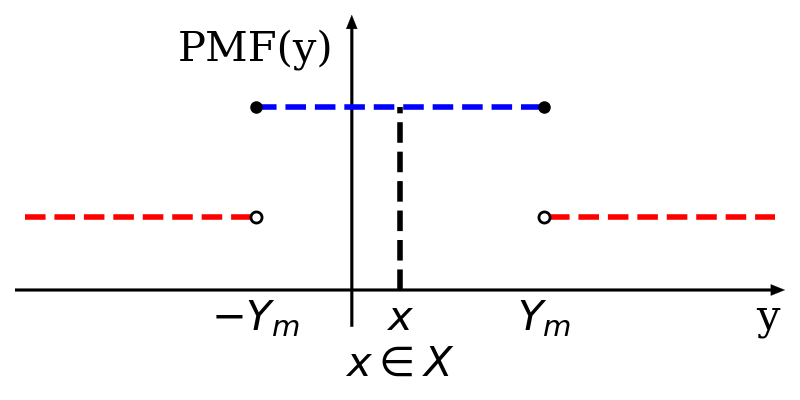} 
 \centerline{\fontsize{9}{10}(b)   BRR mechanism}
 \end{minipage}
}
\caption{\fontsize{9}{10} Probability mass 
function (PMF) for $GRR$ and $BRR$.}
\label{fig:GRR_BRR_PDF}
\end{figure}

One foundational mechanism in LDP 
is the Randomized Response (RR) mechanism introduced by Warner~\cite{W1965}. 
While RR is effective for protecting binary data, its applicability is limited to scenarios where the answer domain consists of only two values. To address this, the Generalized Randomized Response mechanism (GRR, also known as $k$-RR)~\cite{KOV2016} 
is proposed. GRR extends the RR mechanism to accommodate larger answer domains consisting of $N$ possible candidates as shown
in Fig. \ref{fig:GRR_BRR_PDF}(a). Under GRR, the data publisher selects and reports the true value with a fixed probability $p$, while assigning an equal probability to each of the remaining $N-1$ values with probability $\frac{1-p}{N-1}$. This extension retains the privacy-preserving properties of RR while broadening its applicability to use cases involving multi-valued categorical data, such as user preferences.
It can be used directly as a competitive mechanism for a range of tasks, including frequency estimation \cite{FCLG2023,WLJ2019}, 
data collection ~\cite{YHMZ2019}, decision tree ~\cite{MZCY2023}, and  deep learning ~\cite{MPBKLCA2020}.
The GRR mechanism is order optimal in the low privacy regime (large $\epsilon$) for a series of utility functions \cite{KOV2016}.
However, GRR has some limitations on the scenarios as follows: 

\begin{itemize}
\item Candidates within the answer domain are often numerous. With the increasing number of candidates in the equidistant point domain,  the quality loss of GRR grows explosively as shown
in Fig. \ref{fig:bQ}(a). 

\item  Candidates often have different degrees of utility. 
This factor is not taken into account by the GRR mechanism that treats all candidates equally except the true item.

\item 

Users sometimes require high privacy levels (small $\epsilon$) for protecting sensitive data. However, GRR can not achieve optimal utility in this case \cite{KOV2016}. 

\end{itemize}

To address these limitations, one has to adjust the probability distribution on candidates for achieving higher utility. This raises the open question of how to make the adjustment efficient with theoretical guarantees.
In this paper, we propose the \textit{Bipartite Randomized Response} (BRR) mechanism as an alternative to the GRR
mechanism for the task of differentially private release.
BRR incorporates the utility of candidates in the answer domain and consists of the local and global search phases.
Starting with the GRR framework, BRR adjusts the probabilities assigned to each candidate based on their utility to the true answer. 



Let \( X \) denote the space of true values and \( Y \) the space of released values, where \( Y \) is identical to \( \mathcal{X}=\{1,2,,\ldots,N\}\). When perturbing a specific \( x_i \), BRR assigns higher publishing probabilities to the \( m \) candidates most close to \( x_i \) (with higher utility) as shown
in Fig. \ref{fig:GRR_BRR_PDF}(b).
By adjusting the distribution of publishing probabilities, BRR increases the likelihood of publishing data that are closer to the true item while decreasing the likelihood for data that are farther. This reallocation mechanism renders the publishing probabilities more rational.

We can theoretically demonstrate that BRR significantly reduces utility loss relative to GRR, cf. Theorem \ref{thm:global_ratio_qloss} in the equidistant point
domain.
Fig.~\ref{fig:bQ}(a) shows that BRR has a significantly lower query loss ($QLoss$) compared to GRR with varying $N$. In addition, when $N$ tends to infinity, the asymptotically exact  quality-loss (and global expected-error) ratio of BRR to GRR is verified in Fig.~\ref{fig:bQ}(b).


BRR enjoys the same desirable properties
of the GRR mechanism stated above, and its local and global expected errors are never higher in any privacy region, but can be exponential times (dependent on  privacy budget $\epsilon$) lower than those of the GRR mechanism as long as the count $N$ of candidates is sufficiently large. Furthermore, we show that in reasonable
settings no better mechanism exists: the BRR mechanism is  optimal in a reasonable ``global'' sense,  independent of the prior distribution.

The \textbf{main contributions} of this work are summarized as follows:
\begin{itemize}
\item We propose the \textit{Bipartite Randomized Response} (BRR) mechanism as an alternative to the Generalized Randomized Response (GRR)
mechanism for the task of LDP release while aiming to obtain global high utility under strict LDP
constraints.


\item  BRR mechanism is conducted through a two-phase approach. In the first phase, given priori item,   we sort the candidates according to their utilities and divided into two parts for maximizing local utility. 
In the second phase, for enforcing LDP the global search selects uniformly the smallest count among dynamic high-utility parts obtained locally.

\item  For the BRR mechanism in the equidistant point domain, we present explicit formulas for the global uniform number of elements that make up the dynamic high-utility part. The expected error of BRR is always no more than that of GRR, and offers a decrease with a small and asymptotically exact ratio as elaborated later in Theorem \ref{thm:global_ratio_qloss}. 


\item  A series of experiments are executed for comparisons and applications on public datasets. The significant utility improvements over the state-of-the-art in both discrete and continuous domains have impacts on practical deployments of local differential privacy and the privacy-utility trade-off in social service platforms.

\end{itemize}

\textbf{Roadmap.} In the Section 2, some preliminaries are presented. Section 3 expounds our  BRR mechanism in the equidistant point domain.
Section 4 discusses the exploration of BRR in general domains.
Section 5
presents some comparative experiments on BRR and the state-of-the-arts response mechanisms. Section 6 reviews the related work. Finally, we offer a comprehensive conclusion of this paper.

\section{Preliminaries}
In this section, we delve into the concept of $\epsilon$-differential privacy, local differential privacy and associated noise-adding mechanisms, including Generalized Randomized Response (GRR), the Exponential Mechanism, and the Laplace Mechanism. These methods constitute pivotal techniques for implementing differential privacy, and each will be elucidated in detail below. Table~\ref{notation} summarizes the main notations in this paper.

\begin{table}[tb]
\caption{Summary of Notations}
\label{notation}
        \centering
		\begin{tabular}{p{0.5cm}l}
			\toprule
			Symbol  & \quad Definition \\
			\midrule
        $X$ &  {the domain of priori values or points}\\
        $Y$ &  {the domain of reported values or points, identical to $X$ }\\
        $N$ &  {the cardinality of domain $Y$, i.e., $N=\lvert Y \rvert$}            \\
        $Y_m$& {the set of m high probability $e^\epsilon$-weighted candidates }\\
        $\epsilon$ & privacy budget        \\ 
        $p$ &{probability of publishing the true item in GRR}\\
        $q$ &{probability of publishing non-true item in GRR}\\
        $m$ &  {the unified count of high-weighted terms}            \\
        $p_m^\ast$&{probability of reporting each top-$m$ utility item in BRR} \\
        $q_m^\ast$&{probability of reporting each low utility item in BRR}    \\
        $k$ & { real term currently being processed}\\
       $\lambda_i$  & {the $i$-th highest utility of items  given the true item}   \\
        $\lambda^{(k)}_i$ & {the $i$-th highest utility of items  for the $k$-th priori point}  \\
       $s_i$  & {weight of reporting the $i$-th item, with initialization $s_1=e^\epsilon$}   \\
        $s^{(k)}_i$ & {weight of reporting the $i$-th item in the $k$-th priori  point}  \\
       $w_i$  & {normalized probability weight for the $i$-th item, $\Sigma_i{w_i}=1$}   \\
        $w^{(k)}_i$&{normalized weight of the $i$-th item in the $k$-th point}\\
        $Q$ & {local utility  function (expected error) defined by $Q=\Sigma_i{\lambda_i w_i}$} \\
        $\pi$ &  the prior probability distribution on the domain\\
        $Q^{(k)}$ &{local expected error at the priori point $k$  by Eq.  \eqref{Q_k}}\\
        $Q_g$ & {global utility function (expected error) defined by Eq.  \eqref{Q_g}}  \\
        $QLoss$ & {global service quality loss defined by Eq. \eqref{QLoss}}  \\
	 \bottomrule
    \end{tabular}
\end{table}

\subsection{$\epsilon$-Differential Privacy and Local Differential Privacy}
DP and LDP are both privacy-preserving mechanisms used to protect sensitive data. While both methods aim to ensure privacy, they differ in their implementation and application scenarios~\cite{LLSY2017}.
Let $D \sim D'$ denote  the neighboring relation between $D$ and $D'$, that is, the datasets \( D \) and \( D' \)  differ on adding or deleting only one element for unbounded DP \cite{LLSY2017}. We first recall the  classical concept of Centralized DP \cite{DMN2006} as follows.

\begin{definition}[$\epsilon$-Differential Privacy] \label{def:DP}
     A randomized mechanism \( \mathcal{M} : \mathcal{D} \to \mathcal{R}\) satisfies \( \epsilon \)-differential privacy (\( \epsilon \)-DP), where \( \epsilon \geq 0 \), if and only if for any neighboring datasets \( D \) and \( D' \), we have
\begin{equation}
 \forall S \subseteq \mathcal{R}: \Pr[\mathcal{M}(D) \in S] \leq e^{\epsilon} \cdot \Pr[\mathcal{M}(D') \in S],   
\end{equation}
where \(\mathcal{R}\) denotes the set of all possible outputs of the mechanism \( \mathcal{M} \).
\end{definition}

$\epsilon$-DP provides strong privacy guarantees under the assumption of a trusted data collector. However, in many practical scenarios, such as mobile or edge computing environments, users may not be willing to trust a central server holding their raw data. To address this concern, a stronger notion, Local Differential Privacy (LDP), is formulated in \cite{KLNRS11}.  We will use the definition of LDP given by Erlingsson et al. \cite{EPK14}.

\begin{definition}[$\epsilon$-Local Differential Privacy]\label{def:LDP}
    A randomized mechanism \( \mathcal{M} \) satisfies \( \epsilon \)-LDP, where \( \epsilon \geq 0 \), if and only if for any input \( x ,\; x' \in X\), we have 
\begin{equation}
 \forall y \in \mathcal{M}(X) \; :\; \Pr[\mathcal{M}(x) = y] \leq e^{\epsilon} \cdot \Pr[\mathcal{M}(x') = y].
\end{equation}
Here, \(\mathcal{M}(X)\) denotes the set of all possible outputs of \(\mathcal{M}\).
\end{definition}

A key quantity in designing mechanisms that satisfy either $\epsilon$-DP or $\epsilon$-LDP is the sensitivity of a scoring function~\cite{DMN2006}. It captures the maximum influence that a single individual can have on the result of a numeric query. 
\begin{definition}[Sensitivity]\label{def:q}
    Let $D \sim D'$. The sensitivity of a scoring function $q$, denoted by $\Delta q$, is given by
\begin{equation}
    \Delta q = \max_{D \sim D'} | q(D) - q(D') |.
\end{equation}
\end{definition}

\subsection{Laplace Mechanism}
The Laplace Mechanism~\cite{DMN2006,DJW2013}  is the most widely used obfuscation mechanism for DP, which accomplishes privacy protection by adding noise drawn from a Laplace distribution to the query results. 
\begin{definition}[The Laplace Mechanism]
    For a given query function $q$, the noisy output is defined as:
\begin{equation}
\mathcal{M}(D) = q(D) + \text{Lap}\left(\frac{\Delta q}{\epsilon}\right).
\end{equation}
Here, $\text{Lap}(\frac{\Delta q}{\epsilon})$ denotes a Laplace distribution with a mean of $0$ and scale parameter $\frac{\Delta q}{\epsilon}$.
\end{definition}

\subsection{Exponential Mechanism}
The Exponential Mechanism, introduced by McSherry and Talwar~\cite{MT2007}, is a versatile DP mechanism suitable for arbitrary query ranges. The fundamental principle involves selecting outputs from the candidates space based on a score function. 
\begin{definition}[Exponential Mechanism]
    For any quality function $q:\mathbb{D}\: \times \:O \to \mathbb{R}$, $o \in O$, the Exponential Mechanism $\mathcal{M}(D)$ outputs $o$ with probability,
\begin{equation}
\Pr[\mathcal{M}(D) = o] \propto \exp\left(\frac{\epsilon q(D, o)}{2 \Delta q}\right).
\end{equation}
\end{definition}

\subsection{Generalized Randomized Response (GRR)}
Local privacy dates back to Warner~\cite{W1965}, who
introduced the Randomized Response (RR, also known as W-RR) mechanism 
to deal with binary responses.  Generalized Randomized Response (GRR, also known as $k$-RR)
is a generalization of
the RR to handle multiple choice problems by Kairouz et al.~\cite{KOV2016}.

Let $\mathcal{X}$ denote the input domain consisting of $N \geq 2$ elements. GRR maps $\mathcal{X}$ stochastically onto itself (i.e., $\mathcal{Y}=\mathcal{X}$). For any $ x\in \mathcal{X}$ and $y\in \mathcal{Y}$, the perturbation probability distribution is defined by
\begin{equation}\label{GRR}
\Pr[y|x]=
\begin{cases}
\frac{e^\epsilon}{ e^\epsilon +N-1}& \text{if}\  y=x,\\
\frac{1}{ e^\epsilon +N-1}&  \text{if}\  y\neq x. 
\end{cases}
\end{equation}
If $N = 2$, this gives the definition of RR mechanism.

\begin{figure*}[tb]
\centering
 \includegraphics[scale=0.52,trim=0.7cm 9cm 0.1cm 1.8cm, clip]{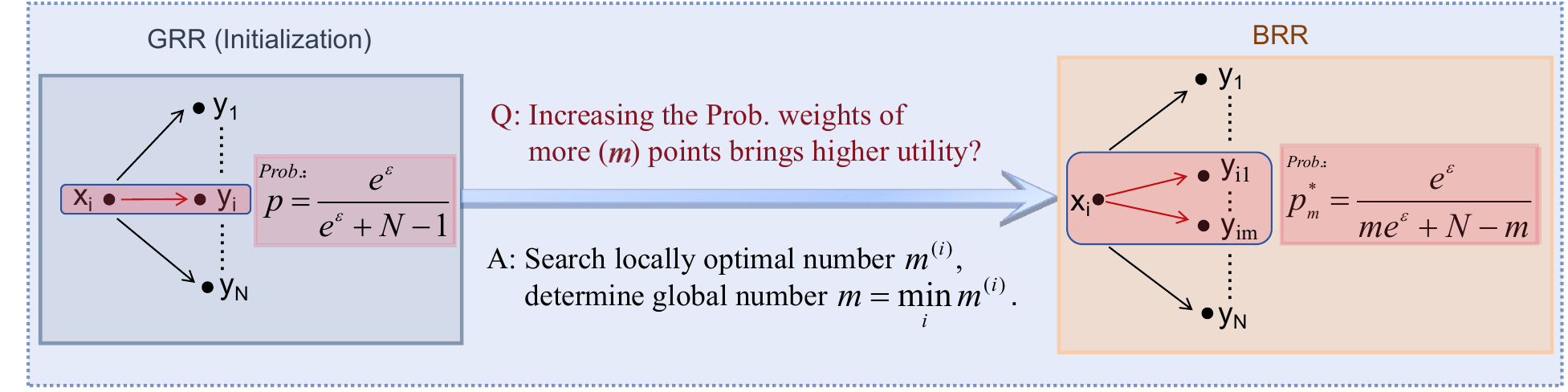} 
 \caption{Framework of BRR mechanism. Beginning with GRR, we maximize the pointwise prob. weight (up to $e^\epsilon$) of high-quality points by Alg. \ref{alg1} for obtaining locally optimized $k=m^{(i)}$ of each priori point $x_i$. Then the minimal $m^{(i)}$ presents the globally uniform  $m$ of $e^\epsilon$-weighted candidates (with prob. $p^*_m$), while the others are $1$-weighted, for the final BRR, cf. Alg. \ref{alg2}.
 }
 \label{fig:BRR}
\end{figure*}

\subsection{Design Consideration}\label{sect: prob_formu}

We consider a scenario where the user wants to protect the privacy of her/his actual value $x_i$ from a specified discrete domain $\mathcal{X}=\{1,2,,\ldots,N\}$,
by reporting a pseudo-value $y_i$
in a finite set $\mathcal{Y}$, where usually $\mathcal{Y}=\mathcal{X}$.
It is desirable to develop an obfuscation mechanism that satisfies $\epsilon$-differential privacy and generates perturbed values with effective performance. 

Naturally, beginning with GRR by \eqref{GRR}, we first search for locally optimized  high-quality points whose number $m^{(i)}$ is due to the starting priori point $x_i$. Then we confirm the uniform number $m$ of high probability $e^\epsilon$-weighted candidates for the desired BRR mechanism. 

On the construction of BRR, we have to overcome four challenges as follows:

\begin{itemize}
\item In the local search phase, why are the optimal distributed probability weights  simply bipartite, only $1$ and $e^\epsilon$?

\item
In the global search phase, how shall we confirm the uniform number $m$ of high probability $e^\epsilon$-weighted candidates that is independent of the priori distribution?

\item Can we compute the uniform number $m$ by some explicit formula? 

\item What is the impact of our proposed BRR mechanism on theory and applications? Specifically, to what extent does the BRR outperform the GRR in the utility of the mechanism?
\end{itemize}

\section{Design of BRR in equidistant point domain}

In this section, we consider the one-dimensional domain that consists of $N$ equidistant points on the real line. Without loss of generality, we assume $\mathcal{X}=\{1,2,\ldots, N\}$ and denote the data utility 
by the expected error/distance between the true and reporting/pseudo points. The longer distance gives the lower data utility. 

In the following, we first introduce the basic expression of BRR as illustrated by Fig.~\ref{fig:BRR}. Then we focus on determining the number \( m \) of high weighted candidates  in the BRR mechanism. This includes the local and global search phases for $m$.

\subsection{Bipartite Randomized Response (BRR)}

Given a priori point $x$, privacy budget $\epsilon$ and the uniform number $m$ of high probability $e^\epsilon$-weighted candidates, the set of $m$ points is denoted by \( Y_m \). Clearly, \( Y_m\subset Y \) depends heavily on the priori point $x$.

As shown in Fig.~\ref{fig:BRR}, we introduce Bipartite Randomized Response (BRR) mechanism whose  perturbation probability
distribution is defined by  
\begin{equation}
\Pr[y|x]=
\begin{cases}
\frac{e^\epsilon}{ me^\epsilon +N-m}& \text{if}~  y\in{Y_m },  \\
\frac{1}{ me^\epsilon +N-m}&  \text{if}~y\notin{Y_m}.
\end{cases}\label{def:BRR}
\end{equation}

When \( m = 1 \),  BRR becomes the initial  trivial mechanism GRR. In addition, following such a definition, the BRR satisfies differential privacy. Under this restriction, we will investigate how to search the optimal number $m$  for making the utility of the BRR mechanism as high as possible. This approach includes two phases: the local and global search phases.



\begin{algorithm}[tb]
  \renewcommand{\algorithmicrequire}{\textbf{Input:}}
  \renewcommand{\algorithmicensure}{\textbf{Output:}}
  \caption{Local highest-utility response search algorithm}
  \label{alg1}
  \begin{algorithmic}[1]
   \REQUIRE  quality loss\ $0\leq \lambda_1 \leq \lambda_2 \leq \cdots \leq \lambda_N$,\ \ privacy budget\ $\epsilon$\\

    \STATE Initialize: \ $s_1 = e^\epsilon$,\ $s_2= \cdots= s_N=1$,\ $w_1=\frac{e^\epsilon}{e^\epsilon+N-1}$,\\ $w_2= \ldots= w_N= \frac{1}{e^\epsilon+N-1}$,\ $Q=\Sigma_i{\lambda_i w_i}$, $m=1$
    \FOR {$i$ = 2 \textbf{to} $N$}
      \STATE{Compute $\frac{\partial Q}{\partial s_i}=\frac{\Sigma_j(\lambda_i-\lambda_j)s_j}{(\Sigma_js_j)^2}$}
      \STATE \textbf{if}\ {$\frac{\partial Q}{\partial s_i}<0$} \textbf{then}
        $s_i = e^\epsilon$ and $m=i$,
      \textbf{otherwise}, break
    \ENDFOR
    \STATE{$w_i=\frac{s_i}{\Sigma_j{s_j}}$ 
   \textbf{for} $i = 1$ \textbf{to} $N$}
  \ENSURE $(w_1,\ldots,w_N)$ and $m$
  \end{algorithmic}
\end{algorithm}

\subsection{Determining Locally Optimal $m$ for BRR}\label{sect: local_BRR}

For local publishing, given the actual point $x$ from the domain $\mathcal{X}$, let \(\lambda_i\) denote the data utility of reporting each point in $\mathcal{X}$. The utility can be defined adaptively in various scenarios such as the Jaccard similarity discussed in Section \ref{sect:jaccard}. Currently, we assign simply the Euclidean distance as the metric of data utility, which can also be called service quality loss. Then,  the closer reporting point to the actual point $x$ gives higher utility and smaller quality loss.
 After sorting, the quality loss of all points are increasingly like
$$0=\lambda_1 \leq \lambda_2 \leq \cdots \leq \lambda_N,$$
with the pairs $(y_1,\lambda_1),\cdots, (y_N,\lambda_N)$.
Accordingly, the initial probability distribution weights \(s_i\) can satisfy
$$e^\epsilon=s_1\geq s_2\geq \cdots\geq s_N=1.$$
 By normalization of the sum, define  \(w_i=s_i/\sum_j s_j\) such that $\Sigma_i{w_i}=1$. The local expected error of reporting is defined by $Q=\Sigma_i{\lambda_i w_i}$,
 i.e., 
 \begin{equation}\label{utility_local}
 Q=\frac{\Sigma_i{\lambda_i s_i}}{\Sigma_j s_j}.
 \end{equation} 
 
 To further explore the effect of \(s_i\) on 
\(Q\), we take the partial derivative of \(Q\) with respect to \(s_i\) as
\begin{equation}\label{Qloss_partial}
\frac{\partial Q}{\partial s_i}=\frac{\lambda_i\Sigma_js_j-\Sigma_j{\lambda_js_j}}{(\Sigma_j s_j)^2}=\frac{\Sigma_j(\lambda_i-\lambda_j)s_j}{(\Sigma_js_j)^2}.
\end{equation}

It can be observed that the monotonicity of \(Q\) with respect to \(s_i\) is independent of the value of \(s_i\). As illustrated in Algorithm \ref{alg1},  we perform the local search to minimize the expected error \(Q\).

\textbf{Step 1:}  Given the actual point $x$, initialize the search algorithm as the GRR mechanism, i.e., $s_1=e^\epsilon$,\ and the other, $s_2,\ldots, s_N$, are uniformly assigned the minimum weight value \(1\). 

\textbf{Step 2:}  Increase $s_2$ to the maximum value $s_1=e^\epsilon$, then successively increase $s_3, s_4, \ldots$, until some \(s_i\) satisfies $\frac{\partial Q}{\partial s_i}\geq 0$. Stop iterating at this point $y_i$, fix $m=i$ and $Y_m=\{y_1,\ldots,y_m\}$ and remain
$s_i,\ldots, s_N$ unchanged to be \(1\). 

\textbf{Step 3:}  
Update all weights $w_j$ by
\eqref{def:BRR} and finish the local search phase. 


Following this, different priori points could give different locally optimal $m$ and further different perturbation probability distributions by employing \eqref{def:BRR}. This will certainly violate $\epsilon$-differential privacy guarantee. Then, we have to search for a uniform number $m$, even independent of the prior distribution over the domain $\mathcal{X}$ in the next section.

\subsection{Determining Globally Uniform $m$ for BRR}

For global publishing with differential privacy guarantee, 
given each real occurrence \( k\in\mathcal{X} \), i.e., \(1\le k\le N\),
we first sort the quality losses  $\lambda_i^{(k)}$ and  initial weights $s_i^{(k)}$ of all points as follows,
$$0=\lambda_1^{(k)} \leq \lambda_2^{(k)} \leq \cdots \leq \lambda_N^{(k)},\ \
e^\epsilon=s_1^{(k)}\geq s_2^{(k)}\geq \cdots\geq s_N^{(k)}=1.$$
The weights $s_i^{(k)}$ are normalized to generate $w_i^{(k)}=\frac{s_i^{(k)}}{\Sigma_j s_j^{(k)}}$ with satisfying $\Sigma_i w_i^{(k)}=1$. Note that for different occurrence $k$, the sequence of pairs $\{(y_1,\lambda_1^{(k)}),\cdots, (y_N,\lambda_N^{(k)})\}$ varies.

For the occurrence item \(k\), the initial weight vector \(s^{(k)}\) is the same as the GRR, where the first weight \(s_1^{(k)}=e^\epsilon\) (for the point $k$ itself) and the remaining weights are $1$. 
In particular, the first pair is \((\lambda_1^{(k)},s_1^{(k)})=(0,e^\epsilon)\).

The local expected error (local quality loss) for the occurrence item \(k\) is denoted by 
\begin{equation}\label{Q_k}
Q^{(k)} = \sum_{i} \lambda^{(k)}_i w_i^{(k)}.  
\end{equation}

By averaging uniformly the local utility of all points, the global expected error (global utility loss function) is denoted by, with uniform average,
\begin{equation}\label{Q_g}
Q_g =\frac 1N\sum_k Q^{(k)} = \frac 1N \sum_k\sum_{i} \lambda^{(k)}_i w_i^{(k)}.
\end{equation}

 As illustrated in Algorithm \ref{alg2},  we perform the global search to minimize the quality loss \(Q_g\). This gives the sequence $(w_1,\ldots,w_N)$ uniformly for any priori $k$, then $w_i^{(k)}$ can be simply rewritten as $w_i$ in \eqref{Q_k} and \eqref{Q_g}. The search steps are summarized as follow.

\textbf{Step 1:}  Search locally optimal $m^{(k)}$ for each occurrence $k$ by Algorithm \ref{alg1}, cf. Lines 1-8.

\textbf{Step 2:}  Search globally uniform $m=\min_k m^{(k)}$, cf. Line 9.

\textbf{Step 3:}  
Update all weights $w_j$ by
\eqref{def:BRR} where $Y_m$ is replaced by $Y_m^{(k)}$ that contains the first $m$ points closest to the point (occurence) $k$,
and finish the global search phase, cf. Line 10.

Starting from the second weight \(s^{(k)}_2\), each weight is adjusted sequentially. During the adjustment process, the partial derivative of the utility function \(Q^{(k)}\) with respect to \(s^{(k)}_i\) is calculated to determine whether increasing the current weight can improve the utility.  

Once the weights of all occurrence items \(k\) have been optimized, we compare the optimal $m^{(k)}$ derived from each local search phase and select the minimum value as the global final threshold.
This allows us to construct a unified global BRR mechanism. In this global BRR mechanism:
\begin{itemize}
\item  The number \(m\) of high weighted candidates is chosen as the smallest  $m^{(k)}$ across all occurrences. This ensures that the obfuscation mechanism achieves higher utility than GRR at each occurrence point \(k\). If \(m\) increases, the local utility of publishing some occurrences would certainly decrease. 
In addition, this selection is independent of the prior distribution $\pi$ over the domain $\mathcal{X}$.

\item For each local occurrence \(k\), the first \(m\) points closest to the actual occurrence point are assigned a probability weight of \(e^\epsilon\), while the remaining points with weight of \(1\) before normalization. Thus, the so-called set $Y_m$ depends on the actual occurrence point $k$.

\end{itemize}

The core idea of BRR mechanism demonstrated by Alg. \ref{alg2}, is to dynamically adjust the weight distribution under the constraints of privacy protection, i.e., achieving an optimal trade-off between utility and privacy. 
In essence, the algorithms above aims to give the number $m$ of high weighted candidates.
Next, we present some explicit
formulas of the uniform number $m$.

\begin{algorithm}[tb]
  \renewcommand{\algorithmicrequire}{\textbf{Input:}}
  \renewcommand{\algorithmicensure}{\textbf{Output:}}
  \caption{Bipartite Randomized Response (BRR)}
  \label{alg2}
  \begin{algorithmic}[1]
   \REQUIRE  quality loss matrix $\{\lambda^{(k)}_j\}_{N\times N}$\ with $0\leq\lambda_1^{(k)} \leq \lambda_2^{(k)} \leq \cdots \leq \lambda_N^{(k)}$, privacy budget\ $\epsilon$\\

    \STATE Initialize: \ $s_1^{(k)} = e^\epsilon$,\ $s_2^{(k)}= \cdots= s_N^{(k)}=1$,\ \ $m^{(k)}=1$ 
    \FOR {$k$ = 1 \textbf{to} $N$}
    \STATE $w_1=\frac{e^\epsilon}{e^\epsilon+N-1}$,\ $w_2= \ldots= w_N= \frac{1}{e^\epsilon+N-1}$,\ \ $Q^{(k)}=\Sigma_i\lambda_i^{(k)}w_i$
     \FOR {$i$ = 2 \textbf{to} $N$}
      \STATE{Compute $\frac{\partial Q^{(k)}}{\partial s_i^{(k)}}=\frac{\Sigma_j(\lambda_i^{(k)}-\lambda_j^{(k)})s_j^{(k)}}{\left(\Sigma_js_j^{(k)}\right)^2}$}
      \STATE \textbf{if}\ $\frac{\partial Q^{(k)}}{\partial s_i^{(k)}}<0$ \textbf{then}
        $s_i^{(k)} = e^\epsilon$, $m^{(k)}=i$,
      \textbf{otherwise}, break
    \ENDFOR
    \ENDFOR
    \STATE Compute $m=\min_k m^{(k)}$
    \STATE Update all weights $w_j$ for $j$ = 1 \textbf{to} $N$ by
  \begin{align*}
      w_j=
\begin{cases}
\frac{e^\epsilon}{ me^\epsilon +N-m},& \text{if}\ j\,\le\,m\\
\frac{1}{ me^\epsilon +N-m},&  \text{if}\ j\,>\, m
\end{cases}  
    \end{align*}
  \ENSURE $(w_1,\ldots,w_N)$ and $m$
  \end{algorithmic}
\end{algorithm}

\subsection{Explicit
Formulas for $m$ and Utility}

\begin{figure}[tb]
\centering
 \centerline{\includegraphics[trim=0.1cm 0.32cm 0.1cm 0.1cm, clip,width=\linewidth]{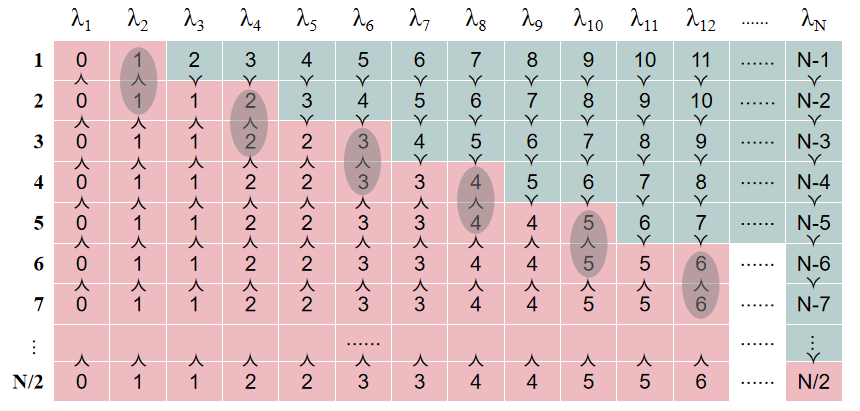}}
 \caption{Monotonicity characteristics of BRR with Euclidean distance utility
 for the array from $1$ to even $N$. The symbol $l
\prec l'$ in the column labeled by  $\lambda_i$ means that  the partial derivative $\frac{\partial Q}{\partial s_i}$
for the row of $l'$ is higher than that for $l$.}
 \label{fig3}   
\end{figure}
As above in Section 3,
the utility $\lambda$ is denoted by the Euclidean distance between the actual and reported numbers, i.e., simply the absolute value of their difference \(|x-y|\). 

As illustrated in Fig.~\ref{fig3},
after sorting $\lambda$ of all candidates for reporting, we have $0=\lambda_1\leq\lambda_2\leq\cdots\leq\lambda_N\le N$. For the numbers $1\sim N$, the distance array $[\lambda]$ for item $1$ is $[0,1,2,3,\cdots,N-1]$; the distance array for item $2$ is $[0,1,1,2,3,\cdots,N-2]$. 
For even $N$, we need only to consider the case that the priori number $k\le N/2$ due to the symmetry.

Since the distances are sorted in an ascending order, for any adjacent priori numbers $k$ and $k'=k-1$, 
by comparing their sorted distance arrays (rows) $[\lambda]$ and $[\lambda']$, 
we can find that there exists a single $j_0$, such that
$\lambda_1'=\lambda_1,\ \lambda_2'=\lambda_2,\ \cdots,\ \lambda_{j_0}'=\lambda_{j_0}$, while for the remaining $N-j_0$ distances, $\lambda_{j_0+1}'=\lambda_{j_0+1}+1,\ \lambda_{j_0+2}'=\lambda_{j_0+2}+1,\ \cdots,\ \lambda_N'=\lambda_N+1$. Each column number $j_0$ is indicated by the column involving a gray ellipse area between the two rows.

During the local search phase by Alg.~\ref{alg1}, one can mainly mention the 
sign of the derivative $\frac{\partial Q}{\partial s_i}$ for search the optimal local $m$.
For the symbol $l
\prec l'$ in the column labeled by  $\lambda_i$,  their involved arrays are denoted by $[\lambda]$ and $[\lambda']$, respectively, and the inequality $\frac{\partial Q}{\partial s_i}\Big|_{[\lambda]} <\frac{\partial Q}{\partial s_i}\Big|_{[\lambda']}$ holds true based on the setting $s_1=\ldots=s_{i-1}=e^{\epsilon},\ s_{i+1}=\ldots=s_N=1$.
This indicates that the array $[\lambda']$ achieves the 
non-negative sign on the partial derivatives
no later than array $[\lambda]$ and their locally optimal number of high-weighted candidates $m'\leq m$. 
This claim will be discussed in the proof of Theorem~\ref{thm:BRR}.

For the detailed explicit formulas on global $m$, we collect them regularly in Algorithm~\ref{alg:sequence}.
Due to the inequalities marked by $\prec$ between the partial derivatives at the adjacent elements in the same column,
we derive that the global minimal $m$ can occur only in the row with the extreme or middle priori point. By comparing the two candidates, we prove that when the privacy budget $\epsilon \geq 1$, the first row (from Fig.~\ref{fig3}) for the extreme points, $1$ and $N$ (each as priori point), give the global $m$ for BRR (Lines 2-3). On the other hand, the last row (from Fig.~\ref{fig3}) for the middle priori points, the explicit formulas are presented on Lines 4-5.

Following this, we obtain a series of main results as follows.

First, for the BRR mechanism in one-dimensional equidistant-point domain, the explicit formulas are mainly included in Theorem~\ref{thm:BRR}, and more detailed  formulas  are presented by
Algorithm~\ref{alg:sequence}.

\begin{algorithm}[tb]
  \renewcommand{\algorithmicrequire}{\textbf{Input:}}
  \renewcommand{\algorithmicensure}{\textbf{Output:}}
  \caption{BRR in one-dimensional equidistant point domain}
  \label{alg:sequence}
  \begin{algorithmic}[1]
\REQUIRE 
Privacy budget $\epsilon$, equidistant points count $N$

\STATE $m_1 \gets \Bigg\lfloor \frac{\sqrt{N^2e^\epsilon + \frac{1}{4}(1 - e^\epsilon)^2} - (N - \frac{e^\epsilon}{2} + \frac{1}{2})}{e^\epsilon - 1} \Bigg\rfloor$

\STATE \textbf{if}\ $\epsilon \geq 1$, \textbf{then} $m = m_1$ and go to Line 6

\STATE \textbf{if}\ $N$ is even \textbf{then} $i = \left\lfloor \frac{N}{e^{\frac{\epsilon}{2}} + 1} + 1 \right\rfloor$, \textbf{otherwise}, \\ 
$i = \left\lfloor \frac{\sqrt{e^\epsilon(N^2 - 1) + 1} - N}{e^\epsilon - 1} + 1 \right\rfloor$ 

\STATE \textbf{if}\ $i$ is odd \textbf{then}
        $m_2 = i$,
      \textbf{otherwise}, $m_2 = i + 1$
                
        \STATE $m \gets \min(m_1, m_2)$
    \STATE Define all weights $w_j$ for $j$ = 1 to $N$ by
\begin{align*}
      w_j=
\begin{cases}
\frac{e^\epsilon}{ me^\epsilon +N-m},& \text{if}\ j\,\le\,m\\
\frac{1}{ me^\epsilon +N-m},&  \text{if}\ j\,>\, m
\end{cases}  
\end{align*}
    
  \ENSURE $(w_1,\ldots,w_N)$ and $m$
\end{algorithmic}
\end{algorithm}

\begin{thm}\label{thm:BRR}
Given privacy budget $\epsilon$, let $m$ be the number of items endowed with high probability in BRR mechanism defined by \eqref{def:BRR}.

1) Whatever the selections of $m_1$ and $m_2$, we have $m\ge1$ and
\[
\lim_{N\rightarrow\infty}\frac{m}{N}=\frac{1}{e^{\frac{\epsilon}{2}} +1}.
\]

2) Assume $\epsilon \geq 1$, then we have, for the BRR mechanism, 
\[
m = \Bigg\lfloor \frac{\sqrt{N^2e^\epsilon + \frac{1}{4}(1 - e^\epsilon)^2} - (N - \frac{e^\epsilon}{2} + \frac{1}{2})}{e^\epsilon - 1} \Bigg\rfloor.
\]

3) BRR satisfies $\epsilon$-local differential
privacy.
\end{thm}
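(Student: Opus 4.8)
```latex
\textbf{Proof proposal for Theorem~\ref{thm:BRR}.}

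The plan is to prove the three parts in the order (2), (1), (3), since part (2) establishes the explicit formula from which the asymptotic limit in part (1) follows by a routine computation, and part (3) is a short independent verification.

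For part (2), the starting point is the stopping criterion of the local search (Algorithm~\ref{alg1}): the optimal $m$ is the largest index $i$ for which $\frac{\partial Q}{\partial s_i}<0$ under the configuration $s_1=\cdots=s_{i-1}=e^\epsilon$, $s_i=\cdots=s_N=1$. By the monotonicity argument sketched in Fig.~\ref{fig3} and promised for this proof, the global minimum $m=\min_k m^{(k)}$ is attained at the extreme priori point $k=1$ when $\epsilon\ge 1$, whose sorted distance array is $[\lambda]=[0,1,2,\ldots,N-1]$, i.e.\ $\lambda_j=j-1$. First I would substitute this array into the derivative sign condition $\frac{\partial Q}{\partial s_i}=\frac{\sum_j(\lambda_i-\lambda_j)s_j}{(\sum_j s_j)^2}<0$ from \eqref{Qloss_partial}. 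The denominator is positive, so the sign is governed by the numerator $f(i):=\sum_j(\lambda_i-\lambda_j)s_j$. With $s_j=e^\epsilon$ for $j\le i-1$ and $s_j=1$ for $j\ge i$ and $\lambda_j=j-1$, the numerator becomes an explicit sum of two arithmetic-type series in $i$; collecting terms gives a quadratic in $i$ of the form $a\,i^2+b\,i+c$ with coefficients depending on $e^\epsilon$ and $N$. The optimal $m$ is then the largest integer $i$ with $f(i)<0$, i.e.\ the floor of the relevant root of this quadratic, which after simplification should match the stated closed form $m_1=\big\lfloor\frac{\sqrt{N^2e^\epsilon+\frac14(1-e^\epsilon)^2}-(N-\frac{e^\epsilon}{2}+\frac12)}{e^\epsilon-1}\big\rfloor$. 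The main obstacle here is purely bookkeeping: correctly evaluating the two partial sums $\sum_{j\le i-1}(\lambda_i-\lambda_j)e^\epsilon$ and $\sum_{j\ge i}(\lambda_i-\lambda_j)$ and verifying that the discriminant and the chosen root of the quadratic reproduce exactly the formula, including the sign conventions that make the floor select the correct integer threshold.

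For part (1), I would take the explicit formula for $m$ (the expression $m_1$, and symmetrically the middle-point expression $m_2$) and compute $\lim_{N\to\infty} m/N$. Dividing numerator and denominator by $N$ inside the floor and using $\lfloor x\rfloor/N\to x/N$ as $N\to\infty$, the lower-order constant terms vanish and one is left with $\lim_{N\to\infty}\frac{m_1}{N}=\frac{\sqrt{e^\epsilon}-1}{e^\epsilon-1}=\frac{1}{\sqrt{e^\epsilon}+1}=\frac{1}{e^{\epsilon/2}+1}$. I would also check that the alternative even-$N$ expression $i=\lfloor\frac{N}{e^{\epsilon/2}+1}+1\rfloor$ appearing in Algorithm~\ref{alg:sequence} yields the same ratio, so that regardless of whether $m=m_1$ or $m=m_2$ the stated limit holds; this covers the claim ``whatever the selections of $m_1$ and $m_2$.'' The bound $m\ge 1$ is immediate since the local search initializes at $m=1$ (the GRR case) and never decreases it.

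For part (3), I would verify the LDP definition directly from \eqref{def:BRR}. Every output probability $\Pr[y\mid x]$ takes one of exactly two values, $\frac{e^\epsilon}{me^\epsilon+N-m}$ or $\frac{1}{me^\epsilon+N-m}$, with the same normalizing denominator for every input $x$ (since $|Y_m|=m$ is uniform across all priori points). Hence for any two inputs $x,x'$ and any output $y$, the ratio $\frac{\Pr[y\mid x]}{\Pr[y\mid x']}$ is at most $\frac{e^\epsilon}{1}=e^\epsilon$, which is exactly the condition of Definition~\ref{def:LDP}. The crucial point enabling this clean bound, and the reason the global search insists on a single uniform $m$, is precisely that the denominator is identical across inputs; I would emphasize that if $m$ varied with the priori point the denominators would differ and the privacy guarantee could fail, which is the motivation recorded in Section~3.3.
```
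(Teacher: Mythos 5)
Your parts (1) and (3) are sound and follow the paper's own route: part (3) is the same two-line ratio bound exploiting the common normalizing denominator, and part (1) correctly extracts the limit $\frac{1}{e^{\epsilon/2}+1}$ from both the extreme-point and middle-point formulas. The derivation you outline for the extreme-point quadratic in part (2) also matches the paper's computation, including the discriminant $\Delta = N^2e^\epsilon + \frac{1}{4}(1-e^\epsilon)^2$.

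However, part (2) has a genuine gap. You write that ``by the monotonicity argument sketched in Fig.~\ref{fig3}\ldots the global minimum $m=\min_k m^{(k)}$ is attained at the extreme priori point $k=1$ when $\epsilon\ge 1$.'' The monotonicity argument does not deliver that conclusion. What it delivers (and what the paper proves via the two-case analysis around the column index $j_0$) is only that the minimizer is one of \emph{two} candidates: the extreme point or the middle point. Deciding between them is a separate step, and it is exactly where the hypothesis $\epsilon\ge 1$ enters. The paper must derive the middle-point quadratics $z_2$ (even $N$) and $z_3$ (odd $N$), producing the formula $m_2$, and then compare the right roots $i_{1r}$ and $i_{3r}$: bounding $i_{1r} < \frac{Ne^{\epsilon/2}-N}{e^\epsilon-1}$ and $i_{3r} > \frac{e^{\epsilon/2}(N-1)-(N-e^\epsilon+1)}{e^\epsilon-1}$, the difference of these bounds is positive precisely when $e^\epsilon - e^{\epsilon/2} - 1 > 0$, i.e.\ $\epsilon > 2\ln\bigl(\frac{\sqrt{5}+1}{2}\bigr)\approx 0.9624$, which is what makes $m_1\le m_2$ for $\epsilon\ge 1$. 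Your proposal never derives $m_2$ in part (2) and never performs this comparison, so it contains no mechanism by which the threshold $\epsilon\ge 1$ could arise; the claim $m=m_1$ is assumed rather than proven. Note also that the monotonicity structure itself (the $j_0$-split showing the numerator of $\partial Q/\partial s_i$ decreases down columns in the red region and increases in the green region) is deferred as ``promised,'' so in a complete write-up that lemma would need to be proved as well, but the missing $m_1$-versus-$m_2$ root comparison is the essential hole.
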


\begin{proof}
We begin with the illustration by Fig.~\ref{fig3}.
For each adjacent priori number $k$, the distances for all candidates are sorted in an ascending order.
For any two adjacent rows with priori numbers $k$ and $k'=k-1$, respectively,
we  compare their sorted distance arrays (rows), $[\lambda]$ and $[\lambda']$. 
Then there exists a single $j_0$, such that
$\lambda_1'=\lambda_1,\ \lambda_2'=\lambda_2,\ \cdots,\ \lambda_{j_0}'=\lambda_{j_0}$, while for the remaining $N-j_0$ items, $\lambda_{j_0+1}'=\lambda_{j_0+1}+1,\ \lambda_{j_0+2}'=\lambda_{j_0+2}+1,\ \cdots,\ \lambda_N'=\lambda_N+1$. Such a number $j_0$ is indicated by the column number involving a gray ellipse area between the two rows.

Now we consider to increase the probability weight $s_i$ in order. 
We first consider the case $i\leq j_0$. When considering the weight increase on $s_i$, we have fixed  $s_1=\cdots=s_{i-1}=e^{\epsilon}$ for both rows, and for the numerator part of \eqref{Qloss_partial} about $\frac{\partial Q}{\partial s_i}$,

\[
\begin{split}
\sum_j (\lambda_i' - \lambda_j') s_j
 =& \sum_{j=1}^{j_0} (\lambda_i' - \lambda_j') s_j + \sum_{j=j_0+1}^N (\lambda_i' - \lambda_j') s_j \\
=& \sum_{j=1}^{j_0} (\lambda_i - \lambda_j) s_j + \sum_{j=j_0+1}^N (\lambda_i - (\lambda_j + 1)) s_j \\
=& \sum_j (\lambda_i - \lambda_j) s_j - \sum_{j_0+1}^N s_j 
\leq \sum_j (\lambda_i - \lambda_j) s_j,
\end{split}
\]
where the equality on the last line holds only when $j_0=N$. In fact, $j_0\leq N-1$ holds for any two adjacent row.
This shows the monotonicity relationship within the same column in the lower triangular red region, specifically indicating that the numerator factor of the partial derivative of utility loss for the item below is greater than the corresponding factor for the item above when considering increasing the probability weight.

Next, we consider the case \(i > j_0\), then \(\lambda_i' = \lambda_i + 1\). When \(j \leq j_0\), \(\lambda_j' = \lambda_j\); and when \(j > j_0\), \(\lambda_j' = \lambda_j + 1\). Then, for $i > j_0$, we have
\[
\begin{split}
\sum_j (\lambda_i' - \lambda_j') s_j =& \sum_{j=1}^{j_0} (\lambda_i' - \lambda_j') s_j + \sum_{j=j_0+1}^N (\lambda_i' - \lambda_j') s_j \\
=& \sum_{j=1}^{j_0} (\lambda_i + 1 - \lambda_j) s_j + \sum_{j=j_0+1}^N (\lambda_i + 1 - (\lambda_j + 1)) s_j \\
=& \sum_{j=1}^{j_0} (\lambda_i - \lambda_j) s_j + \sum_{j=j_0+1}^N (\lambda_i - \lambda_j) s_j + \sum_{j=1}^{j_0} s_j \\
=& \sum_j (\lambda_i - \lambda_j) s_j + \sum_{j=1}^{j_0} s_j 
> \sum_j (\lambda_i - \lambda_j) s_j.
\end{split}
\]
This demonstrates the monotonicity relationship within the same column in the upper triangular green region, specifically indicating that the numerator factor of the partial derivative of utility loss for the item below is less than the corresponding factor for the item above.

Based on Fig.~\ref{fig3}, if the threshold value \(i\) for changing the sign of the partial derivative appears before (no more than) \(j_0\) (in the red region), the lower row will always reach the threshold before the upper one for any two adjacent rows. In this case, starting from the midpoint \(N/2\) as the prior, the bottom-most row achieves the smallest \(m\). On the other hand, if the threshold appears after (larger than) \(j_0\) (in the green region), the upper row \(k\) will reach the threshold before the lower \(k'\). In this scenario, starting from the extreme points $1$ or \(N\) as the prior, the top-most row achieves the smallest \(m\). Fig.~\ref{fig3} depicts the situation where \(N\) is even; the case for odd \(N\) is similar and will not be elaborated further.

Afterwards, it can be seen that the minimum $m$ is obtained when the true item is at the extreme point or the middle point. If it is at the extreme point, we have
\begin{equation}
 \begin{split}
  &\sum_{j=1}^{i-1}(\lambda_i-\lambda_j)e^\epsilon + \sum_{j=i+1}^N(\lambda_i-\lambda_j)1 \\
  =& \frac{(1+i-1)(i-1)}{2} e^\epsilon - \frac{(1+N-i)(N-i)}{2} \\
  =& \frac{1}{2}(e^\epsilon - 1)i^2 + \left(N - \frac{1}{2}e^\epsilon + \frac{1}{2}\right)i - \frac{1}{2}N^2 - \frac{1}{2}N,
 \end{split}
\end{equation}
which is a quadratic function in $i$, where $\Delta = N^2e^\epsilon + \frac{1}{4}(1 - e^\epsilon)^2 > 0$. Since the integer $i$ is positive, we get 
\[
m_1 = i = \left\lfloor \frac{\sqrt{N^2e^\epsilon + \frac{1}{4}(1 - e^\epsilon)^2} - (N - \frac{e^\epsilon}{2} + \frac{1}{2})}{e^\epsilon - 1} \right\rfloor.
\]

Since the arrangement of the median depends on $N$, if the minimum $m$ is obtained at the middle point, it is necessary to consider the two different cases when $N$ is odd or even. In the median, there are many paired values, making the paired $i$ always odd. Here, only the case where $i$ is odd needs to be considered.

For even $N$:
\begin{equation}
 \begin{split}
   &\sum_{j=1}^{i-1}(\lambda_i-\lambda_j)e^\epsilon + \sum_{j=i+1}^N(\lambda_i-\lambda_j)\times 1 \\
   =& \frac{(e^\epsilon - 1)}{4}i^2 + \frac{N - (e^\epsilon - 1)}{2}i + \frac{(e^\epsilon - 1)-N^2-2N}{4}, 
 \end{split}
\end{equation}
where $\Delta = \frac{e^\epsilon}{4}N^2$. We have $i = \left\lfloor \frac{N}{e^{\frac{\epsilon}{2}} + 1} + 1 \right\rfloor$.

For odd $N$:
\begin{equation}
 \begin{split}
 &\sum_{j=1}^{i-1}(\lambda_i-\lambda_j)e^\epsilon + \sum_{j=i+1}^N(\lambda_i-\lambda_j)\times 1 \\
 =& \frac{(e^\epsilon - 1)}{4}i^2 + \frac{N - (e^\epsilon - 1)}{2}i + \frac{(e^\epsilon - 1)-2N-N^2+1}{4},
 \end{split}
\end{equation}
where $\Delta = \frac{e^\epsilon}{4}(N^2 - 1) + \frac{1}{4}$. We have $i = \left\lfloor \frac{\sqrt{e^\epsilon(N^2 - 1) + 1} - N}{e^\epsilon - 1} + 1 \right\rfloor$. If $i$ is odd, then $m_2 = i$; if $i$ is even, then $m_2 = i + 1$.

So we just need to compare the $m$ values obtained at the extreme point and the middle point, and take the smaller one as the final $m = \min\{m_1, m_2\}$.

For ease of discussion, let 
\[z = \sum_{j=1}^{i-1} (\lambda_i - \lambda_j) e^\epsilon + \sum_{j=i+1}^N (\lambda_i - \lambda_j) 1.\]
If the minimum $m$ is obtained at the extreme point, we have 
\[z_1(i) = \frac{(e^\epsilon - 1)}{2} i^2 + \left( N - \frac{e^\epsilon}{2} + \frac{1}{2} \right) i - \frac{N^2}{2} - \frac{N}{2}.\] 
If the minimum $m$ is obtained at the middle point, then for even $N$, we have \[z_2(i) = \frac{(e^\epsilon - 1)}{4} i^2 + \frac{N - (e^\epsilon - 1)}{2} i + \frac{(e^\epsilon - 1)}{4} - \frac{N}{2} - \frac{N^2}{4},\] 
and for odd $N$, we have 
\[z_3(i) = \frac{(e^\epsilon - 1)}{4} i^2 + \frac{N - (e^\epsilon - 1)}{2} i + \frac{(e^\epsilon - 1)}{4} - \frac{N}{2} - \frac{N^2}{4} + \frac{1}{4}.\]
It can be observed that $z_3$ can be obtained by shifting $z_2$ upwards by $\frac{1}{4}$ units. Therefore, the left root of $z_3$ is greater than the left root of $z_2$, and the right root of $z_3$ is smaller than the right root of $z_2$. The left root of $z_3$ is $i_{3l} = 1 - \frac{N + \sqrt{e^\epsilon (N^2 - 1) + 1}}{e^\epsilon - 1} < 1$, thus the left root of $z_2$ is also less than $1$.

Considering the selections of $m_1$ and $m_2$, we have
\[
\lim_{N\rightarrow\infty}\frac{m_1}{N}=\lim_{N\rightarrow\infty}\frac{m_2}{N}=\frac{1}{e^{\frac{\epsilon}{2}} +1}.
\]
This gives the claim 1).

To analyze which of $m_1$ or $m_2$ is larger, it is necessary to compare the right root of $z_1$, $i_{1r}$, with the right root of $z_3$ when $N$ is odd, $i_{3r}$.
\begin{equation}
 \begin{split}
 i_{1r} &= \frac{\sqrt{N^2 e^\epsilon + \frac{1}{4} (1 - e^\epsilon)^2} - (N - \frac{e^\epsilon}{2} + \frac{1}{2})}{e^\epsilon - 1} \\
 &< \frac{N e^{\frac{\epsilon}{2}} + \frac{1 - e^\epsilon}{2} - (N - \frac{e^\epsilon}{2} + \frac{1}{2})}{e^\epsilon - 1} = \frac{N e^{\frac{\epsilon}{2}} - N}{e^\epsilon - 1}, \\
 i_{3r} &= \frac{\sqrt{e^\epsilon (N^2 - 1) + 1} - (N - e^\epsilon + 1)}{e^\epsilon - 1} \\
 &> \frac{e^{\frac{\epsilon}{2}} (N - 1) - (N - e^\epsilon + 1)}{e^\epsilon - 1}.
 \end{split}
\end{equation}
Let $a = N e^{\frac{\epsilon}{2}} - N$, $b = e^{\frac{\epsilon}{2}} (N - 1) - (N - e^\epsilon + 1)$, then $b - a = e^\epsilon - e^{\frac{\epsilon}{2}} - 1 = (e^{\frac{\epsilon}{2}} - \frac{1}{2})^2 - \frac{5}{4}$. Setting $b - a > 0$, we solve for $\epsilon > 2 \ln \left( \frac{\sqrt{5} + 1}{2} \right) \approx 0.9624$. Therefore, when $\epsilon \geq 1$, we always have $i_{3r} > i_{1r}$, and $m_1 = \lfloor i_{1r} \rfloor$, $m_2 = \lfloor i_{3r} \rfloor$, hence $m_1 \leq m_2$. This shows that when $\epsilon \geq 1$, the minimum $m$ is obtained at the extreme point. This gives the claim 2).

Given any released candidate $y\in\mathcal{Y}$, for any possible priori items $x_1,\, x_2\in\mathcal{X}$, we have
\[\frac{\Pr[y|x_1]}{\Pr[y|x_2]}\leq \frac{e^\epsilon}{ me^\epsilon +N-m}
\bigg/
\frac{1}{ me^\epsilon +N-m}=e^\epsilon.
\]
This gives the claim 3).
 \end{proof}

Afterwards, for the quality loss of BRR in  equidistant point domain, we confirm theoretically that BRR outperforms GRR and obtain the exact upper and lower asymptotic bounds on the local quality loss ratio of BRR to GRR at any given priori number $k$ as well as the global quality loss ratio in Theorem~\ref{thm:BRR_qloss}.

\begin{thm}\label{thm:BRR_qloss}
For any priori point $k$, the local expected error of BRR is no more than that of GRR; the asymptotic upper and lower bounds for the local expected-error ratio of BRR to GRR are 
\begin{align*}
   \mathop{\lim}\limits_{N \to \infty}\sup_{k}\frac{Q^{(k)}(BRR,N)}{Q^{(k)}(GRR,N)}&=
   \frac{2}{e^{\frac{\epsilon}{2}} +1},\\
\mathop{\lim}\limits_{N \to \infty}\inf_{k}\frac{Q^{(k)}(BRR,N)}{Q^{(k)}(GRR,N)}&=e^{\frac{\epsilon}{2}}-\frac{e^{\frac{\epsilon}{2}} -1}{e^{\frac{\epsilon}{2}} +1}\left(\sqrt{e^{\epsilon} +2e^{\frac{\epsilon}{2}} +2} +1\right), 
\end{align*}
respectively, and their gap is 
\[\text{\rm sup-inf}=\frac{e^{\frac{\epsilon}{2}} -1}{e^{\frac{\epsilon}{2}} +1}\left(\sqrt{e^{\epsilon} +2e^{\frac{\epsilon}{2}} +2} +e^{\frac{\epsilon}{2}} +1\right)^{-1}<\frac12\left(e^{\frac{\epsilon}{2}} +1\right)^{-1}.\]
\end{thm}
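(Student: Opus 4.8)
The plan is to reduce every $Q^{(k)}$ to two summary statistics of the sorted distance row at the priori point $k$: the total $S=\sum_i\lambda_i^{(k)}$ and the head sum $S_m=\sum_{i=1}^m\lambda_i^{(k)}$. Since $\lambda_1^{(k)}=0$, the GRR error is $Q^{(k)}(GRR)=S/(e^\epsilon+N-1)$, while by \eqref{def:BRR} the BRR error is $Q^{(k)}(BRR)=\big[(e^\epsilon-1)S_m+S\big]/(me^\epsilon+N-m)$. The non-asymptotic inequality $Q^{(k)}(BRR)\le Q^{(k)}(GRR)$ I would obtain directly from the derivative analysis behind Algorithm~\ref{alg1}: the sign of $\partial Q^{(k)}/\partial s_i$ is independent of $s_i$, so raising $s_i$ from $1$ to $e^\epsilon$ strictly lowers $Q^{(k)}$ whenever that derivative is negative, which holds for every $i\le m^{(k)}$. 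Because the global threshold satisfies $m=\min_k m^{(k)}\le m^{(k)}$, each of the first $m$ raises only decreases the error, so BRR never exceeds GRR at any $k$.

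For the asymptotics I set $t=e^{\epsilon/2}$ and $\alpha=1/(t+1)$, use the limit $m/N\to\alpha$ from Theorem~\ref{thm:BRR}, and write $k=\beta N$ with $\beta\in[0,\tfrac12]$ by symmetry. Counting how often each distance value occurs in the row for $k$ gives $S=k(k-1)+\tfrac12 N(N-2k+1)=f(\beta)N^2+O(N)$ with $f(\beta)=\beta^2-\beta+\tfrac12$, and the denominator $me^\epsilon+N-m=N t+O(1)$ since $\alpha(e^\epsilon-1)+1=t$. The head sum $S_m$ splits at $\beta=\alpha/2$, according to whether the $m$ closest points all lie inside the doubly-covered block of positions $1,\dots,2k-1$: for $\beta\ge\alpha/2$ one gets $S_m=\tfrac14\alpha^2N^2+O(N)$, and for $\beta\le\alpha/2$ one gets $S_m=g(\beta)N^2+O(N)$ with $g(\beta)=\beta^2-\alpha\beta+\tfrac12\alpha^2$. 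Combining these yields the clean limiting ratio
\[
R(\beta)=\lim_{N\to\infty}\frac{Q^{(k)}(BRR)}{Q^{(k)}(GRR)}=\frac1t\Big(1+(t^2-1)\,\frac{S_m/N^2}{f(\beta)}\Big).
\]

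Next I would extremize $R$ over $\beta\in[0,\tfrac12]$. On the branch $\beta\ge\alpha/2$ the numerator $S_m/N^2$ is the constant $\tfrac14\alpha^2$ while $f$ is decreasing, so $R$ increases and attains its branch-maximum $2/(t+1)$ at the midpoint $\beta=\tfrac12$; a direct check gives the same value $2/(t+1)$ at the extreme $\beta=0$, so the supremum is $2/(e^{\epsilon/2}+1)$, realized at both the extreme and middle priori points. The infimum lives on the branch $\beta\le\alpha/2$, where minimizing $R$ amounts to minimizing $h(\beta)=g(\beta)/f(\beta)$; setting $g'f=gf'$ collapses, after the cancellations that make the cubic terms vanish, to the quadratic $\beta^2-(1+\alpha)\beta+\tfrac12\alpha=0$, whose admissible root is $\beta^\ast=\tfrac12\big((1+\alpha)-\sqrt{1+\alpha^2}\big)$. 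One checks $0\le\beta^\ast\le\alpha/2$ (both inequalities reduce to $1\le\sqrt{1+\alpha^2}\le 1+\alpha$) and that it is the unique interior critical point, hence the minimizer.

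The main obstacle is the final substitution: plugging $\beta^\ast$ into $R$ and simplifying $g(\beta^\ast)/f(\beta^\ast)$ back into the stated closed form. Using $(t+1)^2=1/\alpha^2$ one rewrites $\sqrt{1+\alpha^2}/\alpha=\sqrt{e^\epsilon+2e^{\epsilon/2}+2}$, and after reducing the rational expression the infimum becomes $e^{\epsilon/2}-\frac{e^{\epsilon/2}-1}{e^{\epsilon/2}+1}\big(\sqrt{e^\epsilon+2e^{\epsilon/2}+2}+1\big)$, as claimed. Subtracting this from the supremum $2/(e^{\epsilon/2}+1)$ and rationalizing the surd difference via $\sqrt{(t+1)^2+1}-(t+1)=\big(\sqrt{(t+1)^2+1}+t+1\big)^{-1}$ gives the gap $\frac{e^{\epsilon/2}-1}{e^{\epsilon/2}+1}\big(\sqrt{e^\epsilon+2e^{\epsilon/2}+2}+e^{\epsilon/2}+1\big)^{-1}$; the bound by $\tfrac12(e^{\epsilon/2}+1)^{-1}$ then follows from $\sqrt{e^\epsilon+2e^{\epsilon/2}+2}>e^{\epsilon/2}+1$ (so that the denominator exceeds $2(e^{\epsilon/2}+1)$) together with $\frac{e^{\epsilon/2}-1}{e^{\epsilon/2}+1}<1$. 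I expect this block of surd algebra, together with justifying that the $O(N)$ remainders do not disturb the limits uniformly in $k$ so that $\sup_k$ and $\inf_k$ converge to $\sup_\beta R$ and $\inf_\beta R$, to be the only delicate part.
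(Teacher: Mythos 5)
Your proposal is correct and follows essentially the same route as the paper's own proof: the same two-case split on whether the $m$ nearest candidates lie entirely within the doubly-covered block (the paper's cases $2n<m$ versus $m-1\le 2n$), the same normalization $m\sim \alpha N$, $k\sim\beta N$ yielding identical asymptotic expressions, and the same critical-point quadratic $\beta^2-(1+\alpha)\beta+\tfrac{\alpha}{2}=0$ (the paper's $d_1$) locating the infimum, with the supremum attained at both the extreme and middle priori points. Your $S$, $S_m$, $R(\beta)=\frac{1}{t}\bigl(1+(t^2-1)S_m/S\bigr)$ bookkeeping is only a notational repackaging of the paper's $(c,d)$ computation, and your explicit derivative argument for the non-asymptotic inequality $Q^{(k)}(BRR)\le Q^{(k)}(GRR)$ cleanly fills in a step the paper leaves implicit from Algorithm~\ref{alg1}.
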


\begin{proof}
 The detailed proof is presented in Appendix \ref{app:proof_part2}. 
Motivated by Fig.~\ref{fig3}, the sketch is as follows.
We only need to consider the case where $ N $ is a sufficiently large odd integer. Without loss of generality, let $ m $ also be an odd number satisfying $m \sim \frac{N}{e^{\epsilon/2} + 1}$.
The symbol $\sim$ means the equivalence between two formulas for sufficiently large $N$.

Given any priori position $ k \leq N $ with $ k \in \mathbb{Z}^{+} $, the distances from any of $N$ positions (to be possibly reported) to $ k $ can be arranged in ascending order as the sequence $[0, 1, 1, 2, 2, \ldots, n, n, n+1, n+2, \ldots, N - 1 - n]$, where $0< 2n \leq N - 1 $ and $k=n+1$ or $N-n$. 

We analyze the local expected errors
$Q^{(k)}$  under two protocols. We define the normalized parameters: $m = cN, \,n=dN , \; \text{with} \; c_0 = \frac{1}{e^{\epsilon/2} + 1},\;c < \frac{1}{2} \; \text{and}\; 0\le d < \frac{1}{2}$.

For GRR mechanism, the  $Q^{(k)}(GRR)$ is computed as, 
$$
Q^{(k)}(GRR) \sim \frac{N^2}{N} \left[\frac{(1 - d)^2}{2} + \frac{d^2}{2}\right] = \frac{1}{2}(1 - 2d + 2d^2)N.
$$

For BRR mechanism, we distinguish two cases:

\begin{enumerate}[left=0pt]
    \item When $ m - 1 \leq 2n \leq N - 1 $, the $Q^{(k)}(BRR)$ is expressed as,
    \begin{equation*}
    Q^{(k)}(BRR)/N 
=\frac{1}{4(1 + (e^{\epsilon} - 1)c)} \left[c^2(e^{\epsilon} - 1) + 2(1 - 2d + 2d^2)\right].
    \end{equation*}
    
    \item When $ 0\le 2n <m$, the $Q^{(k)}(BRR)$ becomes,
    \begin{equation*}
    Q^{(k)}(BRR)/N = 
 \frac{1}{2}\cdot\frac{(1-2d+2d^2)e^{\epsilon}+(c-1)(c+1-2d)(e^{\epsilon}-1)}{(e^{\epsilon}-1)c+1}.
    \end{equation*}
\end{enumerate}
Afterward, the discussion on limits and bounds with $N\rightarrow +\infty$ and $c\to c_0$, gives the claims as required.
\end{proof}

The above bounds for ratios on local expected errors also give the upper and lower bounds for the ratio $\frac{Q_g(BRR)}{Q_g(GRR)}$ of global expected error. The gap between the bounds exists obviously due to the pointwise estimate of the local ratios. In order to close this gap, the asymptotically exact ratio is presented as follows.

\begin{thm}\label{thm:global_ratio_qloss}
For sufficiently large count $N$  in the equidistant point domain, 
the asymptotically exact ratio of BRR to GRR for global expected errors is
\begin{equation*}
  \lim\limits_{N \to \infty} \frac{Q_g(BRR,N)}{Q_g(GRR,N)} =\frac{7e^{\frac{\epsilon }{2} } +  9}{4\left( e^{\frac{\epsilon }{2} } +  1 \right)^2}.   
\end{equation*}
\end{thm}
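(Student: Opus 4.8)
The plan is to reduce the global ratio to a ratio of two one-dimensional integrals by passing from the discrete average over priori points to a Riemann integral, and then to evaluate those integrals using the closed-form local expressions already established in Theorem~\ref{thm:BRR_qloss}. First I would recall that $Q_g = \frac1N\sum_k Q^{(k)}$, and that for a priori point $k$ the sorted distance profile depends only on $n = \min(k-1,N-k)$. Writing the normalized parameter $d = n/N \in [0,\tfrac12)$ and $c = m/N \to c_0 = \frac{1}{e^{\epsilon/2}+1}$, Theorem~\ref{thm:BRR_qloss} expresses $Q^{(k)}(GRR)/N$ and $Q^{(k)}(BRR)/N$ as explicit functions $g_{GRR}(d)$ and $g_{BRR}(d)$ of $d$ alone in the limit $c\to c_0$. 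Since $n$ ranges symmetrically over $0,1,\ldots,\lfloor (N-1)/2\rfloor$ as $k$ runs over $1,\ldots,N$, each value of $n$ occurring essentially twice, the average $\frac1N\sum_k g(d_k)$ is a Riemann sum for $2\int_0^{1/2} g(d)\,dd$. Hence
\[
\lim_{N\to\infty}\frac{Q_g(BRR,N)}{Q_g(GRR,N)} = \frac{\int_0^{1/2} g_{BRR}(d)\,dd}{\int_0^{1/2} g_{GRR}(d)\,dd},
\]
and the common $\Theta(N)$ prefactors cancel.

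Next I would evaluate the denominator: with $g_{GRR}(d) = \tfrac12(1-2d+2d^2)$ from Theorem~\ref{thm:BRR_qloss}, a direct integration gives $\int_0^{1/2} g_{GRR}(d)\,dd = \tfrac16$. For the numerator the key observation is that the two cases of $Q^{(k)}(BRR)$ in Theorem~\ref{thm:BRR_qloss} correspond exactly to $2n \ge m$ versus $2n < m$, i.e.\ to $d \ge c_0/2$ versus $d < c_0/2$. I would therefore split the integral at $d_* = c_0/2 = \frac{1}{2(e^{\epsilon/2}+1)}$, using the Case~2 expression on $[0,d_*]$ and the Case~1 expression on $[d_*,\tfrac12]$. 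Writing $E = e^{\epsilon/2}$ and exploiting the identities $1+(E^2-1)c_0 = E$ and $c_0^2(E^2-1) = \frac{E-1}{E+1}$, both integrands simplify to low-degree polynomials in $d$ with coefficients rational in $E$, so each piece integrates in closed form.

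Finally I would add the two pieces, divide by $\tfrac16$, and simplify to the stated value. The expected main obstacle is this last algebraic consolidation: the two sub-integrals carry denominators $(E+1)^2$ and $(E+1)^3$ from the substitution $d_* = \frac{1}{2(E+1)}$, and the cubic boundary terms must collapse to the single clean form $\frac{7E+9}{4(E+1)^2}$, which demands careful bookkeeping. A secondary technical point is justifying the Riemann-sum limit rigorously: one must verify that replacing $c$ by $c_0$, the floor operations defining $m$, and the $O(1)$ discrepancies in the multiplicity of each $n$ contribute only $o(N)$ to $Q_g$, so that the leading $\Theta(N)$ terms—whose ratio yields the limit—are unaffected. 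As a consistency check, at $E=2$ the split gives $\int_0^{1/2} g_{BRR}(d)\,dd = \tfrac{23}{216}$ and hence ratio $\tfrac{23}{36}$, matching $\frac{7\cdot 2+9}{4(2+1)^2}$.
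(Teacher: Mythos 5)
Your proposal is correct: I verified symbolically that with $E=e^{\epsilon/2}$, $u=E+1$, $d_*=\frac{1}{2u}$, your two pieces evaluate to $\int_0^{d_*}g^{(2)}=\frac{1}{4u}-\frac{2u-3}{8u^2}+\frac{u-1}{24u^3}$ and $\int_{d_*}^{1/2}g^{(1)}=\frac{(u-1)(7u-1)}{24u^3}$, whose sum is $\frac{7u+2}{24u^2}$, giving the ratio $6\cdot\frac{7u+2}{24u^2}=\frac{7E+9}{4(E+1)^2}$ as claimed; your spot check at $E=2$ is also right. Your route shares the paper's skeleton (symmetry reduction to $k\le (N+1)/2$, the same case split at $2n=m$, i.e.\ $d=c_0/2$, and the final substitution $c\to c_0=\frac{1}{e^{\epsilon/2}+1}$), but the execution is genuinely different: the paper computes the discrete sums $\sum_k Q^{(k)}(\mathrm{BRR})$ and $\sum_k Q^{(k)}(\mathrm{GRR})$ \emph{exactly}, carries all lower-order terms through to the rational function $f(c)$ in Eq.~\eqref{fun:N_Qg}, and only then passes to the limit, whereas you interchange limits first --- taking the pointwise asymptotic formulas of Theorem~\ref{thm:BRR_qloss} as given --- and replace the average over $k$ by the Riemann integral $2\int_0^{1/2}g(d)\,dd$. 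Your approach buys markedly cleaner algebra (two polynomial integrals with the factorization $7u^3-15u^2+9u-1=(u-1)^2(7u-1)$ doing the collapsing), and it makes transparent that the answer is just an average of the local ratios' profile; the paper's exact-summation approach buys rigor for free, since no sum-to-integral interchange needs justification and the $O(N)$ error terms are explicit. The one point you must actually discharge, which you correctly flag but do not prove, is that the substitutions $c\to c_0$, the floor in $m$, and the $O(1)$ multiplicity discrepancies perturb $Q_g$ by only $o(N)$; since $Q^{(k)}/N$ is uniformly bounded and the case formulas are polynomials in $d$ with coefficients continuous in $c$, a short uniform-convergence argument suffices, and with that sentence added your proof is complete.
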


\begin{proof}
 The detailed proof is presented in Appendix \ref{app:proof_part3}. 
The sketch is as follows.

Denote $n= dN$ as before.  Due to the symmetry, we can account for only  the case $k\le (N+1)/2$.
Since $N$ tends to infinity, we can assume, without loss of generality, that  both $N$  and $m = m(N)$ are odd.

We calculate the global expected error of BRR with the dynamic splitting number $m=cN$ as follows.
\begin{equation*}
         N\cdot Q_{g} (BRR)=2\left(\sum_{k=1}^{\frac{m+1}{2} } Q^{(k)} (BRR)+\sum_{k=\frac{m+3}{2} }^{\frac{N+1}{2} }  Q^{(k)} (BRR) \right)-Q^{(\frac{N+1}{2})}(BRR).
 \end{equation*}

Simplifying the above summation yields the limit of ratios as required when $N$ tends to infinity, by using $\lim_{N\rightarrow\infty}c(N)=c_0 = \frac{1}{e^{\epsilon/2} + 1}$.
\end{proof}

Now we are ready to close the four challenges presented in Section \ref{sect: prob_formu} as follows.

\begin{itemize}
\item In the local search phase, we increase the prob. weights in sequence of utility. When the weight of some point ascends, the (normalized) weighted average  error of reporting the other $N-1$ points is fixed. Then, the monotonicity of the utility function for varying the weight of only one point can easily deduce the bipartite distribution of weight values, only $1$ and $e^\epsilon$.

\item
In the global search phase, we set the minimal $m^{(i)}$ obtained in the local search phase as the uniform number $m$ of high probability weight $e^\epsilon$, which is certainly independent of the priori distribution. Indeed, BRR assgins the minimal $m^{(i)}$ among all priori points, $i=1,\ldots,N$, as the final $m$. This value corresponds to the priori point, starting with which the BRR first achieves its own smallest (local) expected error when simultaneously increasing the number of high probability $e^\epsilon$-weighted candidates for starting priori points. 

\item We present how to compute the uniform number $m$ using point-wise derivative formulas in the general case, as shown by Alg. \ref{alg2}. In the equidistant point domain, the explicit formulas are simply presented by Alg. \ref{alg:sequence}.

\item The impacts are amazing in theoretic and application aspects. Theoretically, the local search algorithm begins with the GRR, and the global setting $m=\min_i m^{(i)}$  ensures that the BRR outperforms  the GRR if $m>1$ and otherwise equals. In the equidistant point domain some significant theoretical results are presented in Theorems \ref{thm:BRR} to \ref{thm:global_ratio_qloss}  as demonstrated in Fig. \ref{fig:bQ}.
In applications to engineering including decision tree training, stochastic gradient descent,  location-based services, and  vector perturbation, a series of experimental results are shown in Section \ref{sect:exper}.

\end{itemize}

\section{Exploring the scenarios of BRR}
In this section, we explore the scenario of BRR mechanism for various applications.

\subsection{Adapting BRR in Continuous Domain}


We first consider the scalar case of continuous domain.
That is, the perturbation interval defined as the scalar range of query function is equally divided, which presents a discrete domain consisting of \( N \) discrete nodes,
\[
\mathcal{Y} = \{ y_1, y_2, \dots, y_N \}.
\]
Then, as shown in Algorithm~\ref{alg:continu_domain}, the actual query value \( x \) is firstly mapped to the closest discrete value \( y^\ast \in \mathcal{Y} \):
\begin{equation}\label{def:y_ast}
y^\ast = \arg\mathop{\min}_{y_i \in \mathcal{Y}} |x - y_i|.
\end{equation}

Afterwards, we make a bijection mapping the $N$ equidistant points to their labels $1,2,\ldots, N$ in order. The linear transformation helps to present the BRR mechanism on the domain of $\mathcal{Y}$. 
  The \( BRR \)  perturbs the approximate value \( y^\ast \) to generate a reporting value $y_j$. Since the \( BRR \) randomized response mechanism provides LDP guarantees, this method can flexibly improve the discretization accuracy by increasing the number of nodes while enforcing \( \epsilon \)-LDP.

\begin{algorithm}[tb]
  \renewcommand{\algorithmicrequire}{\textbf{Input:}}
  \renewcommand{\algorithmicensure}{\textbf{Output:}}
  \caption{BRR in one-dimensional interval range}
  \label{alg:continu_domain}
  \begin{algorithmic}[1]
\REQUIRE 
Privacy budget $\epsilon$, interval $[a,b]$, equidistant points count $N$, actual query value $x$

\STATE Generate set $\mathcal{Y}$ consisting of $y_i=a+\frac{j-1}{N-1}(b-a)$ for $j=1,\ldots,N$

\STATE Find the point $y^\ast$ closest to $x$ by \eqref{def:y_ast}


\STATE Confirm the number $m$ of high-weighted points by Algrithm  \ref{alg:sequence}
\STATE Construct the set $Y_m$ by the $m$ closest  points to $y^\ast$ in $\mathcal{Y}$
\STATE Generate a reported value $y_j$ in $\mathcal{Y}$ with the distribution by \eqref{def:BRR}
\ENSURE $m$ and reported value $y_j$
\end{algorithmic}
\end{algorithm}

The input of BRR already contains a certain level of noise during the mapping to discrete points. Specifically, since the original query value \( x \) is mapped to the closest point \( y^\ast \) in the discrete set \( \mathcal{Y} \), this mapping process introduces a deviation from the original \( x \), causing data distortion. Therefore, to ensure the approximation accuracy, we can reduce the mapping error by increasing the number of discrete points \( N \). As \( N \) increases, the smaller intervals \( L \) among the  discrete points become finer, making the mapped value \( y^\ast \) closer to the original data \( x \), thus maintaining higher data accuracy. 

However, we should note that increasing \( N \), while reducing mapping errors, also increases the number of discrete points that need to be handled in the subsequent \( BRR\) or \( GRR\) mechanism, thus expanding the spatial domain. This can lead to a sparser probability distribution in the release, potentially affecting the accuracy of the perturbation process. Therefore, although increasing \( N \) can improve the approximation of the mapping, it may destroy data utility. This violation really takes place for GRR but not for BRR due to its dynamic number selection of high probability candidates. This will be demonstrated in Section  \ref{sect:exper}.

For the vector case of query range, 
we can also develop the discrete domain in each dimension of the query range. More discussions will be demonstrated in Appendix~\ref{sect:vector-lap}.


\begin{figure*}[htbp]
\centering
\subfigure{
 \begin{minipage}{0.18\linewidth}
 \includegraphics[width=1\textwidth,height=3.5cm]{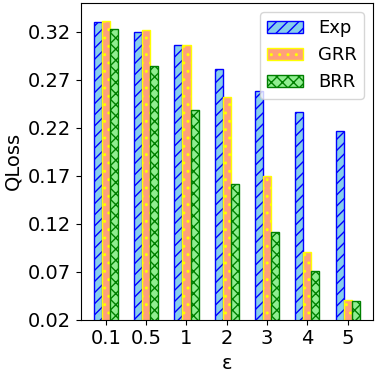} 
 \centerline{\small(a)\quad$N=20$ }
 \end{minipage}
}
\subfigure{
 \begin{minipage}{0.18\linewidth}
 \includegraphics[width=1\textwidth,height=3.5cm]{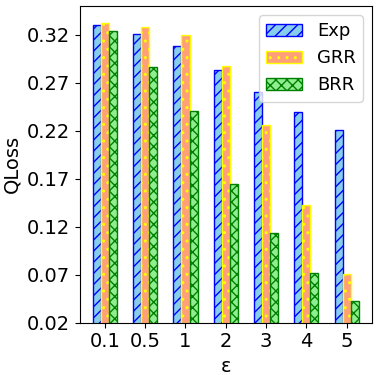} 
 \centerline{ \small(b)\quad$N=40$}
 \end{minipage}
}
\subfigure{
 \begin{minipage}{0.18\linewidth}
 \includegraphics[width=1\textwidth,height=3.5cm]{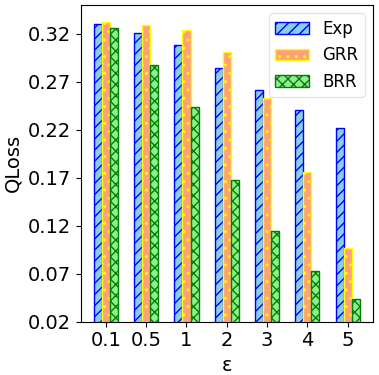} 
 \centerline{ \small(c)\quad$N=60$}
 \end{minipage}
}
\subfigure{
 \begin{minipage}{0.18\linewidth}
 \includegraphics[width=1\textwidth,height=3.5cm]{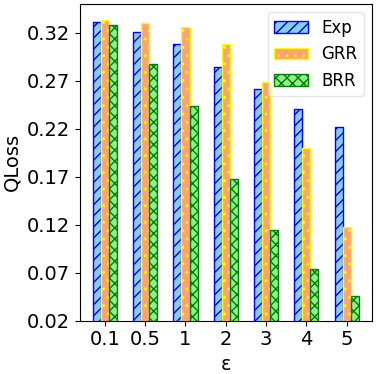} 
 \centerline{ \small(d)\quad$N=80$}
 \end{minipage}
}
\subfigure{
 \begin{minipage}{0.18\linewidth}
 \includegraphics[width=1\textwidth,height=3.5cm]{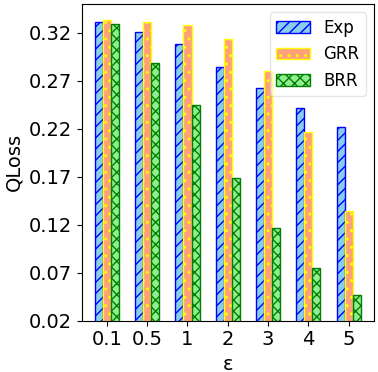} 
 \centerline{ \small(e)\quad$N=100$}
 \end{minipage}
}
\caption{Quality loss comparison of BRR, GRR, and Exponential mechanisms using Euclidean distance with varying privacy budgets and count of candidates.}
\label{fig:qloss_among}
\end{figure*}

\subsection{BRR in General Discrete Domain}
In this section, we consider to explore the scenario of BRR mechanism from two aspects: the distribution of points in the reporting domain, and the metric of data utility.


For stochastic distribution in the multi-dimensional discrete domain, or general metrics of data utility, one can perform the search in terms of Algorithm~\ref{alg2} with necessary adjustments on the ranking of $\lambda_i$ and the sign condition of partial derivatives of utility. Specifically, for any apriori point $x_k$ in the domain, the point $x_k$ itself has the highest utility $\lambda_1^{(k)}$ and the non-increasing sequence $\lambda_1^{(k)} \ge \lambda_2^{(k)} \ge \cdots \ge \lambda_N^{(k)}$, which presents the input of utility matrix $\{\lambda_j^{(k)}\}_{N\times N}$. Following this, we observe the partial derivative of utility and execute (Line 6):

\textbf{if} $\frac{\partial Q^{(k)}}{\partial s_i^{(k)}}>0$, \textbf{then}
        $s_i^{(k)} = e^\epsilon$, $m^{(k)}=i$,
      \textbf{otherwise}, break.

Due to the intrinsic design of BRR, we can make the claim on the comparison to GRR as follows.

\begin{thm}\label{thm:BRR_general}
Given general discrete domain and utility function, for any priori point in the  domain,  BRR at least equals GRR on the local utility. In addition, BRR outperforms GRR on global utility if $m>1$.
\end{thm}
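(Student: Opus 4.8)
The plan is to exploit that the generalized local search begins at the GRR configuration and only ever raises a weight when doing so strictly increases the local utility, so that the uniform choice $m=\min_k m^{(k)}$ can only help each priori point. First I would recall the setup adapted to the general utility case: for each priori $k$ the row $\lambda_1^{(k)}\ge\cdots\ge\lambda_N^{(k)}$ is sorted in non-increasing order, we maximize $Q^{(k)}=\sum_i\lambda_i^{(k)}w_i$, and the search of Algorithm~\ref{alg2} (with the sign test of Section~4.2, namely raising $s_i^{(k)}$ to $e^\epsilon$ exactly when $\partial Q^{(k)}/\partial s_i^{(k)}>0$) is initialized at $s_1^{(k)}=e^\epsilon$, $s_2^{(k)}=\cdots=s_N^{(k)}=1$, i.e. precisely the GRR weights. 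Hence at $m=1$ we have $Q^{(k)}(\mathrm{BRR})=Q^{(k)}(\mathrm{GRR})$, and the whole argument reduces to tracking how $Q^{(k)}$ changes as the number of $e^\epsilon$-weighted points grows.

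The central step is a monotonicity claim: for each fixed $k$, the value $Q^{(k)}(j)$ obtained by assigning weight $e^\epsilon$ to the top-$j$ utility points and $1$ to the rest is strictly increasing in $j$ for $1\le j\le m^{(k)}$. I would prove this by invoking the key structural fact, already used in the equidistant analysis, that the numerator of $\partial Q^{(k)}/\partial s_i^{(k)}=\frac{\sum_j(\lambda_i^{(k)}-\lambda_j^{(k)})s_j^{(k)}}{(\sum_j s_j^{(k)})^2}$ does not involve $s_i^{(k)}$ (the $j=i$ term vanishes), so the sign of this derivative is independent of the current value of $s_i^{(k)}$. Therefore, when the search reaches step $i\le m^{(k)}$ and raises $s_i^{(k)}$ continuously from $1$ to $e^\epsilon$ with $s_1^{(k)}=\cdots=s_{i-1}^{(k)}=e^\epsilon$ and $s_{i+1}^{(k)}=\cdots=s_N^{(k)}=1$ held fixed, the derivative stays strictly positive throughout that interval, so $Q^{(k)}$ strictly increases. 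Chaining these increments yields $Q^{(k)}(1)<Q^{(k)}(2)<\cdots<Q^{(k)}(m^{(k)})$.

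Both assertions then follow from $m=\min_k m^{(k)}$. For the local part, $m\le m^{(k)}$ for every $k$, so the monotonicity chain gives $Q^{(k)}(\mathrm{BRR})=Q^{(k)}(m)\ge Q^{(k)}(1)=Q^{(k)}(\mathrm{GRR})$, i.e. local utility at least that of GRR at every priori point. For the global part, if $m>1$ then $m\ge2$, and since $m^{(k)}\ge m\ge2$ for all $k$, the chain gives $Q^{(k)}(\mathrm{BRR})=Q^{(k)}(m)\ge Q^{(k)}(2)>Q^{(k)}(1)=Q^{(k)}(\mathrm{GRR})$ \emph{strictly} for every $k$. Averaging, $Q_g(\mathrm{BRR})=\frac1N\sum_k Q^{(k)}(\mathrm{BRR})>\frac1N\sum_k Q^{(k)}(\mathrm{GRR})=Q_g(\mathrm{GRR})$, which is the desired strict global improvement.

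I expect the only delicate point to be this strict-monotonicity step: the claim that a single $e^\epsilon$-increment strictly improves $Q^{(k)}$ hinges entirely on the sign of $\partial Q^{(k)}/\partial s_i^{(k)}$ being constant in $s_i^{(k)}$, so that a positive sign at the tested discrete configuration guarantees positivity along the whole path from $1$ to $e^\epsilon$. Everything else — well-definedness of the top-$j$ sets under ties, and the averaging — is routine. Note that, unlike the equidistant case, I would not need the derivative to change sign only once as $i$ grows: $m^{(k)}$ is defined as the first break of the search, and the chain only uses steps up to $m^{(k)}$, so no global structural assumption on the utility matrix $\{\lambda_j^{(k)}\}$ is required.
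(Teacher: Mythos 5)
Your proof is correct and follows exactly the argument the paper itself relies on: the paper states Theorem~\ref{thm:BRR_general} without a written proof, appealing to the ``intrinsic design'' of BRR, namely that the local search is initialized at the GRR weights, that the sign of $\frac{\partial Q^{(k)}}{\partial s_i^{(k)}}$ is independent of $s_i^{(k)}$ (so each accepted step strictly improves $Q^{(k)}$), and that the global choice $m=\min_k m^{(k)}$ keeps every priori point inside its strictly monotone range. Your write-up is a faithful formalization of precisely this chain of reasoning, including the correct handling of the strict inequality needed for the $m>1$ case.
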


We want to mention that in the two-dimensional domain case,
the BRR mechanism seems to be a simplification of the L-SRR mechanism \cite{WHXQH2022}, where instead of having multiple steps in the staircase, there are only two steps.
That is not the case. In fact, as claimed above,  developing BRR is  motivated by promoting GRR. The innovation  starts exactly from the local optimal search as shown in Algorithm~\ref{alg1}. We prove strictly that the two steps of probability weights give the optimal utility in the local search phase. The global search phase assigns the minimum $m^{(k)}$ as the uniform number $m$ independent of prior distribution. 
Our proposed BRR is fit for LDP applications in any discrete domain as optimal promotion of GRR  while L-SRR mechanism can only be used in two-dimensional discrete domains due to its involved hierarchical encoding scheme.

\section{Experimental Evaluations}\label{sect:exper}


In this section, the experimental evaluations consist of three parts: 
\begin{itemize}
\item
\textbf{Evaluations in discrete domains} (Sections 5.1 to 5.3).
We first consider the basic scenario of equidistant point domains as $\{1,2,\cdots,N\}$ for the alternative to GRR~\cite{KOV2016} and exponential mechanism ~\cite{MT2007} by BRR, assigning Euclidean distance and Jaccard similarity as utility, respectively. Afterwards, we illustrate the effect of BRR in 2-dimensional discrete domains for LBSs by comparing with GRR and the state-of-the-art L-SRR \cite{WHXQH2022}.

\item \textbf{Applications in continuous domains for machine learning}  (Appendix~\ref{append:GBDT} to \ref{sect:vector-lap}).
We evaluate the effect of applying BRR to machine learning for centralized DP, replacing Laplace mechanism in private GBDT~\cite{LWWH2020}, SGD~\cite{ACGMMTZ2016}, and vector perturbation for DNN, cf.~\cite{WGMJ2020}.

\item \textbf{Runtime}  (Appendix~\ref{append:runtime}).
We evaluate the runtime applying BRR for the server with varying the number of users.
\end{itemize}

These evaluations would confirm the robustness of BRR in a series of settings, particularly across multiple branches in machine learning, and illustrate that the effects of BRR are extensive and overwhelming in various scenarios of tasks and applications.



\subsection{Quality loss comparisons to GRR and Exponential mechanisms}
As we know, the data utility is negatively related to the distance, like Euclidean and Jaccard distances.
Following Section 3, we define Euclidean distance as the data  utility (loss) $\lambda$ here. Then we conduct the experiments in the equidistant point domain $\mathcal{X}=\{1,2,\cdots,N\}$ with varying $N$.
The standard metrics for comparisons among BRR, GRR and Exponential mechanisms include $Q_g$ ratios and $QLoss$, where
\begin{equation}\label{QLoss}
 QLoss= \frac{Q_g}{N-1}= \frac{{\textstyle \sum_{k=1}^{N}}  {\textstyle \sum_{i=1}^{N}} \lambda _{i}^{(k)}w _{i}  }{N(N-1)},   
\end{equation}
with the diameter normalization of domain $\mathcal{X}$.

\begin{figure}[tb]
\centering
\subfigure{
 \begin{minipage}{0.45\linewidth}
 \includegraphics[width=1.00\textwidth,trim=0.4cm 0.5cm 0.35cm 0.5cm, clip]{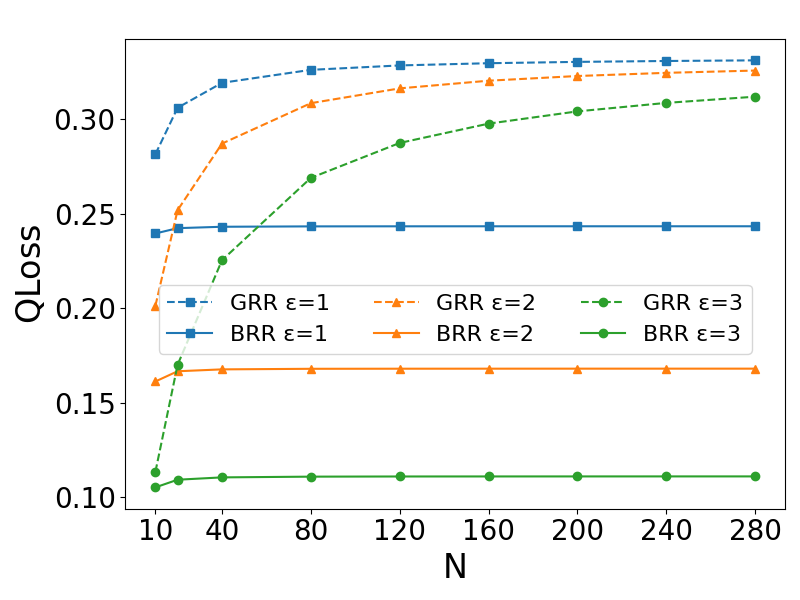} 
 \centerline{\fontsize{9}{10}\selectfont(a) Quality loss $(QLoss)$ 
 }
 \end{minipage}
}
\subfigure{
 \begin{minipage}{0.45\linewidth}
 \includegraphics[width=1.02\textwidth,trim=0.4cm 0.5cm 0.35cm 0.5cm, clip]{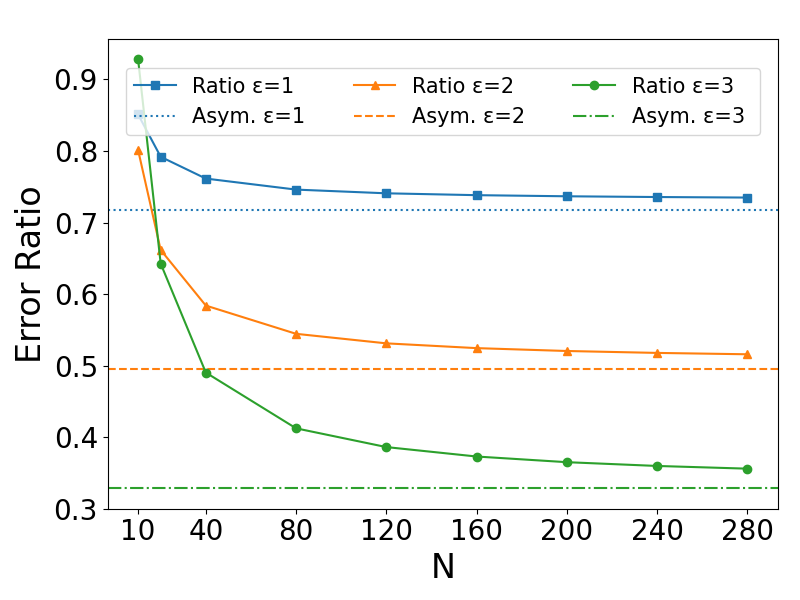} 
 \centerline{\fontsize{9}{10}(b)   $Q_g(BRR)$ / $Q_g(GRR)$}
 \end{minipage}
}
\caption{\fontsize{9}{10} Comparison of quality loss in the equidistant point domain: (a) Quality loss $(QLoss)$ 
of BRR and GRR mechanisms with varying total number $N$ of priori points.  (b) Global expected-error ratio of BRR to GRR as $N$ varies. 
The asymptotically exact ratios (``Asym.'') 
are plotted, cf. Theorem \ref{thm:global_ratio_qloss}.}
\label{fig:bQ}
\end{figure}

Fig.~\ref{fig:qloss_among} shows that, under the same privacy budget, the quality loss of BRR 
is consistently lower than that of the GRR and exponential mechanisms. 
This indicates that the BRR mechanism releases more effective candidates with higher probability given the privacy budget. 
We also observe, from Fig.~\ref{fig:bQ}(a), that as $N$ increases, the gap in utility loss under the GRR mechanism grows significantly, whereas the utility loss of the BRR and exponential mechanisms remains relatively stable across different values of $N$. The reason is that only the true value is assigned a high weight of $e^{\epsilon}$ in the GRR mechanism while the remaining $N-1$ values are assigned a weight of $1$, leading to increasing sparsity and dispersion of the probability distribution as $N$ grows. 
In contrast, the BRR mechanism assigns the high weight $e^{\epsilon}$ to $m$ values, and $m$ increases with a asymptotic proportion to $N$, thus maintaining a more stable probability of effective responses, as shown in Fig.~\ref{fig:bQ}(a). The exponential mechanism exhibits similar behavior, effectively mitigating the impact of distribution sparsity.

Furthermore, as demonstrated by Fig.~\ref{fig:bQ}(b), when $N$ tends to infinity, the asymptotically exact  quality-loss (and global expected-error) ratio of BRR to GRR verifies the assertions given by Theorem~\ref{thm:global_ratio_qloss} and decreases quickly as the privacy budget grows, e.g., approximately $0.5$ for $\epsilon=2$.
The superior performance of the BRR mechanism in maintaining data utility indicates its efficacy and potential for practical applications in privacy-preserving data releases.

\begin{figure*}[htbp]
\centering
\subfigure{
 \begin{minipage}{0.18\linewidth}
 \includegraphics[width=1\textwidth,height=3.5cm]{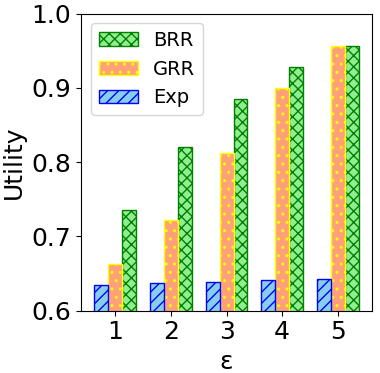} 
 \centerline{\small(a)\quad$N=20$ }
 \end{minipage}
}
\subfigure{
 \begin{minipage}{0.18\linewidth}
 \includegraphics[width=1\textwidth,height=3.5cm]{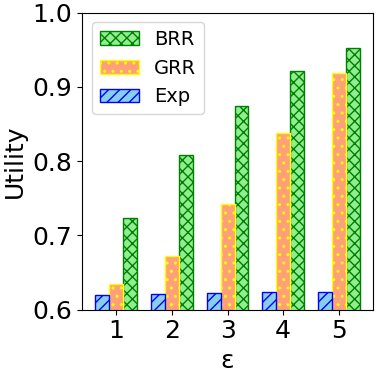} 
 \centerline{ \small(b)\quad$N=40$}
 \end{minipage}
}
\subfigure{
 \begin{minipage}{0.18\linewidth}
 \includegraphics[width=1\textwidth,height=3.5cm]{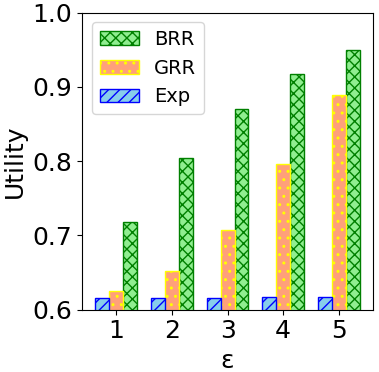} 
 \centerline{ \small(c)\quad$N=60$}
 \end{minipage}
}
\subfigure{
 \begin{minipage}{0.18\linewidth}
 \includegraphics[width=1\textwidth,height=3.5cm]{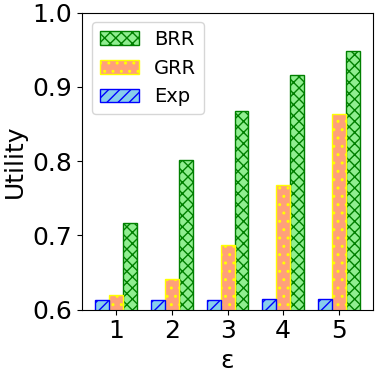} 
 \centerline{ \small(d)\quad$N=80$}
 \end{minipage}
}
\subfigure{
 \begin{minipage}{0.18\linewidth}
 \includegraphics[width=1\textwidth,height=3.5cm]{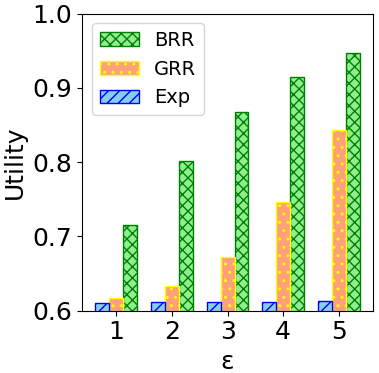} 
 \centerline{ \small(e)\quad$N=100$}
 \end{minipage}
}
\caption{Comparisons of BRR to GRR and Exponential mechanisms for Jaccard-similarity defined utility with varying privacy budgets and count of candidates.}
\label{fig4}
\end{figure*}

As shown in Fig.~\ref{fig:bQ}(b), Fig.~\ref{fig:qloss_among} demonstrates that the larger $\epsilon$ gives the smaller $QLoss$ (or $Q_g$) ratio of BRR to GRR that means greater improvements. However, this vigorous trend vanishes in the later tasks and applications. Particularly, for machine learning the obfuscation mechanisms are adopted as a part of the whole machine learning framework with centralized DP.
The larger $\epsilon$ gives the lower privacy level, the accuracy of iterative learning tasks would naturally approach that of non-private learning, and the smaller gaps on performances provide less chances to close. 

\subsection{BRR with Jaccard-Similarity Utility}\label{sect:jaccard}

Following the setting of Section 3,  we conduct the experiments in the equidistant point domain $\mathcal{X}=\{1,2,\cdots,N\}$ with varying $N$.
We denote data utility $\lambda$ by the pairwise generalized Jaccard similarity coefficient (Tinimoto coefficient), for any candidate $y$, given the true point $x$,

\begin{equation}\label{Jacc_sim}
\lambda(x, y) = \frac{xy}{x^2 + y^2 - xy}.
\end{equation}

The global utility $Q_g$ is defined by  \eqref{Q_g} for comparisons among BRR, GRR and Exponential mechanisms. In this scenario, the larger $Q_g$ gives the better performance.
We compare and analyze the performance of three different differential privacy protection mechanisms: BRR, GRR, and the Exponential mechanism, with privacy budgets and candidates number $N$.

From Fig.~\ref{fig4}, it can be observed that, for the same \(N\), the utility increases with growing privacy budgets across all mechanisms. This trend is inherent to the properties of differential privacy. 
When examined vertically, it is evident that the global expected error of the BRR mechanism remains relatively stable across different $N$ for given $\epsilon$ and is much lower than those of the GRR and exponential mechanisms. These findings indicate that the BRR mechanism enhances data utility largely. 
This substantiates our proposed assertions, cf. Theorem \ref{thm:BRR_general}.

\begin{figure}
\centering
\subfigure{
 \begin{minipage}{0.45\linewidth}
 \includegraphics[width=1\textwidth,height=3cm]{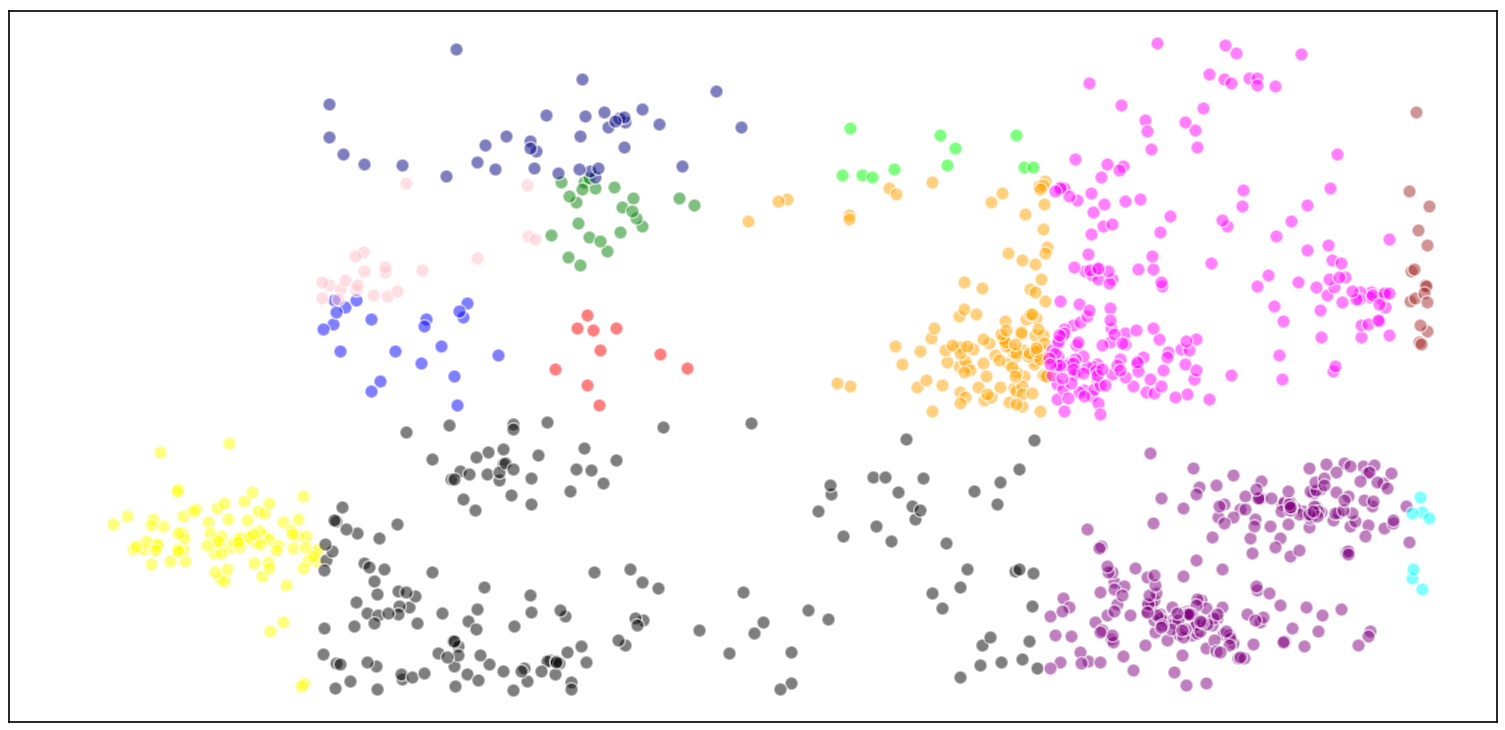}
 \centerline{\small(a)\quad Gowalla, 1000 locations}
 \label{h1}
 \end{minipage}
}
\subfigure{
 \begin{minipage}{0.45\linewidth}
 \includegraphics[width=1\textwidth,height=3cm]{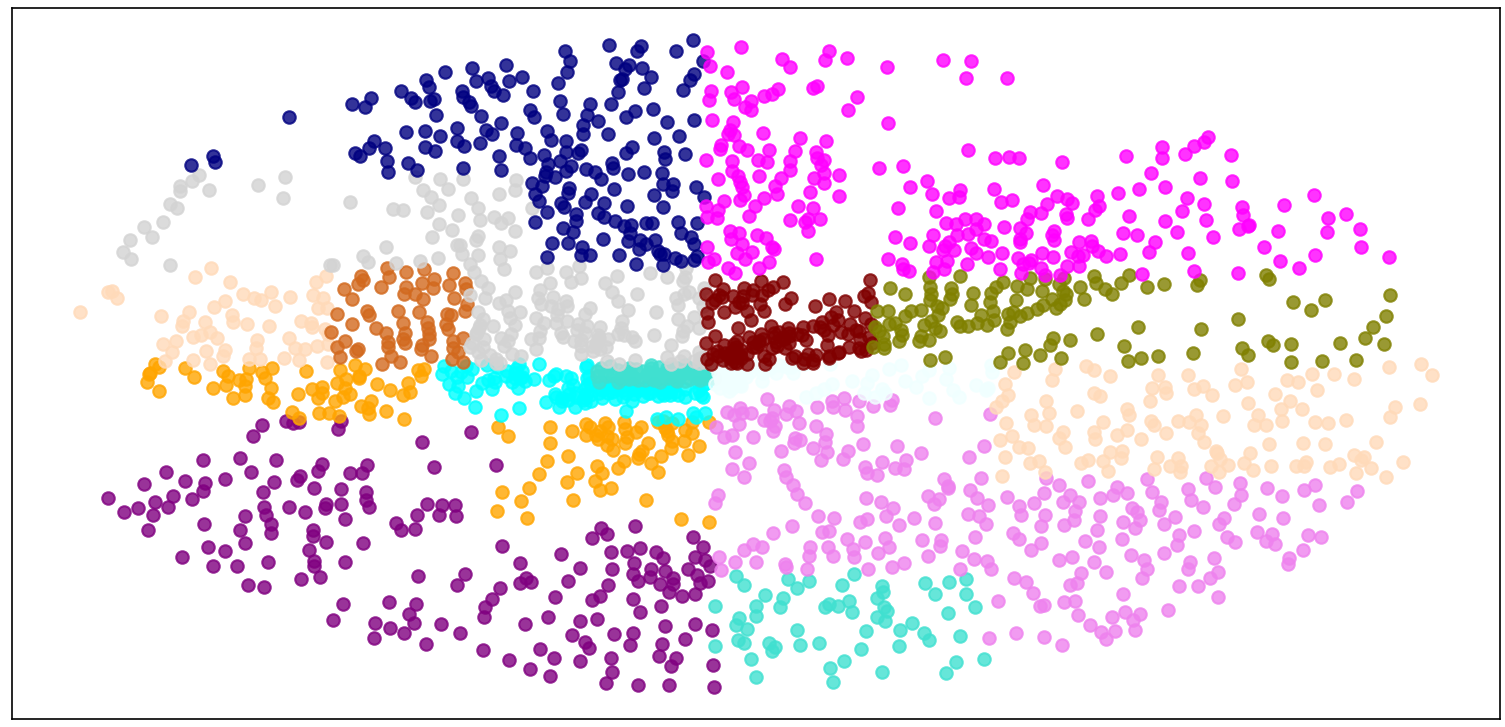} 
 \centerline{\small(b)\quad FourSquare, 3000 locations}
 \label{h2} 
 \end{minipage}
}
\caption{Two dataset maps are divided by the hierarchical
coding approach involved in L-SRR scheme \cite{WHXQH2022} into 13 and 14 regions, respectively, marked in colors. The
 smallest region contains 7 and 64 locations, respectively, while the largest region contains 235 and 278 locations, respectively.
 }
\label{distribution}
\end{figure}

\begin{figure}
\centering
\subfigure{
 \begin{minipage}{0.45\linewidth}
 \includegraphics[width=1\textwidth,height=3.8cm]{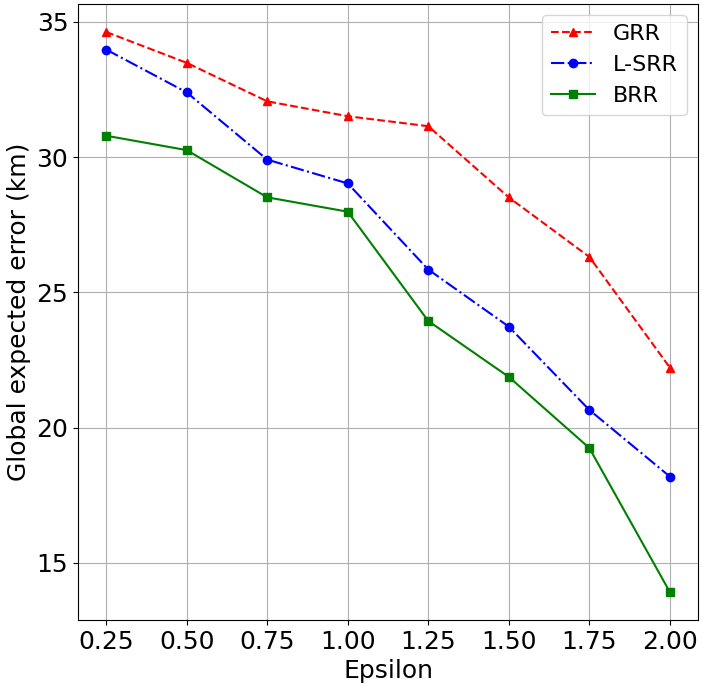}
 \centerline{\small(a)\quad Gowalla}
 \label{h5}
 \end{minipage}
}
\subfigure{
 \begin{minipage}{0.45\linewidth}
 \includegraphics[width=1\textwidth,height=3.8cm]{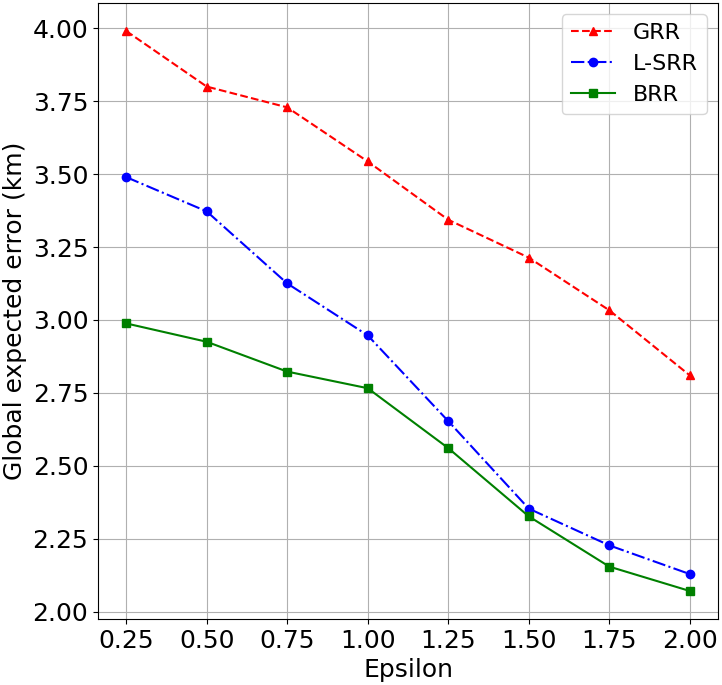} 
 \centerline{\small(b)\quad FourSquare}
 \label{h6} 
 \end{minipage}
}
\vspace{-10pt}
\caption{Global
expected errors of L-SRR \cite{WHXQH2022}, GRR \cite{KOV2016}, and BRR with varying  privacy budgets.}
\label{BRR in LBS}
\end{figure}

\subsection{BRR in Location-based Services}
In location-based services (LBS), ensuring user privacy while maintaining high utility is a significant challenge. The L-SRR framework proposed by Wang et al. ~\cite{WHXQH2022} employs a staircase randomized response mechanism to enhance utility  while achieving provable LDP. In this part, we make evaluations of BRR mechanism in the LBS scenario with comparison to the state-of-the-art L-SRR.

We conduct experiments on Gowalla\footnote{\url{https://snap.stanford.edu/data/loc-Gowalla.html}} and FourSquare~\cite{foursquare} datasets. We extract randomly 1,000 locations from the Gowalla dataset and 3,000 locations from the FourSquare dataset. 
To ensure consistency and facilitate comparison, following the hierarchical coding approach of the L-SRR scheme, we partition the datasets into regions based on common prefixes as shown in Fig.~\ref{distribution}. 
These three obfuscation mechanisms share the same regional partitioning for release. The release range originating from a prior location is identical to its located region, which prevents cross-region releases.

We compare the performance of three schemes (L-SRR, GRR, and BRR) under different privacy budgets ($\epsilon$ ranging from $0.25$ to $2$). As shown in Fig.~\ref{BRR in LBS}, we can observe that  BRR  performs the best across all privacy budgets. On the Gowalla dataset, the average global
expected error $Q_g$ of the BRR is reduced by $8.7\%$ than that of L-SRR, with the most significant reduce up to $23.4\%$ when $\epsilon = 2.0$. Similarly, on the FourSquare dataset, the average $Q_g$ of BRR decreases by $6.7\%$, with the highest improvement by $14.3\%$ at $\epsilon = 0.25$. Overall, when the privacy budget is low, the data utility of all schemes is significantly affected, but BRR still outperforms the other schemes, demonstrating its better noise resistance. As the privacy budget increases, the overall utility loss gradually decreases while BRR maintains relatively optimal accuracy across the entire range.

\section{Related Work}
This section mainly reviews the advancements in Locally Differentially
Private (LDP)  mechanisms and recent applications in machine learning, location-based Services and some other scenarios.

\subsection{Advancements in LDP Mechanisms}
Dwork et al. ~\cite{DMN2006} first systematically introduced the concept of differential privacy, theorizing the obfuscation mechanism based on sensitivity. However, some data providers do not trust the data collectors and analysts. This induces researchers to study the setting of local privacy. Local privacy dates back to Warner \cite{W1965}, who
designed in 1965 the randomized response method for individuals
responding to sensitive surveys. Kasiviswanathan et al. \cite{KLNRS11} first formalized the idea of local privacy, and later Duchi et al. ~\cite{DJW2013} studied the privacy-accuracy trade-offs in data tasks such as mean estimation, establishing a theoretical foundation for the application of LDP. These works rendered LDP prominent in the academic circle.
Holohan et al. ~\cite{HlM2017} provided an optimal mechanism for Warner's original RR technique, achieving the best application of differential privacy in randomized response surveys. 
 
With advances in technology, the $k$-ary randomized response ~\cite{KBR2016} ($k$-RR, also known as GRR) extended traditional binary response to satisfy differential privacy in multi-value data scenarios. 
Building on GRR, Arcolezi et al. ~\cite{AJBX2024} introduced a series of enhanced LDP protocols, such as L-GRR, OUE (Optimized Unary Encoding), and SUE (Symmetric Unary Encoding), further improving the applicability of privacy protection in longitudinal data collection. Makhlouf et al. ~\cite{KHSB2024} discussed the fairness issues of LDP under multiple sensitive attributes.
Chen et al. ~\cite{CLQKL2016} delved into the utility and complexity of personalized local differential privacy (PLDP), expanding the applicability of personalized privacy protection. Song et al. ~\cite{SLWL2020} further proposed the personalized randomized response (PRR) mechanism, enabling LDP to adapt more flexibly to individual differences by setting different privacy requirements for distinct data values. Wei et al. ~\cite{WBXYD2024} proposed the advanced adaptive additive (AAA) mechanism, which is a distribution-aware approach that addresses the classical mean estimation problem.  Wang et al. ~\cite{WBLJ2017} proposed a generalizable aggregation framework to optimize protocol parameters through a simplified aggregation algorithm, improving the utility of data collection. Zhang et al. ~\cite{ZWLHC2018} developed the CALM method, which selectively collects subset attributes to reduce noise impact, significantly enhancing the accuracy of marginal statistics in high-dimensional data.

\subsection{Applications of LDP}



\textbf{LDP in Machine Learning.} Through an in-depth analysis of existing studies, Wang et al. ~\cite{WWZW2023} and Chen et al.~\cite{CZZZY2024} summarized the applications and challenges of DP in deep learning and federated learning, respectively, including the trade-offs between where noise is introduced, accuracy, and privacy, and the types of DP noise added. Bai et al. ~\cite{BYYDXG2017} and Fletcher, Islam et al. ~\cite{FI2019} combined differential privacy with decision tree models, proposing a multilayer DP model based on Markov Chain Monte Carlo to effectively balance privacy protection and model performance. To optimize LDP’s frequency estimation performance, Fang et al. ~\cite{FCLG2023} proposed a convolution-based frequency estimation method that reduces LDP noise through deconvolution, significantly enhancing frequency statistics accuracy. Yang et al. ~\cite{YGZIZK2024} highlighted the importance and feasibility of LDP in protecting user privacy.

 Abadi et al. ~\cite{ACGMMTZ2016} proposed a method for training deep neural networks with differential privacy, providing privacy protection for machine learning on sensitive datasets. Bassily and Smith ~\cite{BS2015} proposed an efficient LDP protocol for accurately counting high-frequency items. Studies by Smith et al. ~\cite{STU2017} and Wang et al. ~\cite{WGX2018} focused on optimizing the interactive rounds and sample complexity of LDP, proposing methods to reduce computational costs and improve accuracy.

 \textbf{LDP in Location-based Services.} Wang et al. ~\cite{WHXQH2022} proposed a staircase randomized
response mechanism named L-SRR that introduced the hierarchical coding scheme. This makes L-SRR suitable only in two-dimensional domain.
Besides, L-SRR presented multiple steps in the
staircase while two steps are optimal in local search phase as proved in Section \ref{sect: local_BRR} of in this paper.
By incorporating expected inference error and geo-indistinguishability,
a regionalized LDP mechanism called DPIVE with personalized utility sensitivities was developed   for LBSs~\cite{ZDC24}. Wu et al. ~\cite{WFSK2020} investigated noise injection to protect privacy in distributed machine learning in multi-data-owner environments.  
Zhu et al.~\cite{ZHX2025} introduced 
termed location-discriminative geo-indistinguishability involving different sensitive levels of location privacy.
Cao et al. ~\cite{CJG2021} investigated data poisoning attacks on LDP protocols and proposed defense methods.  

 \textbf{LDP in other scenarios.} Gursoy et al. ~\cite{GTTWL2019} proposed the Cohesive Local Differential Privacy (CLDP) mechanism, which focuses privacy protection on data collection for smaller groups, improving applicability in network security for small-scale data. In high-dimensional data and IoT contexts, Mahawaga et al. ~\cite{MPBKLCA2020} improved privacy protection in convolutional neural network training with the LATENT mechanism, providing an efficient privacy solution for IoT devices. Arcolezi et al. ~\cite{ACAX2021} studied multidimensional data frequency estimation under LDP and proposed a multidimensional LDP scheme to address the combined privacy budget challenge.

Rachel et al. ~\cite{CKR2021} conducted a user survey to explore users' perceptions and expectations of differential privacy. Zhao et al. ~\cite{ZC2022} pointed out the importance of improving data utility while protecting privacy. Priyanka et al. ~\cite{PMRGE2023} effectively explained the core parameter $\epsilon$ to ordinary users to help them make more informed data-sharing decisions. Kairouz et al. ~\cite{KBR2016} presented new mechanisms, including hashed $k$-RR, which outperform existing mechanisms in utility across all privacy levels. Wang et al. ~\cite{WLJ2019} proposed the PEM protocol, optimizing utility and computational complexity by grouping users and reporting value prefixes. 
Murakami and Kawamoto ~\cite{MK2019} introduced Utility-optimized LDP to improve utility in contexts containing large amounts of non-sensitive data. 

To solve private release problems of multiple candidates, GRR mechanism has been the dominant method in recent years. However, GRR ignores the different degrees of utility for candidates and is not suitable for the cases of numerous candidates and high privacy levels requirements. This motivates us to design an adaptive
Bipartite Randomized Response  mechanism called BRR in this paper which significantly improve utility as desired.

\section{Conclusion}
This paper proposes a novel adaptive LDP mechanism, the Bipartite Randomized Response (BRR), which aims to balance the trade-off between privacy and data utility. 
In particular, the GRR mechanism treats all non-true items equally, 
which weakens the data utility in some scenarios.
To address this limitation, BRR optimizes the allocation of release probabilities by local and global search phases. 
This optimization enhances data utility while preserving privacy level. 
 We provide a series of theoretical results
 for the optimal number $m$ of high-weighted items and asymptotically exact ratio of BRR to GRR or global expected errors in the equidistant point domains.
 various empirical valuations verify that our BRR approach outperforms the state-of-the-art methods.
 


For the future work, we will investigate some reformed Bipartite Random Response mechanisms for minimizing the global (unconditional) expected inference error in general discrete domains under uniform and prior distributions.


\newpage

\bibliographystyle{ACM-Reference-Format}
\bibliography{refvldb}


\begin{thebibliography}{47}


\ifx \showCODEN    \undefined \def \showCODEN     #1{\unskip}     \fi
\ifx \showDOI      \undefined \def \showDOI       #1{#1}\fi
\ifx \showISBNx    \undefined \def \showISBNx     #1{\unskip}     \fi
\ifx \showISBNxiii \undefined \def \showISBNxiii  #1{\unskip}     \fi
\ifx \showISSN     \undefined \def \showISSN      #1{\unskip}     \fi
\ifx \showLCCN     \undefined \def \showLCCN      #1{\unskip}     \fi
\ifx \shownote     \undefined \def \shownote      #1{#1}          \fi
\ifx \showarticletitle \undefined \def \showarticletitle #1{#1}   \fi
\ifx \showURL      \undefined \def \showURL       {\relax}        \fi
\providecommand\bibfield[2]{#2}
\providecommand\bibinfo[2]{#2}
\providecommand\natexlab[1]{#1}
\providecommand\showeprint[2][]{arXiv:#2}

\bibitem[\protect\citeauthoryear{Abadi, Chu, Goodfellow, McMahan, Mironov, Talwar, and Zhang}{Abadi et~al\mbox{.}}{2016}]%
        {ACGMMTZ2016}
\bibfield{author}{\bibinfo{person}{Martin Abadi}, \bibinfo{person}{Andy Chu}, \bibinfo{person}{Ian Goodfellow}, \bibinfo{person}{H~Brendan McMahan}, \bibinfo{person}{Ilya Mironov}, \bibinfo{person}{Kunal Talwar}, {and} \bibinfo{person}{Li Zhang}.} \bibinfo{year}{2016}\natexlab{}.
\newblock \showarticletitle{{Deep learning with differential privacy}}. In \bibinfo{booktitle}{\emph{Proceedings of the 2016 ACM SIGSAC Conference on Computer and Communications Security}}. \bibinfo{pages}{308--318}.
\newblock


\bibitem[\protect\citeauthoryear{Arcolezi, Couchot, Al~Bouna, and Xiao}{Arcolezi et~al\mbox{.}}{2021}]%
        {ACAX2021}
\bibfield{author}{\bibinfo{person}{H{\'e}ber~H. Arcolezi}, \bibinfo{person}{Jean-Fran{\c{c}}ois Couchot}, \bibinfo{person}{Bechara Al~Bouna}, {and} \bibinfo{person}{Xiaokui Xiao}.} \bibinfo{year}{2021}\natexlab{}.
\newblock \showarticletitle{{Random sampling plus fake data: Multidimensional frequency estimates with local differential privacy}}. In \bibinfo{booktitle}{\emph{Proceedings of the 30th ACM International Conference on Information \& Knowledge Management}}. \bibinfo{pages}{47--57}.
\newblock


\bibitem[\protect\citeauthoryear{Arcolezi, Couchot, Bouna, and Xiao}{Arcolezi et~al\mbox{.}}{2024}]%
        {AJBX2024}
\bibfield{author}{\bibinfo{person}{Héber~H. Arcolezi}, \bibinfo{person}{Jean~François Couchot}, \bibinfo{person}{Bechara~Al Bouna}, {and} \bibinfo{person}{Xiaokui Xiao}.} \bibinfo{year}{2024}\natexlab{}.
\newblock \showarticletitle{Improving the utility of locally differentially private protocols for longitudinal and multidimensional frequency estimates}.
\newblock \bibinfo{journal}{\emph{Digital Communications and Networks}} \bibinfo{volume}{10}, \bibinfo{number}{2} (\bibinfo{year}{2024}), \bibinfo{pages}{369--379}.
\newblock
\showISSN{2352-8648}


\bibitem[\protect\citeauthoryear{Bai, Yao, Yuan, Deng, Xie, and Guan}{Bai et~al\mbox{.}}{2017}]%
        {BYYDXG2017}
\bibfield{author}{\bibinfo{person}{Xuanyu Bai}, \bibinfo{person}{Jianguo Yao}, \bibinfo{person}{Mingxuan Yuan}, \bibinfo{person}{Ke Deng}, \bibinfo{person}{Xike Xie}, {and} \bibinfo{person}{Haibing Guan}.} \bibinfo{year}{2017}\natexlab{}.
\newblock \showarticletitle{Embedding differential privacy in decision tree algorithm with different depths}.
\newblock \bibinfo{journal}{\emph{Science China Information Sciences}}  \bibinfo{volume}{60} (\bibinfo{year}{2017}), \bibinfo{pages}{1--15}.
\newblock


\bibitem[\protect\citeauthoryear{Bassily and Smith}{Bassily and Smith}{2015}]%
        {BS2015}
\bibfield{author}{\bibinfo{person}{Raef Bassily} {and} \bibinfo{person}{Adam Smith}.} \bibinfo{year}{2015}\natexlab{}.
\newblock \showarticletitle{Local, private, efficient protocols for succinct histograms}. In \bibinfo{booktitle}{\emph{Proceedings of the Forty-Seventh Annual ACM Symposium on Theory of Computing}}. \bibinfo{pages}{127--135}.
\newblock


\bibitem[\protect\citeauthoryear{Cao, Jia, and Gong}{Cao et~al\mbox{.}}{2021}]%
        {CJG2021}
\bibfield{author}{\bibinfo{person}{Xiaoyu Cao}, \bibinfo{person}{Jinyuan Jia}, {and} \bibinfo{person}{Neil~Zhenqiang Gong}.} \bibinfo{year}{2021}\natexlab{}.
\newblock \showarticletitle{Data poisoning attacks to local differential privacy protocols}. In \bibinfo{booktitle}{\emph{30th USENIX Security Symposium (USENIX Security 21)}}. \bibinfo{pages}{947--964}.
\newblock


\bibitem[\protect\citeauthoryear{Chen, Zhu, Zhang, Zhou, and Yu}{Chen et~al\mbox{.}}{2023}]%
        {CZZZY2024}
\bibfield{author}{\bibinfo{person}{Huiqiang Chen}, \bibinfo{person}{Tianqing Zhu}, \bibinfo{person}{Tao Zhang}, \bibinfo{person}{Wanlei Zhou}, {and} \bibinfo{person}{Philip~S. Yu}.} \bibinfo{year}{2023}\natexlab{}.
\newblock \showarticletitle{Privacy and Fairness in Federated Learning: On the Perspective of Tradeoff}.
\newblock \bibinfo{journal}{\emph{ACM Comput. Surv.}} \bibinfo{volume}{56}, \bibinfo{number}{2}, Article \bibinfo{articleno}{39} (\bibinfo{year}{2023}), \bibinfo{numpages}{37}~pages.
\newblock
\showISSN{0360-0300}


\bibitem[\protect\citeauthoryear{Chen, Li, Qin, Kasiviswanathan, and Jin}{Chen et~al\mbox{.}}{2016}]%
        {CLQKL2016}
\bibfield{author}{\bibinfo{person}{Rui Chen}, \bibinfo{person}{Haoran Li}, \bibinfo{person}{A.~Kai Qin}, \bibinfo{person}{Shiva~Prasad Kasiviswanathan}, {and} \bibinfo{person}{Hongxia Jin}.} \bibinfo{year}{2016}\natexlab{}.
\newblock \showarticletitle{Private spatial data aggregation in the local setting}. In \bibinfo{booktitle}{\emph{2016 IEEE 32nd International Conference on Data Engineering (ICDE)}}. IEEE, \bibinfo{pages}{289--300}.
\newblock


\bibitem[\protect\citeauthoryear{Cummings, Kaptchuk, and Redmiles}{Cummings et~al\mbox{.}}{2021}]%
        {CKR2021}
\bibfield{author}{\bibinfo{person}{Rachel Cummings}, \bibinfo{person}{Gabriel Kaptchuk}, {and} \bibinfo{person}{Elissa~M. Redmiles}.} \bibinfo{year}{2021}\natexlab{}.
\newblock \showarticletitle{{``I need a better description``: An Investigation Into User Expectations For Differential Privacy}}. In \bibinfo{booktitle}{\emph{Proceedings of the 2021 ACM SIGSAC Conference on Computer and Communications Security}}. \bibinfo{pages}{3037--3052}.
\newblock


\bibitem[\protect\citeauthoryear{Demelius, Kern, and Tr{\"u}gler}{Demelius et~al\mbox{.}}{2025}]%
        {DKT2025}
\bibfield{author}{\bibinfo{person}{Lea Demelius}, \bibinfo{person}{Roman Kern}, {and} \bibinfo{person}{Andreas Tr{\"u}gler}.} \bibinfo{year}{2025}\natexlab{}.
\newblock \showarticletitle{{Recent Advances of Differential Privacy in Centralized Deep Learning: A Systematic Survey}}.
\newblock \bibinfo{journal}{\emph{Comput. Surveys}} \bibinfo{volume}{57}, \bibinfo{number}{6} (\bibinfo{year}{2025}).
\newblock


\bibitem[\protect\citeauthoryear{Duchi, Jordan, and Wainwright}{Duchi et~al\mbox{.}}{2013}]%
        {DJW2013}
\bibfield{author}{\bibinfo{person}{John~C. Duchi}, \bibinfo{person}{Michael~I. Jordan}, {and} \bibinfo{person}{Martin~J. Wainwright}.} \bibinfo{year}{2013}\natexlab{}.
\newblock \showarticletitle{Local privacy and statistical minimax rates}. In \bibinfo{booktitle}{\emph{2013 IEEE 54th Annual Symposium on Foundations of Computer Science}}. IEEE, \bibinfo{pages}{429--438}.
\newblock


\bibitem[\protect\citeauthoryear{Dwork, McSherry, Nissim, and Smith}{Dwork et~al\mbox{.}}{2006}]%
        {DMN2006}
\bibfield{author}{\bibinfo{person}{Cynthia Dwork}, \bibinfo{person}{Frank McSherry}, \bibinfo{person}{Kobbi Nissim}, {and} \bibinfo{person}{Adam Smith}.} \bibinfo{year}{2006}\natexlab{}.
\newblock \showarticletitle{Calibrating noise to sensitivity in private data analysis}. In \bibinfo{booktitle}{\emph{Theory of Cryptography: Third Theory of Cryptography Conference, TCC 2006, New York, NY, USA, March 4-7, 2006. Proceedings 3}}. Springer, \bibinfo{pages}{265--284}.
\newblock


\bibitem[\protect\citeauthoryear{Erlingsson, Pihur, and Korolova}{Erlingsson et~al\mbox{.}}{2014}]%
        {EPK14}
\bibfield{author}{\bibinfo{person}{\'{U}lfar Erlingsson}, \bibinfo{person}{Vasyl Pihur}, {and} \bibinfo{person}{Aleksandra Korolova}.} \bibinfo{year}{2014}\natexlab{}.
\newblock \showarticletitle{RAPPOR: Randomized Aggregatable Privacy-Preserving Ordinal Response}. In \bibinfo{booktitle}{\emph{Proceedings of the 2014 ACM SIGSAC Conference on Computer and Communications Security}} \emph{(\bibinfo{series}{CCS '14})}. \bibinfo{publisher}{Association for Computing Machinery}, \bibinfo{address}{New York, NY, USA}, \bibinfo{pages}{1054–1067}.
\newblock
\showISBNx{9781450329576}


\bibitem[\protect\citeauthoryear{Fang, Chen, Liu, and Gao}{Fang et~al\mbox{.}}{2023}]%
        {FCLG2023}
\bibfield{author}{\bibinfo{person}{Huiyu Fang}, \bibinfo{person}{Liquan Chen}, \bibinfo{person}{Yali Liu}, {and} \bibinfo{person}{Yuan Gao}.} \bibinfo{year}{2023}\natexlab{}.
\newblock \showarticletitle{{Locally Differentially Private Frequency Estimation Based on Convolution Framework}}. In \bibinfo{booktitle}{\emph{2023 IEEE Symposium on Security and Privacy (S\&P)}}. IEEE, \bibinfo{pages}{2208--2222}.
\newblock


\bibitem[\protect\citeauthoryear{Fletcher and Islam}{Fletcher and Islam}{2019}]%
        {FI2019}
\bibfield{author}{\bibinfo{person}{Sam Fletcher} {and} \bibinfo{person}{Md~Zahidul Islam}.} \bibinfo{year}{2019}\natexlab{}.
\newblock \showarticletitle{{Decision Tree Classification with Differential Privacy}}.
\newblock \bibinfo{journal}{\emph{Comput. Surveys}} \bibinfo{volume}{52}, \bibinfo{number}{4} (\bibinfo{year}{2019}), \bibinfo{pages}{1--33}.
\newblock


\bibitem[\protect\citeauthoryear{Gursoy, Tamersoy, Truex, Wei, and Liu}{Gursoy et~al\mbox{.}}{2019}]%
        {GTTWL2019}
\bibfield{author}{\bibinfo{person}{Mehmet~Emre Gursoy}, \bibinfo{person}{Acar Tamersoy}, \bibinfo{person}{Stacey Truex}, \bibinfo{person}{Wenqi Wei}, {and} \bibinfo{person}{Ling Liu}.} \bibinfo{year}{2019}\natexlab{}.
\newblock \showarticletitle{Secure and utility-aware data collection with condensed local differential privacy}.
\newblock \bibinfo{journal}{\emph{IEEE Transactions on Dependable and Secure Computing}} \bibinfo{volume}{18}, \bibinfo{number}{5} (\bibinfo{year}{2019}), \bibinfo{pages}{2365--2378}.
\newblock


\bibitem[\protect\citeauthoryear{Holohan, Leith, and Mason}{Holohan et~al\mbox{.}}{2017}]%
        {HlM2017}
\bibfield{author}{\bibinfo{person}{Naoise Holohan}, \bibinfo{person}{Douglas~J. Leith}, {and} \bibinfo{person}{Oliver Mason}.} \bibinfo{year}{2017}\natexlab{}.
\newblock \showarticletitle{{Optimal Differentially Private Mechanisms for Randomised Response}}.
\newblock \bibinfo{journal}{\emph{IEEE Transactions on Information Forensics and Security}} \bibinfo{volume}{12}, \bibinfo{number}{11} (\bibinfo{year}{2017}), \bibinfo{pages}{2726--2735}.
\newblock


\bibitem[\protect\citeauthoryear{Kairouz, Bonawitz, and Ramage}{Kairouz et~al\mbox{.}}{2016a}]%
        {KBR2016}
\bibfield{author}{\bibinfo{person}{Peter Kairouz}, \bibinfo{person}{Keith Bonawitz}, {and} \bibinfo{person}{Daniel Ramage}.} \bibinfo{year}{2016}\natexlab{a}.
\newblock \showarticletitle{Discrete distribution estimation under local privacy}. In \bibinfo{booktitle}{\emph{International Conference on Machine Learning}}. PMLR, \bibinfo{pages}{2436--2444}.
\newblock


\bibitem[\protect\citeauthoryear{Kairouz, Oh, and Viswanath}{Kairouz et~al\mbox{.}}{2016b}]%
        {KOV2016}
\bibfield{author}{\bibinfo{person}{Peter Kairouz}, \bibinfo{person}{Sewoong Oh}, {and} \bibinfo{person}{Pramod Viswanath}.} \bibinfo{year}{2016}\natexlab{b}.
\newblock \showarticletitle{{Extremal Mechanisms for Local Differential Privacy}}.
\newblock \bibinfo{journal}{\emph{Journal of Machine Learning Research}} \bibinfo{volume}{17}, \bibinfo{number}{17} (\bibinfo{year}{2016}), \bibinfo{pages}{1--51}.
\newblock


\bibitem[\protect\citeauthoryear{Karima, H., Sami, Ghassen, and Catuscia}{Karima et~al\mbox{.}}{2024}]%
        {KHSB2024}
\bibfield{author}{\bibinfo{person}{Makhlouf Karima}, \bibinfo{person}{Arcolezi~Heber H.}, \bibinfo{person}{Zhioua Sami}, \bibinfo{person}{Ben~Brahim Ghassen}, {and} \bibinfo{person}{Palamidessi Catuscia}.} \bibinfo{year}{2024}\natexlab{}.
\newblock \showarticletitle{On the impact of multi-dimensional local differential privacy on fairness}.
\newblock \bibinfo{journal}{\emph{Data Mining and Knowledge Discovery}} \bibinfo{volume}{38}, \bibinfo{number}{4} (\bibinfo{year}{2024}), \bibinfo{pages}{2252--2275}.
\newblock
\showISSN{1384-5810}


\bibitem[\protect\citeauthoryear{Kasiviswanathan, Lee, Nissim, Raskhodnikova, and Smith}{Kasiviswanathan et~al\mbox{.}}{2011}]%
        {KLNRS11}
\bibfield{author}{\bibinfo{person}{Shiva~Prasad Kasiviswanathan}, \bibinfo{person}{Homin~K. Lee}, \bibinfo{person}{Kobbi Nissim}, \bibinfo{person}{Sofya Raskhodnikova}, {and} \bibinfo{person}{Adam~D. Smith}.} \bibinfo{year}{2011}\natexlab{}.
\newblock \showarticletitle{{What Can We Learn Privately?}}
\newblock \bibinfo{journal}{\emph{SIAM J. Comput.}} \bibinfo{volume}{40}, \bibinfo{number}{3} (\bibinfo{year}{2011}), \bibinfo{pages}{793--826}.
\newblock


\bibitem[\protect\citeauthoryear{Li, Lyu, Su, and Yang}{Li et~al\mbox{.}}{2017}]%
        {LLSY2017}
\bibfield{author}{\bibinfo{person}{Ninghui Li}, \bibinfo{person}{Min Lyu}, \bibinfo{person}{Dong Su}, {and} \bibinfo{person}{Weining Yang}.} \bibinfo{year}{2017}\natexlab{}.
\newblock \bibinfo{booktitle}{\emph{Differential Privacy: From Theory to Practice}}.
\newblock \bibinfo{publisher}{Springer}.
\newblock


\bibitem[\protect\citeauthoryear{Li, Wu, Wen, and He}{Li et~al\mbox{.}}{2020}]%
        {LWWH2020}
\bibfield{author}{\bibinfo{person}{Qinbin Li}, \bibinfo{person}{Zhaomin Wu}, \bibinfo{person}{Zeyi Wen}, {and} \bibinfo{person}{Bingsheng He}.} \bibinfo{year}{2020}\natexlab{}.
\newblock \showarticletitle{Privacy-preserving gradient boosting decision trees}. In \bibinfo{booktitle}{\emph{Proceedings of the AAAI Conference on Artificial Intelligence}}, Vol.~\bibinfo{volume}{34}. \bibinfo{pages}{784--791}.
\newblock


\bibitem[\protect\citeauthoryear{Ma, Zhang, Cai, and Yang}{Ma et~al\mbox{.}}{2023}]%
        {MZCY2023}
\bibfield{author}{\bibinfo{person}{Yuheng Ma}, \bibinfo{person}{Han Zhang}, \bibinfo{person}{Yuchao Cai}, {and} \bibinfo{person}{Hanfang Yang}.} \bibinfo{year}{2023}\natexlab{}.
\newblock \showarticletitle{Decision Tree for Locally Private Estimation with Public Data}.
\newblock \bibinfo{journal}{\emph{Advances in Neural Information Processing Systems}}  \bibinfo{volume}{36} (\bibinfo{year}{2023}), \bibinfo{pages}{43676--43705}.
\newblock


\bibitem[\protect\citeauthoryear{Mahawaga~Arachchige, Bertok, Khalil, Liu, Camtepe, and Atiquzzaman}{Mahawaga~Arachchige et~al\mbox{.}}{2020}]%
        {MPBKLCA2020}
\bibfield{author}{\bibinfo{person}{Pathum~Chamikara Mahawaga~Arachchige}, \bibinfo{person}{Peter Bertok}, \bibinfo{person}{Ibrahim Khalil}, \bibinfo{person}{Dongxi Liu}, \bibinfo{person}{Seyit Camtepe}, {and} \bibinfo{person}{Mohammed Atiquzzaman}.} \bibinfo{year}{2020}\natexlab{}.
\newblock \showarticletitle{{Local differential privacy for deep learning}}.
\newblock \bibinfo{journal}{\emph{IEEE Internet of Things Journal}} \bibinfo{volume}{7}, \bibinfo{number}{7} (\bibinfo{year}{2020}), \bibinfo{pages}{5827--5842}.
\newblock


\bibitem[\protect\citeauthoryear{McSherry and Talwar}{McSherry and Talwar}{2007}]%
        {MT2007}
\bibfield{author}{\bibinfo{person}{Frank McSherry} {and} \bibinfo{person}{Kunal Talwar}.} \bibinfo{year}{2007}\natexlab{}.
\newblock \showarticletitle{Mechanism design via differential privacy}. In \bibinfo{booktitle}{\emph{48th Annual IEEE Symposium on Foundations of Computer Science (FOCS'07)}}. IEEE, \bibinfo{pages}{94--103}.
\newblock


\bibitem[\protect\citeauthoryear{Murakami and Kawamoto}{Murakami and Kawamoto}{2019}]%
        {MK2019}
\bibfield{author}{\bibinfo{person}{Takao Murakami} {and} \bibinfo{person}{Yusuke Kawamoto}.} \bibinfo{year}{2019}\natexlab{}.
\newblock \showarticletitle{Utility-Optimized local differential privacy mechanisms for distribution estimation}. In \bibinfo{booktitle}{\emph{28th USENIX Security Symposium (USENIX Security 19)}}. \bibinfo{pages}{1877--1894}.
\newblock


\bibitem[\protect\citeauthoryear{Nanayakkara, Smart, Cummings, Kaptchuk, and Redmiles}{Nanayakkara et~al\mbox{.}}{2023}]%
        {PMRGE2023}
\bibfield{author}{\bibinfo{person}{Priyanka Nanayakkara}, \bibinfo{person}{Mary~Anne Smart}, \bibinfo{person}{Rachel Cummings}, \bibinfo{person}{Gabriel Kaptchuk}, {and} \bibinfo{person}{Elissa~M. Redmiles}.} \bibinfo{year}{2023}\natexlab{}.
\newblock \showarticletitle{What Are the Chances? Explaining the Epsilon Parameter in Differential Privacy}. In \bibinfo{booktitle}{\emph{32nd USENIX Security Symposium (USENIX Security 23)}}. \bibinfo{address}{Anaheim, CA}, \bibinfo{pages}{1613--1630}.
\newblock
\showISBNx{978-1-939133-37-3}


\bibitem[\protect\citeauthoryear{Smith, Thakurta, and Upadhyay}{Smith et~al\mbox{.}}{2017}]%
        {STU2017}
\bibfield{author}{\bibinfo{person}{Adam Smith}, \bibinfo{person}{Abhradeep Thakurta}, {and} \bibinfo{person}{Jalaj Upadhyay}.} \bibinfo{year}{2017}\natexlab{}.
\newblock \showarticletitle{Is interaction necessary for distributed private learning?}. In \bibinfo{booktitle}{\emph{2017 IEEE Symposium on Security and Privacy (S\&P)}}. IEEE, \bibinfo{pages}{58--77}.
\newblock


\bibitem[\protect\citeauthoryear{Song, Luo, Wang, and Li}{Song et~al\mbox{.}}{2020}]%
        {SLWL2020}
\bibfield{author}{\bibinfo{person}{Haina Song}, \bibinfo{person}{Tao Luo}, \bibinfo{person}{Xun Wang}, {and} \bibinfo{person}{Jianfeng Li}.} \bibinfo{year}{2020}\natexlab{}.
\newblock \showarticletitle{{Multiple Sensitive Values-Oriented Personalized Privacy Preservation Based on Randomized Response}}.
\newblock \bibinfo{journal}{\emph{IEEE Transactions on Information Forensics and Security}}  \bibinfo{volume}{15} (\bibinfo{year}{2020}), \bibinfo{pages}{2209--2224}.
\newblock


\bibitem[\protect\citeauthoryear{Tournier and Montjoye}{Tournier and Montjoye}{2022}]%
        {TM22}
\bibfield{author}{\bibinfo{person}{Arnaud~J. Tournier} {and} \bibinfo{person}{Yves-Alexandre~de Montjoye}.} \bibinfo{year}{2022}\natexlab{}.
\newblock \showarticletitle{Expanding the attack surface: Robust profiling attacks threaten the privacy of sparse behavioral data}.
\newblock \bibinfo{journal}{\emph{Science Advances}} \bibinfo{volume}{8}, \bibinfo{number}{33} (\bibinfo{year}{2022}).
\newblock
\newblock
\shownote{{eabl}6464.}


\bibitem[\protect\citeauthoryear{Wang, Gaboardi, and Xu}{Wang et~al\mbox{.}}{2018}]%
        {WGX2018}
\bibfield{author}{\bibinfo{person}{Di Wang}, \bibinfo{person}{Marco Gaboardi}, {and} \bibinfo{person}{Jinhui Xu}.} \bibinfo{year}{2018}\natexlab{}.
\newblock \showarticletitle{Empirical risk minimization in non-interactive local differential privacy revisited}.
\newblock \bibinfo{journal}{\emph{Advances in Neural Information Processing Systems}}  \bibinfo{volume}{31} (\bibinfo{year}{2018}).
\newblock


\bibitem[\protect\citeauthoryear{Wang, Hong, Xiong, Qin, and Hong}{Wang et~al\mbox{.}}{2022}]%
        {WHXQH2022}
\bibfield{author}{\bibinfo{person}{Han Wang}, \bibinfo{person}{Hanbin Hong}, \bibinfo{person}{Li Xiong}, \bibinfo{person}{Zhan Qin}, {and} \bibinfo{person}{Yuan Hong}.} \bibinfo{year}{2022}\natexlab{}.
\newblock \showarticletitle{{L-SRR}: Local Differential Privacy for Location-Based Services with Staircase Randomized Response}. In \bibinfo{booktitle}{\emph{Proceedings of the 2022 ACM SIGSAC Conference on Computer and Communications Security}}. \bibinfo{pages}{2809--2823}.
\newblock


\bibitem[\protect\citeauthoryear{Wang, Blocki, Li, and Jha}{Wang et~al\mbox{.}}{2017}]%
        {WBLJ2017}
\bibfield{author}{\bibinfo{person}{Tianhao Wang}, \bibinfo{person}{Jeremiah Blocki}, \bibinfo{person}{Ninghui Li}, {and} \bibinfo{person}{Somesh Jha}.} \bibinfo{year}{2017}\natexlab{}.
\newblock \showarticletitle{Locally differentially private protocols for frequency estimation}. In \bibinfo{booktitle}{\emph{26th USENIX Security Symposium (USENIX Security 17)}}. \bibinfo{pages}{729--745}.
\newblock


\bibitem[\protect\citeauthoryear{Wang, Li, and Jha}{Wang et~al\mbox{.}}{2019}]%
        {WLJ2019}
\bibfield{author}{\bibinfo{person}{Tianhao Wang}, \bibinfo{person}{Ninghui Li}, {and} \bibinfo{person}{Somesh Jha}.} \bibinfo{year}{2019}\natexlab{}.
\newblock \showarticletitle{Locally differentially private heavy hitter identification}.
\newblock \bibinfo{journal}{\emph{IEEE Transactions on Dependable and Secure Computing}} \bibinfo{volume}{18}, \bibinfo{number}{2} (\bibinfo{year}{2019}), \bibinfo{pages}{982--993}.
\newblock


\bibitem[\protect\citeauthoryear{Wang, Gu, Ma, and Jin}{Wang et~al\mbox{.}}{2020}]%
        {WGMJ2020}
\bibfield{author}{\bibinfo{person}{Yufeng Wang}, \bibinfo{person}{Min Gu}, \bibinfo{person}{Jianhua Ma}, {and} \bibinfo{person}{Qun Jin}.} \bibinfo{year}{2020}\natexlab{}.
\newblock \showarticletitle{DNN-DP: Differential Privacy Enabled Deep Neural Network Learning Framework for Sensitive Crowdsourcing Data}.
\newblock \bibinfo{journal}{\emph{IEEE Transactions on Computational Social Systems}} \bibinfo{volume}{7}, \bibinfo{number}{1} (\bibinfo{year}{2020}), \bibinfo{pages}{215--224}.
\newblock


\bibitem[\protect\citeauthoryear{Wang, Wang, Zhao, and Wang}{Wang et~al\mbox{.}}{2023}]%
        {WWZW2023}
\bibfield{author}{\bibinfo{person}{Yanling Wang}, \bibinfo{person}{Qian Wang}, \bibinfo{person}{Lingchen Zhao}, {and} \bibinfo{person}{Cong Wang}.} \bibinfo{year}{2023}\natexlab{}.
\newblock \showarticletitle{{Differential privacy in deep learning: Privacy and beyond}}.
\newblock \bibinfo{journal}{\emph{Future Generation Computer Systems}}  \bibinfo{volume}{148} (\bibinfo{year}{2023}), \bibinfo{pages}{408--424}.
\newblock


\bibitem[\protect\citeauthoryear{Warner}{Warner}{1965}]%
        {W1965}
\bibfield{author}{\bibinfo{person}{Stanley~L. Warner}.} \bibinfo{year}{1965}\natexlab{}.
\newblock \showarticletitle{{Randomized response: A survey technique for eliminating evasive answer bias}}.
\newblock \bibinfo{journal}{\emph{J. Amer. Statist. Assoc.}} \bibinfo{volume}{60}, \bibinfo{number}{309} (\bibinfo{year}{1965}), \bibinfo{pages}{63--69}.
\newblock


\bibitem[\protect\citeauthoryear{Wei, Bao, Xiao, Yang, and Ding}{Wei et~al\mbox{.}}{2024}]%
        {WBXYD2024}
\bibfield{author}{\bibinfo{person}{Fei Wei}, \bibinfo{person}{Ergute Bao}, \bibinfo{person}{Xiaokui Xiao}, \bibinfo{person}{Yin Yang}, {and} \bibinfo{person}{Bolin Ding}.} \bibinfo{year}{2024}\natexlab{}.
\newblock \showarticletitle{AAA: An adaptive mechanism for locally differentially private mean estimation}.
\newblock \bibinfo{journal}{\emph{Proc. VLDB Endow.}} \bibinfo{volume}{17}, \bibinfo{number}{8} (\bibinfo{date}{Apr.} \bibinfo{year}{2024}), \bibinfo{pages}{1843–1855}.
\newblock
\showISSN{2150-8097}


\bibitem[\protect\citeauthoryear{Wu, Farokhi, Smith, and Kaafar}{Wu et~al\mbox{.}}{2020}]%
        {WFSK2020}
\bibfield{author}{\bibinfo{person}{Nan Wu}, \bibinfo{person}{Farhad Farokhi}, \bibinfo{person}{David Smith}, {and} \bibinfo{person}{Mohamed~Ali Kaafar}.} \bibinfo{year}{2020}\natexlab{}.
\newblock \showarticletitle{The value of collaboration in convex machine learning with differential privacy}. In \bibinfo{booktitle}{\emph{2020 IEEE Symposium on Security and Privacy (S\&P)}}. IEEE, \bibinfo{pages}{304--317}.
\newblock


\bibitem[\protect\citeauthoryear{Yang, Qu, Yang, and Cudre-Mauroux}{Yang et~al\mbox{.}}{2019}]%
        {foursquare}
\bibfield{author}{\bibinfo{person}{Dingqi Yang}, \bibinfo{person}{Bingqing Qu}, \bibinfo{person}{Jie Yang}, {and} \bibinfo{person}{Philippe Cudre-Mauroux}.} \bibinfo{year}{2019}\natexlab{}.
\newblock \showarticletitle{Revisiting User Mobility and Social Relationships in LBSNs: A Hypergraph Embedding Approach}. In \bibinfo{booktitle}{\emph{The World Wide Web Conference}}. \bibinfo{pages}{2147--2157}.
\newblock


\bibitem[\protect\citeauthoryear{Yang, Guo, Zhu, Tjuawinata, Zhao, and Lam}{Yang et~al\mbox{.}}{2024}]%
        {YGZIZK2024}
\bibfield{author}{\bibinfo{person}{Mengmeng Yang}, \bibinfo{person}{Taolin Guo}, \bibinfo{person}{Tianqing Zhu}, \bibinfo{person}{Ivan Tjuawinata}, \bibinfo{person}{Jun Zhao}, {and} \bibinfo{person}{Kwok~Yan Lam}.} \bibinfo{year}{2024}\natexlab{}.
\newblock \showarticletitle{{Local differential privacy and its applications: A comprehensive survey}}.
\newblock \bibinfo{journal}{\emph{Computer Standards \&AMP; Interfaces}}  \bibinfo{volume}{89} (\bibinfo{year}{2024}), \bibinfo{pages}{103827}.
\newblock
\showISSN{0920-5489}


\bibitem[\protect\citeauthoryear{Ye, Hu, Meng, and Zheng}{Ye et~al\mbox{.}}{2019}]%
        {YHMZ2019}
\bibfield{author}{\bibinfo{person}{Qingqing Ye}, \bibinfo{person}{Haibo Hu}, \bibinfo{person}{Xiaofeng Meng}, {and} \bibinfo{person}{Huadi Zheng}.} \bibinfo{year}{2019}\natexlab{}.
\newblock \showarticletitle{PrivKV: Key-Value Data Collection with Local Differential Privacy}. In \bibinfo{booktitle}{\emph{2019 IEEE Symposium on Security and Privacy (S\&P)}}. IEEE, \bibinfo{pages}{317--331}.
\newblock


\bibitem[\protect\citeauthoryear{Zhang, Lan, Duan, Chen, Zhong, and Xiong}{Zhang et~al\mbox{.}}{2024}]%
        {ZDC24}
\bibfield{author}{\bibinfo{person}{Shun Zhang}, \bibinfo{person}{Pengfei Lan}, \bibinfo{person}{Benfei Duan}, \bibinfo{person}{Zhili Chen}, \bibinfo{person}{Hong Zhong}, {and} \bibinfo{person}{Neal~N. Xiong}.} \bibinfo{year}{2024}\natexlab{}.
\newblock \showarticletitle{{DPIVE}: A regionalized location obfuscation scheme with personalized privacy levels}.
\newblock \bibinfo{journal}{\emph{ACM Transactions on Sensor Networks}} \bibinfo{volume}{20}, \bibinfo{number}{2} (\bibinfo{year}{2024}).
\newblock
\newblock
\shownote{Article 35, 26 pages.}


\bibitem[\protect\citeauthoryear{Zhang, Wang, Li, He, and Chen}{Zhang et~al\mbox{.}}{2018}]%
        {ZWLHC2018}
\bibfield{author}{\bibinfo{person}{Zhikun Zhang}, \bibinfo{person}{Tianhao Wang}, \bibinfo{person}{Ninghui Li}, \bibinfo{person}{Shibo He}, {and} \bibinfo{person}{Jiming Chen}.} \bibinfo{year}{2018}\natexlab{}.
\newblock \showarticletitle{{CALM}: Consistent adaptive local marginal for marginal release under local differential privacy}. In \bibinfo{booktitle}{\emph{Proceedings of the 2018 ACM SIGSAC Conference on Computer and Communications Security}}. \bibinfo{pages}{212--229}.
\newblock


\bibitem[\protect\citeauthoryear{Zhao and Chen}{Zhao and Chen}{2022}]%
        {ZC2022}
\bibfield{author}{\bibinfo{person}{Ying Zhao} {and} \bibinfo{person}{Jinjun Chen}.} \bibinfo{year}{2022}\natexlab{}.
\newblock \showarticletitle{A Survey on Differential Privacy for Unstructured Data Content}.
\newblock \bibinfo{journal}{\emph{ACM Comput. Surv.}} \bibinfo{volume}{54}, \bibinfo{number}{10s}, Article \bibinfo{articleno}{207} (\bibinfo{year}{2022}), \bibinfo{numpages}{28}~pages.
\newblock
\showISSN{0360-0300}


\bibitem[\protect\citeauthoryear{Zhu, Hong, Xue, Lan, Zhang, and Xiang}{Zhu et~al\mbox{.}}{2025}]%
        {ZHX2025}
\bibfield{author}{\bibinfo{person}{Youwen Zhu}, \bibinfo{person}{Yuanyuan Hong}, \bibinfo{person}{Qiao Xue}, \bibinfo{person}{Xiao Lan}, \bibinfo{person}{Yushu Zhang}, {and} \bibinfo{person}{Yong Xiang}.} \bibinfo{year}{2025}\natexlab{}.
\newblock \showarticletitle{{LDGI}: Location-Discriminative Geo-Indistinguishability for Location Privacy}.
\newblock \bibinfo{journal}{\emph{IEEE Transactions on Knowledge and Data Engineering}} \bibinfo{volume}{37}, \bibinfo{number}{3} (\bibinfo{year}{2025}), \bibinfo{pages}{1282--1293}.
\newblock


\end{thebibliography}

\clearpage

\appendix

\section{Proofs}


\subsection{Proof of Theorem \ref{thm:BRR_qloss}}\label{app:proof_part2}

We only need to consider the case where $ N $ is a sufficiently large odd integer. Without loss of generality, let $ m $ also be an odd number satisfying $m \sim \frac{N}{e^{\epsilon/2} + 1}$.
The symbol $\sim$ means the equivalence between two formulas for sufficiently large $N$.

Given any priori position $ k \leq N $ with $ k \in \mathbb{Z}^{+} $, the distances from any of $N$ positions (to be possibly reported) to $ k $ can be arranged in ascending order. As illustrated in each row of Fig.~\ref{fig3}, the general form of this ordered sequence is $[0, 1, 1, 2, 2, \ldots, n, n, n+1, n+2, \ldots, N - 1 - n],$ where $0< 2n \leq N - 1 $ and $k=n+1$ or $N-n$. We now analyze the local expected errors
$Q^{(k)}$ under two protocols.

For GRR mechanism, the  $Q^{(k)}(GRR)$ is computed as, 
$$
Q^{(k)}(GRR) = \left[\frac{(N - n)(N - n - 1)}{2} + \frac{n(n + 1)}{2}\right] \cdot \frac{1}{e^{\epsilon} + N - 1}.
$$

For BRR mechanism, we distinguish three cases:

\begin{enumerate}[left=0pt]
    \item When $ m - 1 \leq 2n \leq N - 1 $, the $Q^{(k)}(BRR)$ is expressed as,
    \begin{equation*}
    \begin{split}
    &Q^{(k)}(BRR) = \frac{m - 1}{2} \left(\frac{m - 1}{2} + 1\right) \cdot \frac{e^{\epsilon}}{N + (e^{\epsilon} - 1)m}+\\
    &\left[\frac{(N - n)(N - n - 1)}{2} + \frac{n(n + 1)}{2} - \frac{m - 1}{2} \frac{m + 1}{2}\right] \cdot \frac{1}{N + (e^{\epsilon} - 1)m}.
    \end{split}
    \end{equation*}
    
    \item When $  2n <m$, the $Q^{(k)}(BRR)$ becomes,
    \begin{equation*}
    \begin{split}
    &Q^{(k)}(BRR) = \left[\frac{(m - 1 - n)(m - n)}{2} + \frac{n(n + 1)}{2}\right] \cdot \frac{e^{\epsilon}}{N + (e^{\epsilon} - 1)m} +\\
    & \left[\frac{(N - n)(N - n - 1)}{2} - \frac{(m - 1 - n)(m - n)}{2}\right] \cdot \frac{1}{N + (e^{\epsilon} - 1)m}.
    \end{split}
    \end{equation*}
   \item When $n=0$, the priori  point is located at the extreme position $k=1$\ or\ $N$, and the sorted distance sequence is $[0, 1, 2, \ldots, N-1]$, where $0$ appears once. The $Q^{(k)}(BRR)$ is identical to the above formula with $n=0$ in case (2).
\end{enumerate}

In the following, we proceed with the computation  employing asymptotic equivalent substitutions for sufficiently large $N$. We define the normalized parameters: $m = cN, \,n=dN , \; \text{with} \; c_0 = \frac{1}{e^{\epsilon/2} + 1},\;c < \frac{1}{2} \; \text{and}\; 0\le d < \frac{1}{2}$.

Substituting the $Q^{(k)}(GRR)$ expression and simplifying yields, with considering only the factors affecting coefficients of $N^2$,
$$
Q^{(k)}(GRR) \sim \frac{N^2}{N} \left[\frac{(1 - d)^2}{2} + \frac{d^2}{2}\right] = \frac{1}{2}(1 - 2d + 2d^2)N.
$$

For $Q^{(k)}(BRR)$, we discuss the asymptotic equivalent substitution of order in two cases for sufficiently large $N$.

 (1) Let's consider the  case $ m - 1 \leq 2n \leq N - 1 $. Then,
    \begin{equation*}
    \begin{split}
Q^{(k)}(BRR)/N = &\frac{c}{2}  \cdot \frac{c}{2}  \cdot \frac{e^{\epsilon}}{1 + (e^{\epsilon} - 1)c}\\
& +\left[\frac{(1 - d)^2}{2} + \frac{d^2}{2} - \frac{c^2}{4}\right] \cdot \frac{1}{1 + (e^{\epsilon} - 1)c}\\
=&\frac{1}{4(1 + (e^{\epsilon} - 1)c)} \left[c^2(e^{\epsilon} - 1) + 2(1 - 2d + 2d^2)\right].
    \end{split}
    \end{equation*}  
  Considering the inequality $ cN - 1 \leq 2dN \leq N - 1 $, we have $\frac{c}{2} \leq d < \frac{1}{2}$ for sufficiently large $N$.
Moreover, $\frac{1}{2} \leq 1 - 2d + 2d^2 = 2\left(d - \frac{1}{2}\right)^2+ \frac{1}{2} \leq 1 - c + \frac{c^2}{2}.$ Substituting $ c=c_0 = \frac{1}{e^{\epsilon/2} + 1} $ and $ d = \frac{1}{2}$, we obtain an upper bound on the expected-error ratio, for sufficiently large $N$,
\begin{equation*}
\frac{Q^{(k)}(BRR)}{Q^{(k)}(GRR)} \leq \frac{1}{e^{\epsilon/2}} \left( \frac{e^{\epsilon/2} - 1}{e^{\epsilon/2} + 1} + 1 \right) = \frac{2}{e^{\epsilon/2} + 1}.
\end{equation*}
For the lower bound, substituting  $ c=c_0 = \frac{1}{e^{\epsilon/2} + 1} $ and $ d = \frac{c}{2}$, we have, for sufficiently large $N$,
\begin{equation*}
\frac{Q^{(k)}(BRR)}{Q^{(k)}(GRR)} \ge
\frac{1}{e^{\epsilon/2}} \left[\frac{c^2(e^{\epsilon} - 1)}{2(1 - 2d + 2d^2)} + 1\right]
=\frac{3e^{\epsilon/2} + 2}{2e^{\epsilon} + 2e^{\epsilon/2} + 1}.
\end{equation*}
Combining both bounds, we have, for sufficiently large $N$,
\begin{equation*}
\frac{3e^{\epsilon/2} + 2}{2e^{\epsilon} + 2e^{\epsilon/2} + 1}
\leq
\frac{Q^{(k)}(BRR)}{Q^{(k)}(GRR)}
\leq
\frac{2}{e^{\epsilon/2} + 1}.
\end{equation*} 

(2) Next, we consider the second case: $2n < m$, which gives $ 0 \leq d < \frac{c}{2} < \frac{1}{4}$.
We derive the asymptotic equivalent expression for the local expected error of BRR, for sufficiently large $N$, 
\begin{equation*}
\begin{split}
Q^{(k)}(BRR)/N = &\left[ \frac{(c - d)^2}{2} +\frac{d^2}{2} \right] \cdot \frac{e^{\epsilon}}{1 + (e^{\epsilon} - 1)c} \\& +\left[ \frac{(1 - d)^2}{2} - \frac{(c - d)^2}{2} \right] \cdot \frac{1}{1 + (e^{\epsilon} - 1)c}\\
 =&\frac{1}{2}\cdot\frac{(1-2d+2d^2)e^{\epsilon}+(c-1)(c+1-2d)(e^{\epsilon}-1)}{(e^{\epsilon}-1)c+1}.   
\end{split}
\end{equation*}

Then, the expected-error ratio of BRR to GRR is,
$$\frac{Q^{(k)}(BRR)}{Q^{(k)}(GRR)} 
\sim \frac{e^{\epsilon}}{(e^{\epsilon} - 1)c + 1} +\frac{ (c - 1)(c + 1 - 2d)(e^{\epsilon} - 1)}{(1 - 2d + 2d^2)\left[(e^{\epsilon} - 1)c + 1\right]}.$$
For the expression $\frac{c+1-2d}{1-2d+2d^2} $, we take the derivative of the expression with respect to $d$ that gives $\frac{2c - 4d(c + 1) + 4d^2}{(1 - 2d + 2d^2)^2}$. Let $2c-4(c+1)d+4d^2=0$, then $\Delta =16(c^2+1)$ with respect to $d$. The left root of the equation is $0 < d_1 = \frac{c + 1 - \sqrt{c^2 + 1}}{2} < \frac{c}{2}$. We observe that the function is monotonic over the intervals $(0, d_1)$ and $(d_1, \frac{c}{2})$, increasing first and then decreasing. Now we estimate the ratio $\frac{Q^{(k)}(BRR)}{Q^{(k)}(GRR)}$ at the stationary point and the two endpoints as follows.
\begin{itemize}
    \item When $d = 0$, corresponding to the above case (3), we have,  as $N\rightarrow +\infty$,  $c\to c_0$ and
\begin{equation*}
\begin{split}
        \frac{Q^{(k)}(BRR)}{Q^{(k)}(GRR)} &\to \frac{1}{e^{\epsilon/2}} \left[ e^{\epsilon} + (c_0 + 1)(c_0 - 1)(e^{\epsilon} - 1) \right]\\& = e^{\frac{\epsilon }{2} } -\frac{e^{\frac{\epsilon }{2} }-1}{e^{\frac{\epsilon }{2} }+1} (\sqrt{e^{\epsilon } +2e^{\epsilon }+1} +1)= \frac{2}{e^{\epsilon/2} + 1}.
\end{split}
\end{equation*}
    \item When $d = d_1$, 
  we have,  as $N\rightarrow +\infty$,   
    $$\frac{(c + 1 - 2d)(c - 1)}{1 - 2d + 2d^2} = (c - 1)\left( \sqrt{c^2 + 1} + c \right),
    $$ 
    and then, $c\to c_0$ gives
\begin{equation*}
    \begin{split}
            \frac{Q^{(k)}(BRR)}{Q^{(k)}(GRR)} &\to \frac{1}{e^{\epsilon/2}} \left[ e^{\epsilon} + (c_0 - 1)(\sqrt{c_0^2 + 1} + c_0)(e^{\epsilon} - 1) \right] \\&= e^{\frac{\epsilon }{2} } -\frac{e^{\frac{\epsilon }{2} }-1}{e^{\frac{\epsilon }{2} }+1} (\sqrt{e^{\epsilon } +2e^{\epsilon }+2} +1).
    \end{split}
\end{equation*}
    \item When $d = \frac{c}{2}\rightarrow \frac{c_0}{2}$, as in case (1),   we have,   as $N\rightarrow +\infty$, 
    $$
    \frac{Q^{(k)}(BRR)}{Q^{(k)}(GRR)} \to \frac{3e^{\epsilon/2} + 2}{2e^{\epsilon} + 2e^{\epsilon/2} + 1}.
    $$
\end{itemize}

By the monotonicity, the expected-error ratio reaches its lower bound when $ d = d_1 $. 
As for the upper bound, the asymptotic order of the two settings, $d=0$ and $d=c/2$, are identical to the upper and lower bounds of case 1), respectively. 

Hence, for each priori number $k$, the upper and lower bounds for the local expected-error ratio are obtained.
For the gap between the upper and lower bounds, we have
\begin{equation*}
    \begin{split}
    \text{\rm sup-inf}
    =&\frac{e^{\frac{\epsilon}{2}} -1}{e^{\frac{\epsilon}{2}} +1}\left(\sqrt{e^{\epsilon} +2e^{\frac{\epsilon}{2}} +2} -\sqrt{e^{\epsilon} +2e^{\frac{\epsilon}{2}} +1}\right)\\
    =&\frac{e^{\frac{\epsilon}{2}} -1}{e^{\frac{\epsilon}{2}} +1}\left(\sqrt{e^{\epsilon} +2e^{\frac{\epsilon}{2}} +2} +e^{\frac{\epsilon}{2}} +1\right)^{-1}\\<&\frac12\left(e^{\frac{\epsilon}{2}} +1\right)^{-1}.
    \end{split}
\end{equation*}

\subsection{Proof of Theorem \ref{thm:global_ratio_qloss}}\label{app:proof_part3}

Denote $n= dN$ as before.  Due to the symmetry, we can account for only  the case $k\le (N+1)/2$.
Since $N$ tends to infinity, we can assume that  both $N$  and $m = m(N)$ are odd.

As demonstrated above, for a priori position $k$, $Q^{(k)}(GRR,N) \sim (1 - 2d + 2d^2)N/2$. Then,
\begin{equation*}
    \begin{split}
        N\cdot Q_g(GRR)&=\sum_{k=1}^{N} Q^{(k)} (GRR)\sim ·2\sum_{n=1}^{\frac{N-1}{2} } \left(\frac{N}{2}-n+\frac{n^{2} }{N} \right) \\&=
        2\left(\frac{N^{2} -N}{4}-\frac{N^{2} -1}{8} +  \frac{N^{2}-1 }{24}  \right)\\&=\frac{1}{3} N^{2} -\frac{1}{2} N+\frac{1}{6} .
    \end{split}
\end{equation*}

Denote $m_0=\frac{N}{e^{\frac{\epsilon}{2}} +1},\ c_0=\frac {m_0}N=\frac{1}{e^{\frac{\epsilon}{2}} +1}$ and $c=c(N)=\frac {m}N$. Due to
$\lim_{N\rightarrow\infty} m/N=\frac{1}{e^{\frac{\epsilon}{2}} +1}$ by Theorem \ref{thm:BRR}, we have $\lim_{N\rightarrow\infty}c(N)=c_0$, that is,
\[\lim_{N\rightarrow\infty}\frac{m-m_0}{N}=\lim_{N\rightarrow\infty}\frac{m-\frac{N}{e^{\frac{\epsilon}{2}} +1}}{N}=\lim_{N\rightarrow\infty}\frac m{N}-\frac{1}{e^{\frac{\epsilon}{2}} +1}=0. \]

We are now ready to calculate the global expected error of BRR with the dynamic splitting number $m=cN$ as follows.

\begin{equation}\label{eq:N_Qg}
    \begin{split}
        &N\cdot Q_{g} (BRR)=\sum_{k=1}^{N} Q^{(k)} (BRR)\\ =& 2\left(\sum_{k=1}^{\frac{m+1}{2} } Q^{(k)} (BRR)+\sum_{k=\frac{m+3}{2} }^{\frac{N+1}{2} }  Q^{(k)} (BRR) \right)-Q^{(\frac{N+1}{2})}(BRR)\\ =&
        \sum_{n=0}^{\frac{m-1}{2} } \frac{ \dfrac{2e^{\epsilon}}{N} n(n+1) - 2n  + c(e^{\epsilon} - 1)(Nc-1-2n) + N - 1 }{1 + (e^{\epsilon} - 1)c } \\ &+
        \sum_{n=\frac{m+1}{2} }^{\frac{N-1}{2} }\frac{N\left[\left({c}^{2}-\frac{1}{N^2}\right)(e^{\epsilon } -1)+2-\frac{4n}{N}+\frac{4n}{N^2}+\frac{4n^{2} }{N^{2} }-\frac{2}{N}\right] }{2(1+(e^{\epsilon } -1)c)} \\ &-
        \frac{N\cdot \left(\left({c}^{2}-\frac{1}{N^2}\right)(e^{\epsilon } -1)+2-\frac{1}{N^{2} }-1+\frac{2}{N}-\frac{2}{N}   \right)}{4(1+(e^{\epsilon } -1)c)}.
    \end{split}
\end{equation}

Denote $A=\frac{1}{(e^{\epsilon }-1 )c+1}$, then

\begin{equation*}
    \begin{split}
        &N\cdot Q_{g} (BRR)\\=&A\sum_{n=0}^{\frac{cN-1}{2} }[c(e^{\epsilon} - 1)(Nc-1) + N - 1]\\&+A\sum_{n=0}^{\frac{cN-1}{2}}[(\frac{2}{N}-2)e^{\epsilon }-2(c-1)(e^{\epsilon }-1)]n + A\sum_{n=0}^{\frac{cN-1}{2}}\frac{2e^{\epsilon }}{N}{n}^{2}\\&+\frac{A}{2}\sum_{n=\frac{cN+1}{2} }^{\frac{N-1}{2} }N\left(\left({c}^{2}-\frac{1}{N^2}\right)(e^{\epsilon }-1)+2-\frac{2}{N}\right)+\frac{A}{2}\sum_{n=\frac{cN+1}{2}}^{\frac{N-1}{2}}[(-4)n+\frac{4n}{N}]\\&+\frac{A}{2}\sum_{n=\frac{cN+1}{2}}^{\frac{N-1}{2}}\frac{4}{N}{n}^{2}-
        \frac{A}{4}N\cdot \left(\left({c}^{2}-\frac{1}{N^2}\right)(e^{\epsilon } -1)+1-\frac{1}{N^{2} }   \right).
    \end{split}
\end{equation*}

By simplifying the above summation, the result is
\begin{equation}
    \begin{split}
        &f(c)=\frac{(e^\epsilon - 1)c^3 + 3(e^\epsilon - 1)c^2 -3e^\epsilon+ 4}{12\left[(e^\epsilon - 1)c + 1\right]} N^2- \frac{1}{12}
        \\
        &=\frac{(e^\epsilon - 1)N^2c^3 + 3(e^\epsilon - 1)N^2c^2 -(e^\epsilon - 1)c -3 e^\epsilon- 1+ 4N^2}{12\left[(e^\epsilon - 1)c + 1\right]}.
        \label{fun:N_Qg}
    \end{split}
  \end{equation}

This approaches $\frac{7e^{\frac{\epsilon }{2} } +  9}{12 \left ( e^{\frac{\epsilon }{2} } +  1 \right )^2 }$ when $N$ tends to infinity, by using $\lim_{N\rightarrow\infty}c(N)=c_0 = \frac{1}{e^{\epsilon/2} + 1}$.

Thus, we come to the final conclusion,

\begin{equation*}
    \begin{split}
        \lim_{N \to \infty} \frac{Q_g(BRR,N)}{Q_g(GRR,N)} =\lim_{N \to \infty} 
        \frac{N^{2}(7e^{\epsilon /2 }+9 )}{12(e^{\epsilon /2 }+1)^2} 
        \bigg/
        \frac{N^{2}}{3}
        =
        \frac{7e^{\epsilon /2 }+9 }{4(e^{\epsilon /2 }+1)^2}.
    \end{split}
\end{equation*}

We would like to mention that, the limit of ratios depends directly on the  terms of order $N^2$ in the expansion of \eqref{eq:N_Qg}, then the case of even $N$ or $m$ remains the final result while the slight changes on the upper and lower limits of the sums take place, like replacing $\frac{m-1}2$ by $\frac{m}2-1$.

\begin{figure*}[htbp]
\centering
\subfigure{
 \begin{minipage}{0.25\linewidth}
 \includegraphics[width=1\textwidth,height=4cm,trim=6.5cm 2cm 7cm 1cm ,clip]{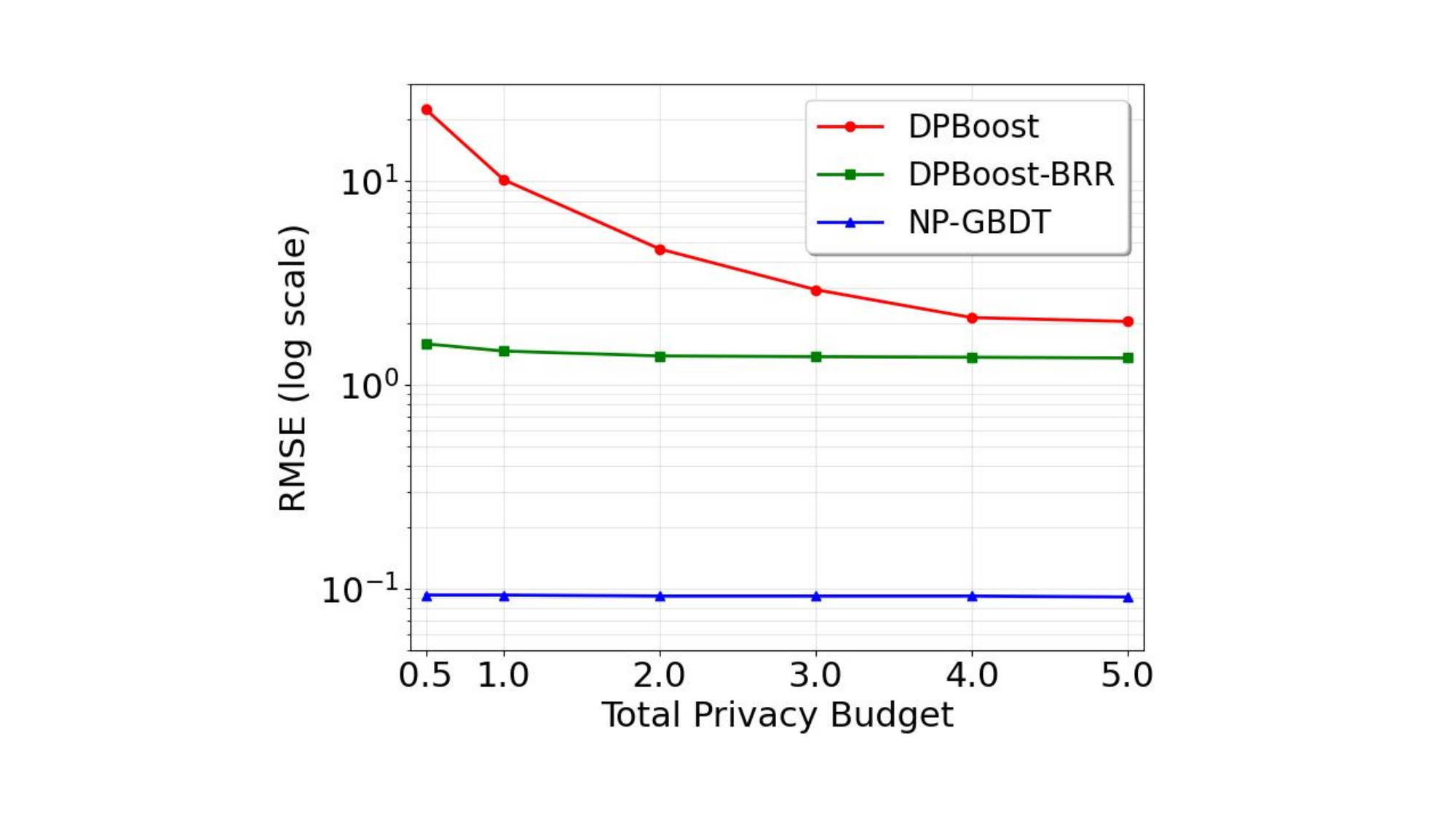} 
 \centerline{\small(a)\quad Bias}
 \label{Bias} 
 \end{minipage}
}
\subfigure{
 \begin{minipage}{0.25\linewidth}
 \includegraphics[width=1\textwidth,height=4cm,trim=7.5cm 2.08cm 6cm 1.8cm,clip]{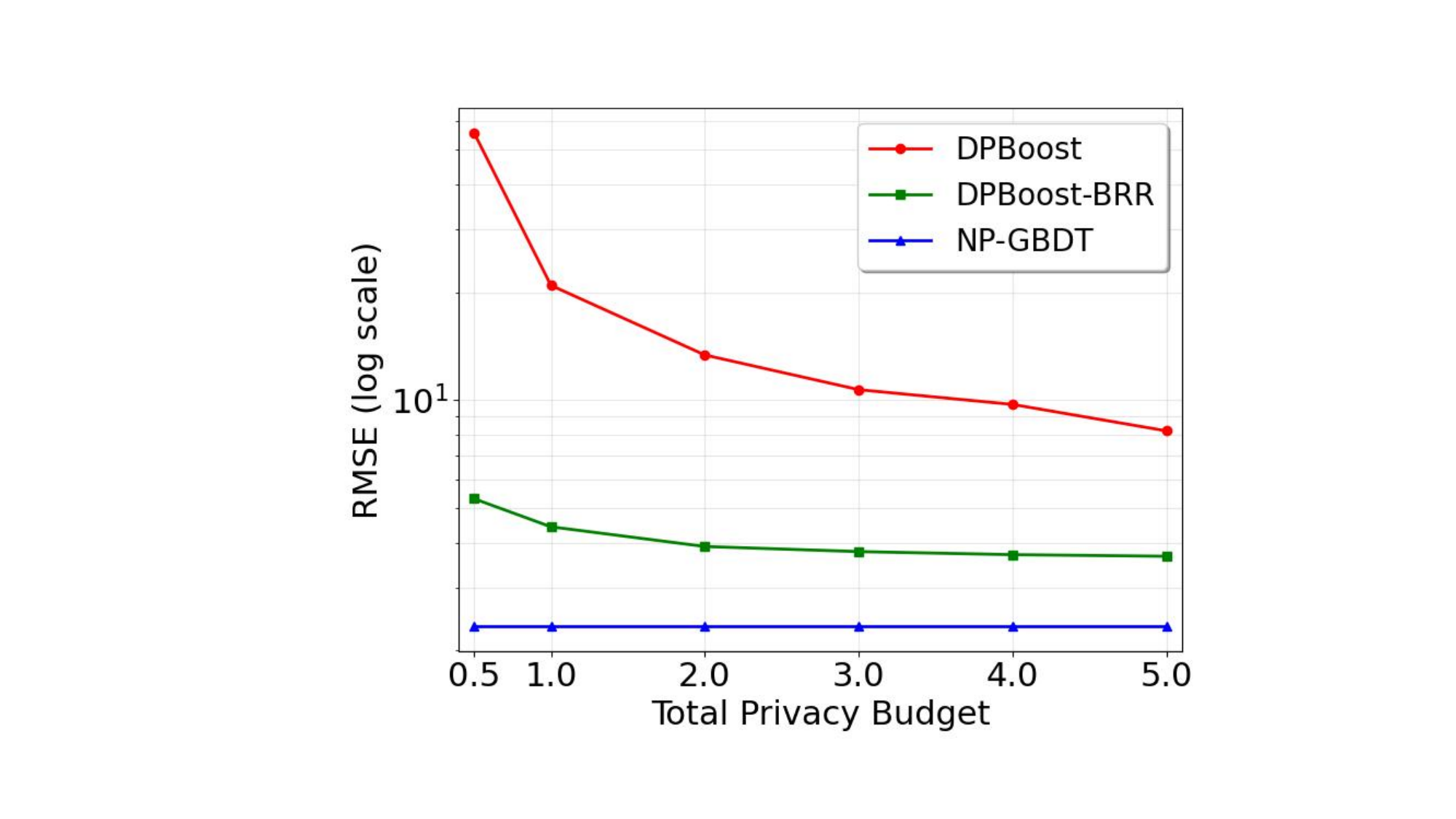} 
 \centerline{\small(b)\quad abalone}
 \label{aba} 
 \end{minipage}
}  
\subfigure{
 \begin{minipage}{0.25\linewidth}
 \includegraphics[width=1\textwidth,height=4cm,trim=7cm 2.2cm 6cm 1.8cm,clip]{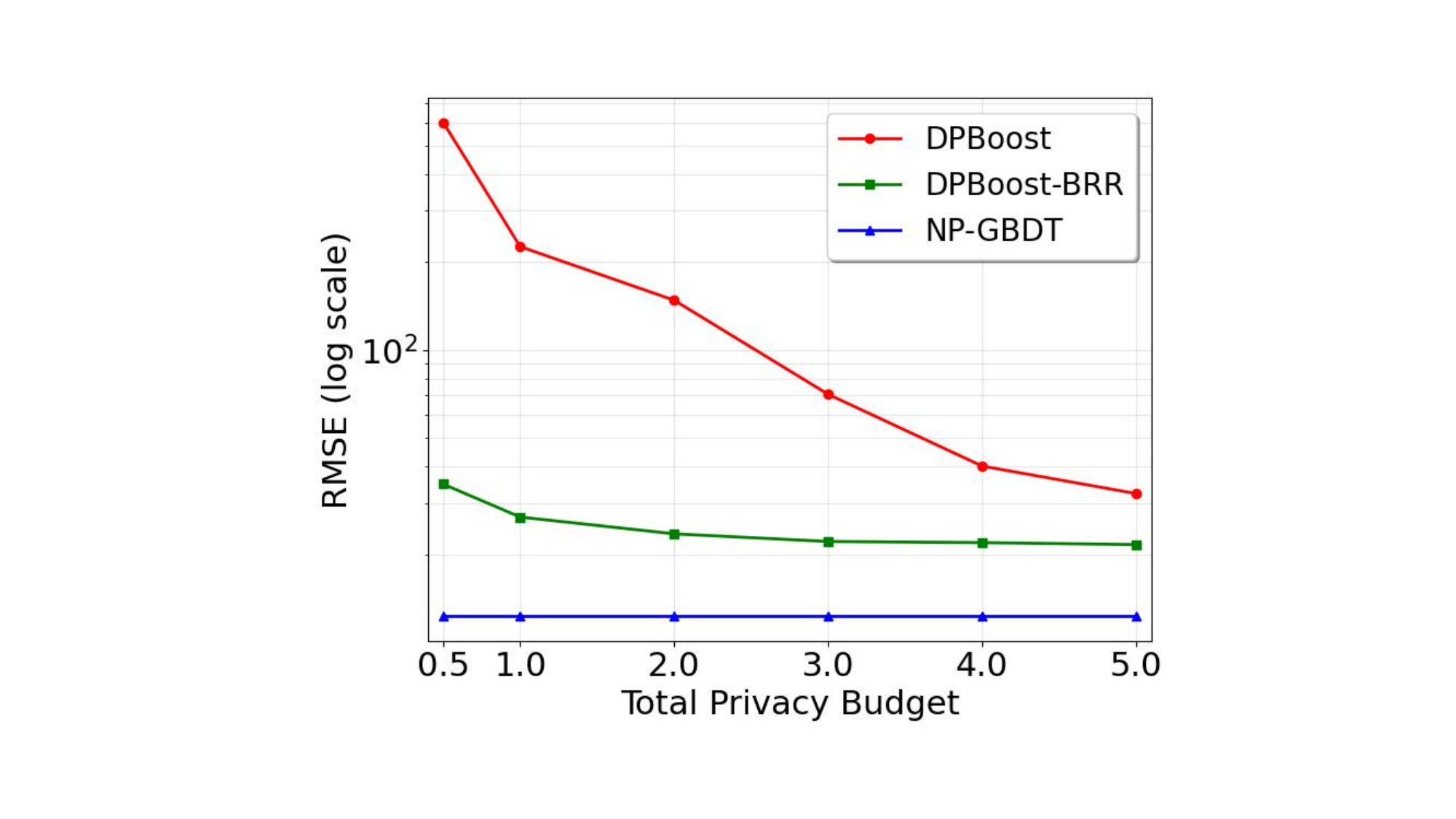} 
  \centerline{\small(c)\quad sup}
 \label{e3} 
 \end{minipage}
}
\vspace{-10pt}
\caption{RMSE comparison of BRR and Laplace mechanisms in privacy-preserving decision tree models  with varying total privacy budgets, cf. DPBoost \cite{LWWH2020} (using Laplace), DPBoost-BRR (using BRR), NP-GBDT (the vanilla GBDT)  \cite{LWWH2020}.}
\label{BRR in privacy-preserving decision tree}
\end{figure*}

\begin{figure*}[htbp]
\centering
\subfigure{
 \begin{minipage}{0.25\linewidth}
 \includegraphics[width=1\textwidth,height=3.6cm,trim=5.5cm 0.2cm 6cm 2cm,clip]{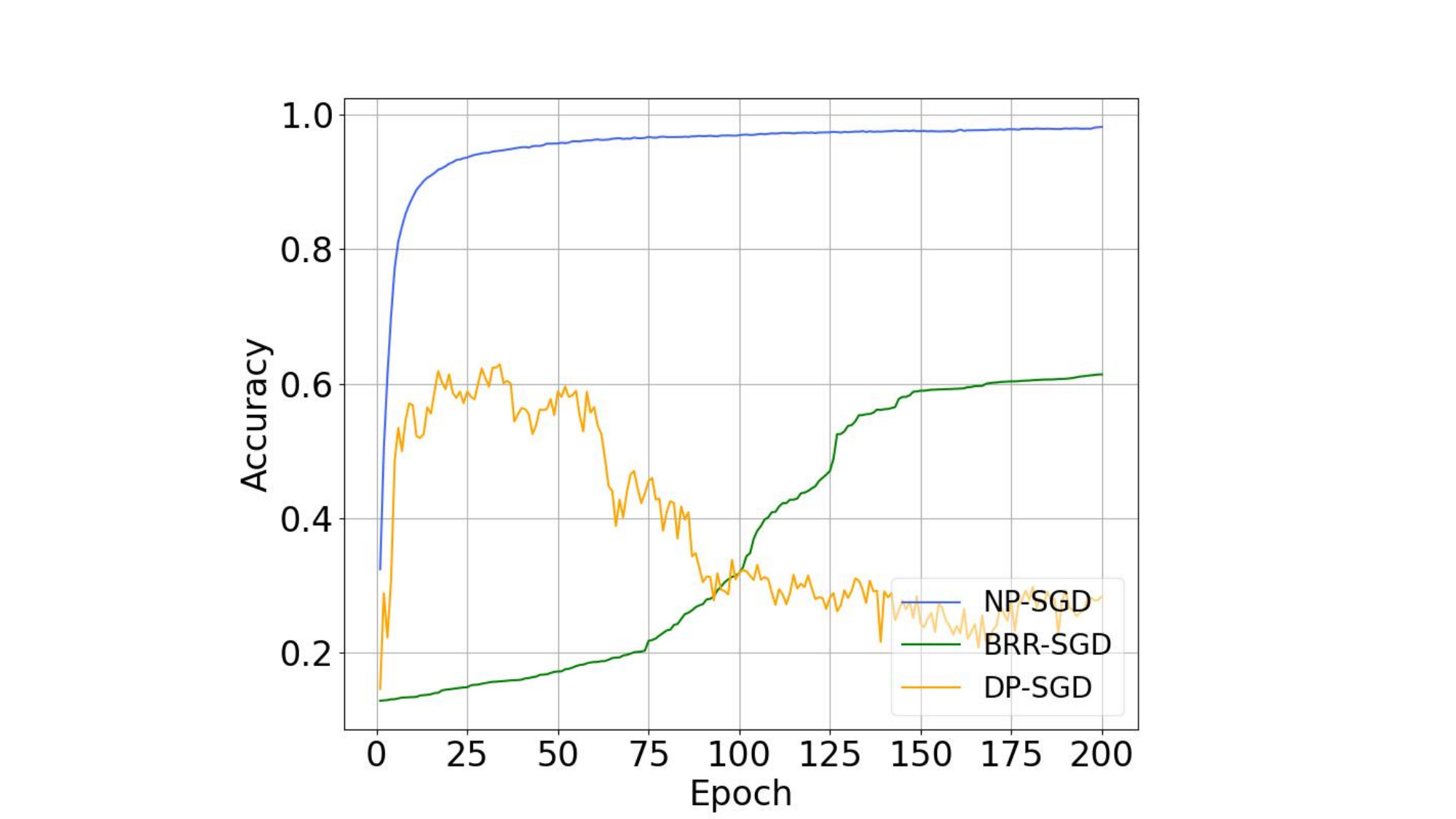} 
 \centerline{\small(a)\quad$\epsilon=0.1$}
 \label{0.1} 
 \end{minipage}
}
\subfigure{
 \begin{minipage}{0.25\linewidth}
 \includegraphics[width=1\textwidth,height=3.6cm,trim=5.5cm 0.2cm 6cm 2cm,clip]{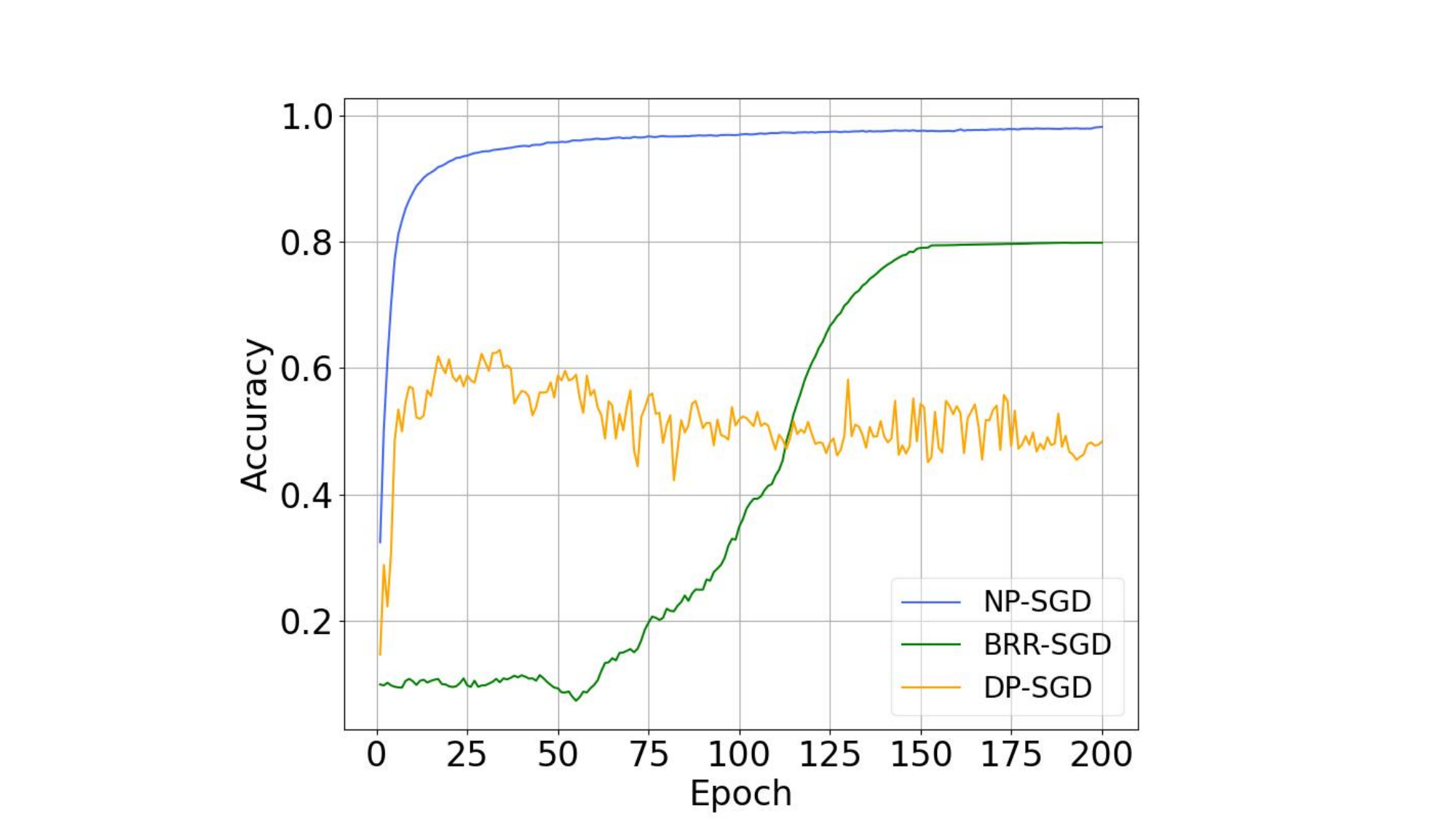} 
 \centerline{\small(b)\quad$\epsilon=0.3$}
 \label{0.3} 
 \end{minipage}
}  
\subfigure{
 \begin{minipage}{0.25\linewidth}
 \includegraphics[width=1\textwidth,height=3.6cm,trim=5cm 0.1cm 6cm 2cm,clip]{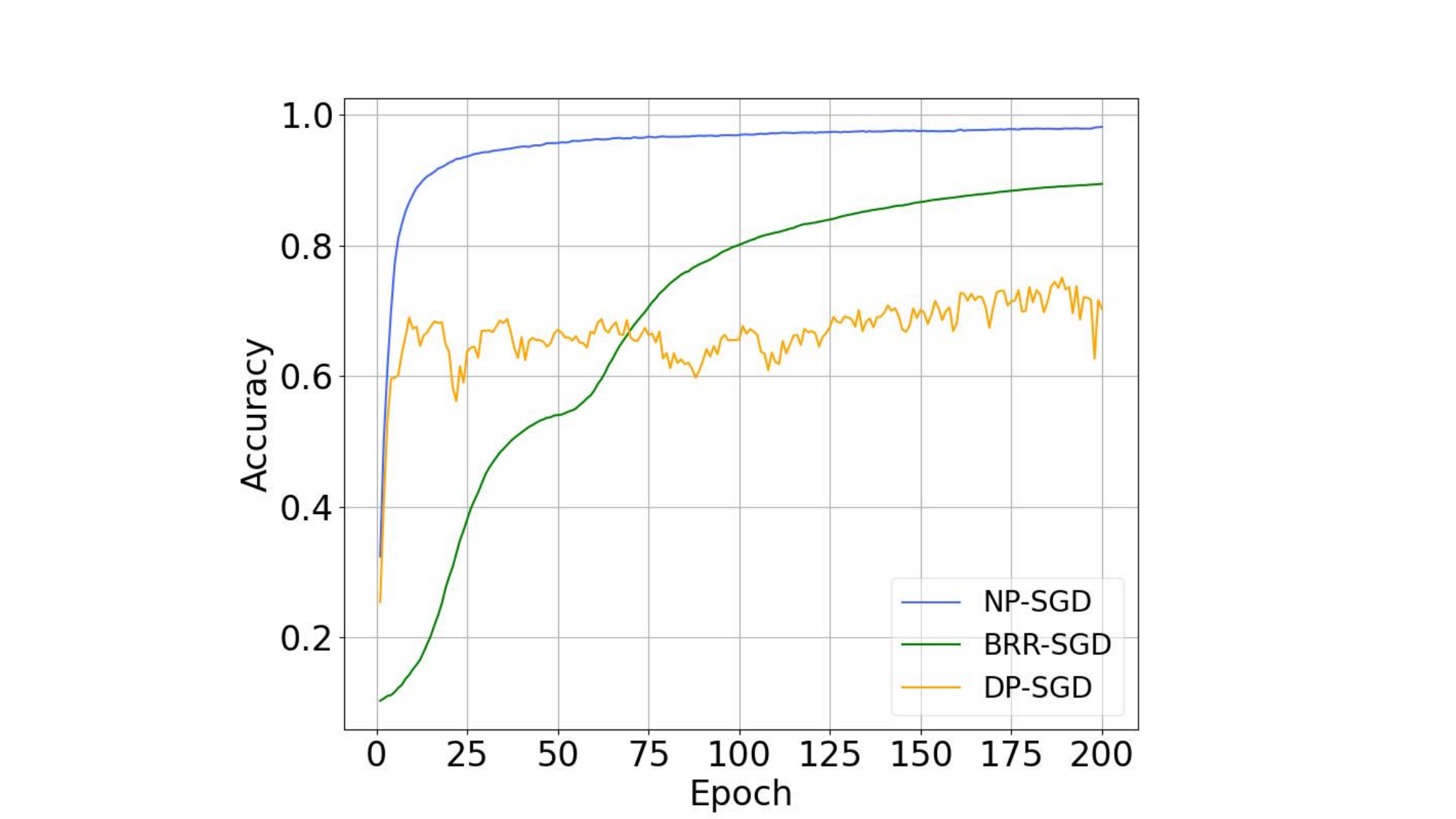} 
  \centerline{\small(c)\quad$\epsilon=0.5$}
 \label{0.5} 
 \end{minipage}
}
\subfigure {
 \begin{minipage}{0.25\linewidth}
 \includegraphics[width=1\textwidth,height=3.6cm,trim=5.5cm 0.2cm 6cm 2cm,clip]{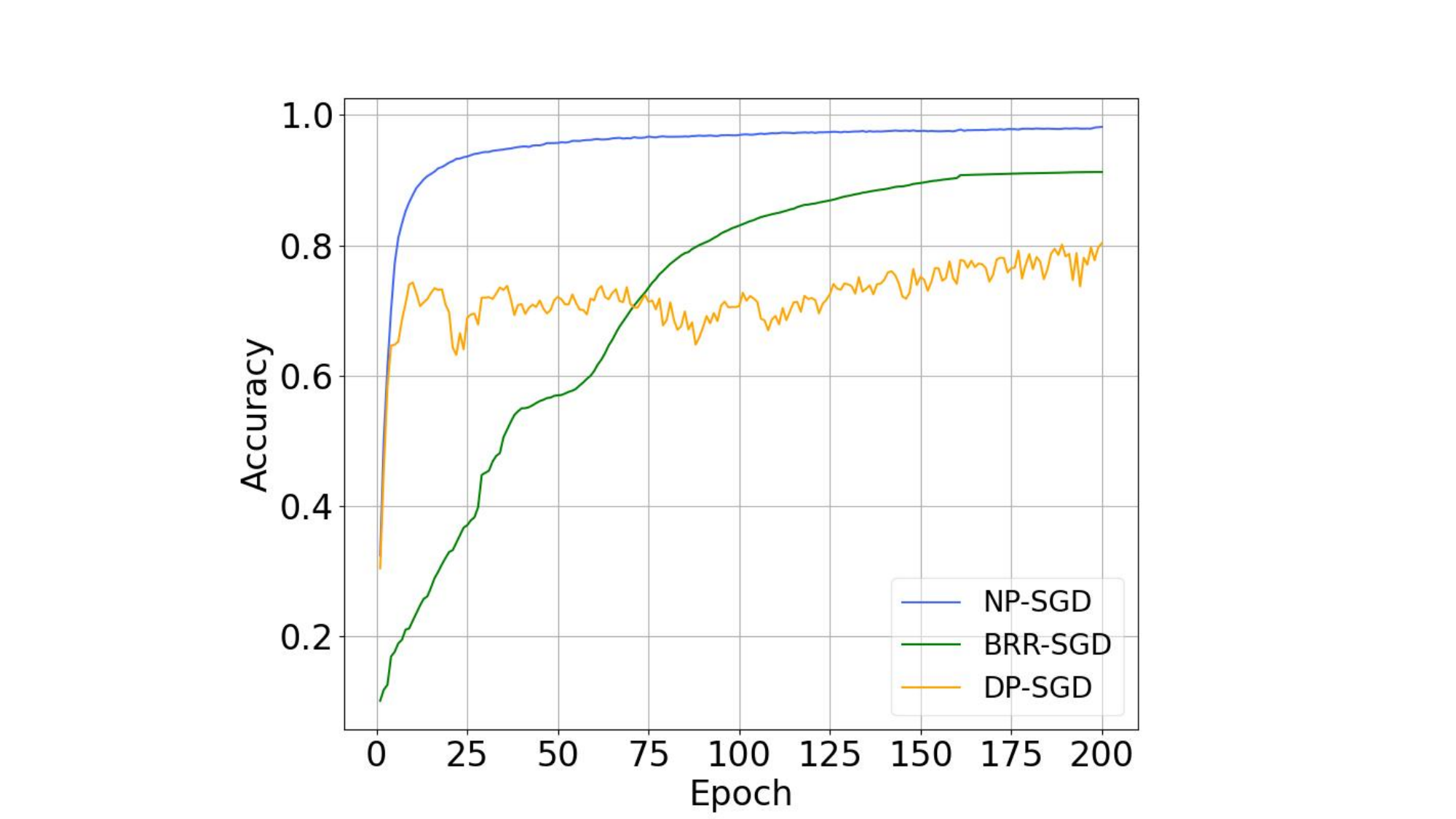} 
  \centerline{\small(d)\quad$\epsilon=0.7$}
 \label{0.7} 
 \end{minipage}
}
\subfigure {
 \begin{minipage}{0.25\linewidth}
 \includegraphics[width=1\textwidth,height=3.6cm,trim=5.5cm 0.2cm 6cm 2cm,clip]{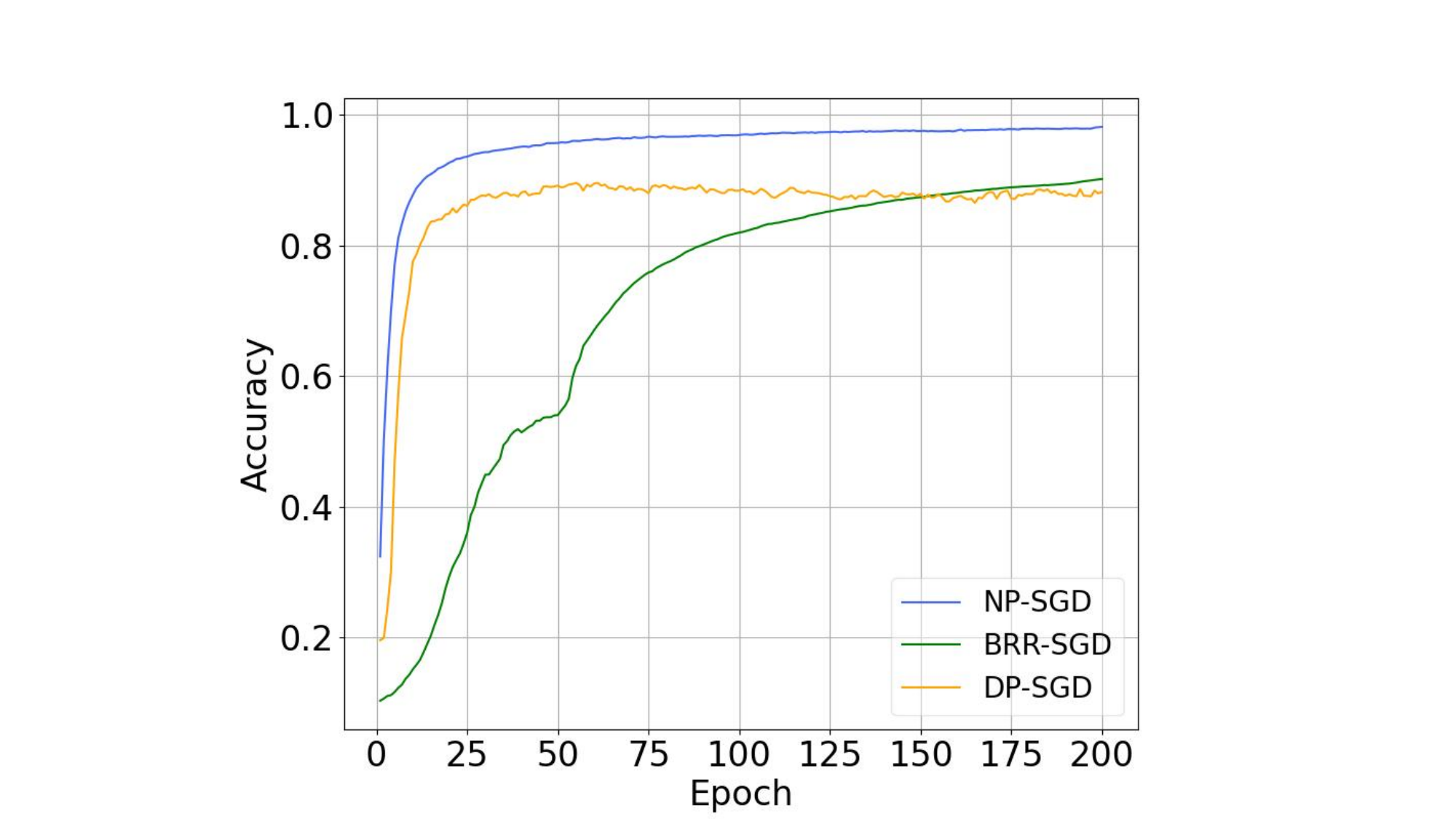}
  \centerline{\small(e)\quad$\epsilon=0.9$}
 \label{0.9} 
 \end{minipage}
}
\subfigure {
 \begin{minipage}{0.25\linewidth}
 \includegraphics[width=1\textwidth,height=3.6cm,trim=5.5cm 0cm 5cm 0cm, clip]{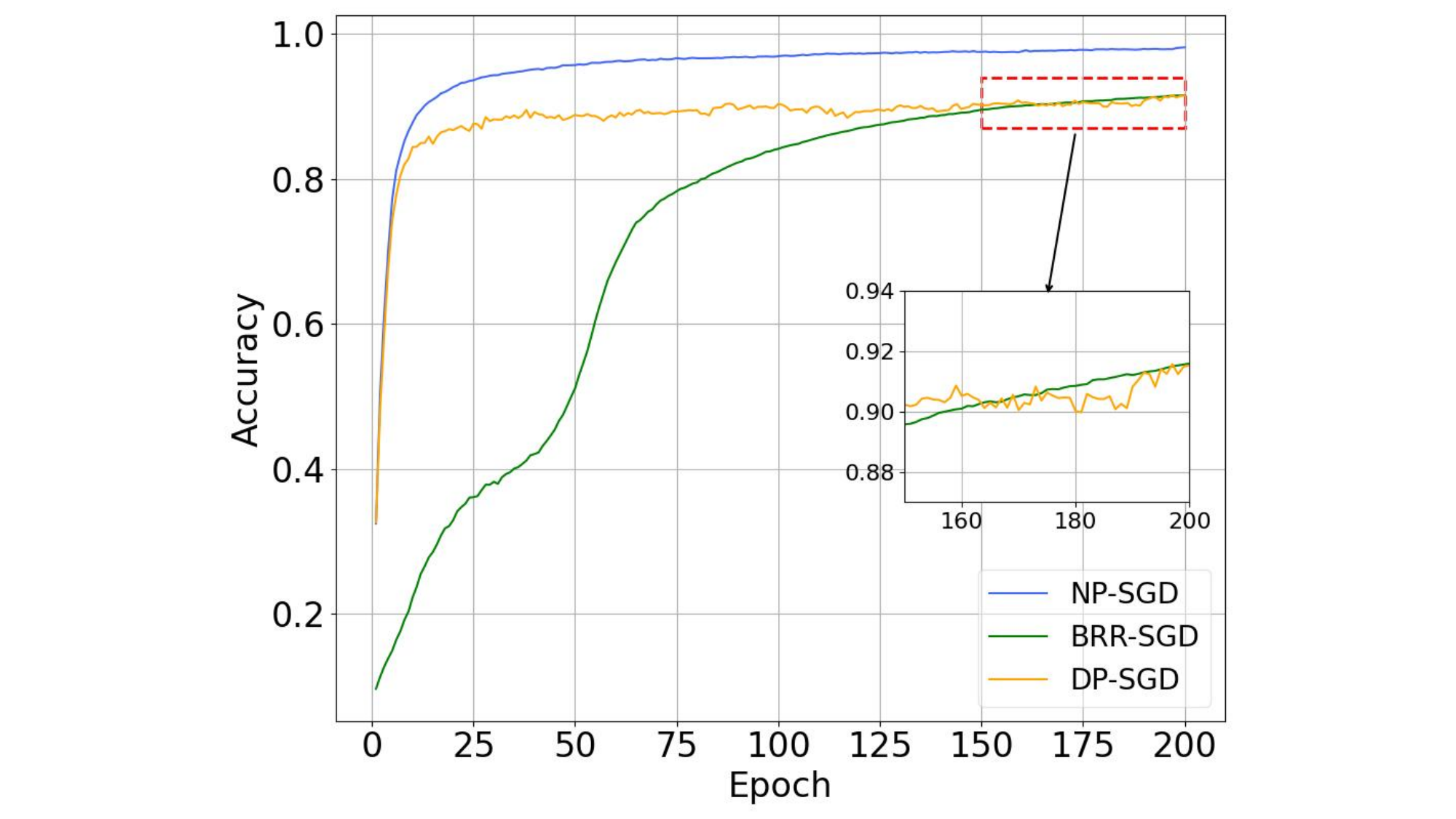} 
  \centerline{\small(f)\quad$\epsilon=1$}
 \label{1} 
 \end{minipage}
}
\vspace{-10pt}
\caption{Accuracy dynamics of differentially private SGD models with Laplace~\cite{ACGMMTZ2016} and BRR  mechanisms, respectively, compared to Non-Private SGD (NP-SGD) across training iterations under varying total privacy budgets, cf. DP-SGD \cite{ACGMMTZ2016} (using Laplace), BRR-SGD (using BRR).}
\label{BRR in SGD}
\end{figure*}

\section{Additional Experimental Evaluation}

\subsection{BRR in Private GBDT}\label{append:GBDT}




Now,  we turn to the applications of BRR in continuous domains for machine learning.
The current experiments are
conducted by deploying BRR mechanism as an alternative to the Laplace mechanism for perturbing 
each leaf value
in DPBoost by Li et al.~\cite{LWWH2020}. Originally, for training DPBoost, 
the sensitivity of the noise added after each geometric leaf clipping is $\Delta q=\min(\frac{g_{l}^\ast}{1+\lambda},2g_{l}^\ast(1-\eta)^{t-1})$. In the training of new DP-GBDT called DPBoost-BRR due to the above alternative, the perturbation range of leaf clipping should be \([-c, c]\) with $c=\Delta q/2$. 
Then  the  range \([-c, c]\) is equally divided for suitable $L$ that gives $n=\frac{2c}{L}+1$  discrete equidistant points. The leaf node values are first adjusted to the closest candidate. Finally, applying the BRR mechanism gives the perturbed leaf values.

For the regression tasks on the Bias, abalone and superconduct (sup) datasets, RMSE (Root Mean Squared Error) is used as the evaluation metric. The number of instances and features for these datasets are $(4177, 8)$, $(7588, 23)$, and $(21263,82)$, respectively. We compare the performance of applying BRR and Laplace mechanisms, respectively, in DPBoost under different privacy budgets. As shown in Fig.~\ref{BRR in privacy-preserving decision tree}, DPBoost-BRR consistently outperforms DPBoost in terms of regression performance. Moreover, the BRR mechanism demonstrates greater stability across various privacy budgets, highlighting its advantage in balancing privacy preservation and model accuracy. For small privacy budgets, the RMSE of DPBoost is excessively high, weakening its practicality in prediction. 
Even at $\epsilon=5$, the RMSE of DPBoost-BRR achieves about $66\%$, $45\%$, and $67\%$ of that for DPBoost (with Laplace) on three datasets, respectively. 

\subsection{BRR in Private SGD}\label{append:SGD}

The experiment is conducted based on the differential private SGD (DP-SGD)~\cite{ACGMMTZ2016} involving Laplace mechanism. The BRR mechanism is compared with the Laplace mechanism by their alternative in DP-SGD, with mainly analyzing how these two obfuscation mechanisms affect model accuracy under varying privacy budgets. Following the procedure in \cite{ACGMMTZ2016}, gradients at each iteration are clipped with the scalar norm constriction at most $C$, thereby bounding the sensitivity of the algorithm by $C$. To better adapt to the changing gradient scale during training, the clipping threshold $C$ is made adaptive. Specifically, let $ \mathbf{g}_t $ be the mean gradient norm of the current mini-batch at iteration $ t $, then $ C $ is updated as $
C_t = \max\left( \rho \cdot C_{t-1} + (1 - \rho) \cdot \mathbf{g}_t, \; C_{\min} \right),$ 
where $ \rho \in (0,1) $ is the decay rate for exponential moving averaging, and $ C_{\min} $ is a predefined minimum threshold to prevent $ C_t $ from being too small and causing numerical instability. NP-SGD is the base scheme where no noise is added on SGD. For the BRR-SGD, after the gradient clipping, the interval $[-C_t, 0]$ or $[ 0, C_t]$ is  split accordingly for the discrete probability distribution on equidistant points of BRR as before.

The experiments are carried out on the MNIST dataset, a widely-used benchmark for handwritten digit recognition, for the classification task. The model performance is evaluated using accuracy as the primary metric, the prediction results of the model are compared with the real labels of the test set and the accuracy is output, with the aim of assessing the effectiveness of these noise mechanisms across different privacy budget conditions.

As shown in Fig.~\ref{BRR in SGD}, with smaller privacy budgets, the injected noise increases, resulting in a decline in the accuracy of model updates. Under these conditions, the Laplace mechanism introduces larger noise magnitudes, leading to significant fluctuations in accuracy and difficulty in maintaining stability. In contrast, the BRR mechanism demonstrates better stability, although its convergence is slower. As the privacy budget increases, the noise intensity decreases, reducing its impact on model updates. The Laplace mechanism can achieve its potential accuracy more quickly; however, its fluctuations persist. On the other hand, the BRR mechanism exhibits a more stable convergence process and eventually reaches, or even surpasses, the accuracy of the Laplace mechanism. When the privacy budget is low, BRR-SGD is significantly better than DP-SGD, and the accuracy indicators are improved by about $28\%$, $21\%$, and $12\%$,  for the privacy budgets $0.3$, $0.5$, and $0.7$, respectively. At higher privacy budgets, both mechanisms inject smaller amounts of noise, and the model’s performance approximates a noise-free scenario. At this phase, the gap in accuracy between DP-SGD and BRR-SGD narrows, and both ultimately achieve similarly high accuracy levels.

Overall, the BRR mechanism proves to be more advantageous in terms of stability and final accuracy of SGD, particularly on balancing privacy protection with maintaining strong model performance.

\begin{figure}[tb]
\centering
\subfigure{
 \begin{minipage}{0.45\linewidth}
 \includegraphics[width=1\textwidth,height=3.5cm]{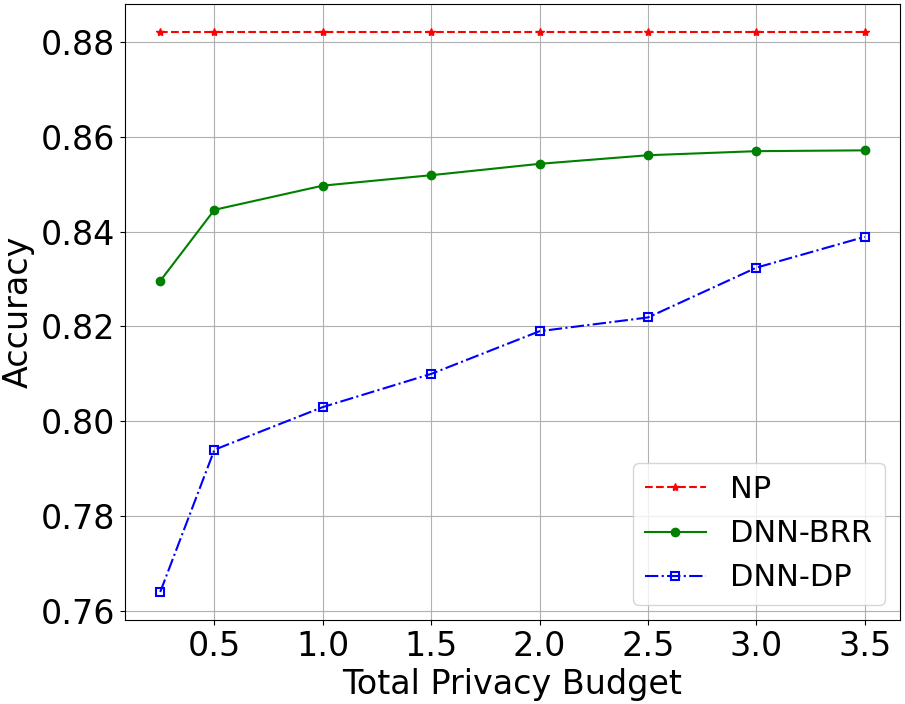} 
 \centerline{\small(a)\quad US Census}
 \label{US}
 \end{minipage}
}
\subfigure{
 \begin{minipage}{0.45\linewidth}
 \includegraphics[width=1\textwidth,height=3.5cm]{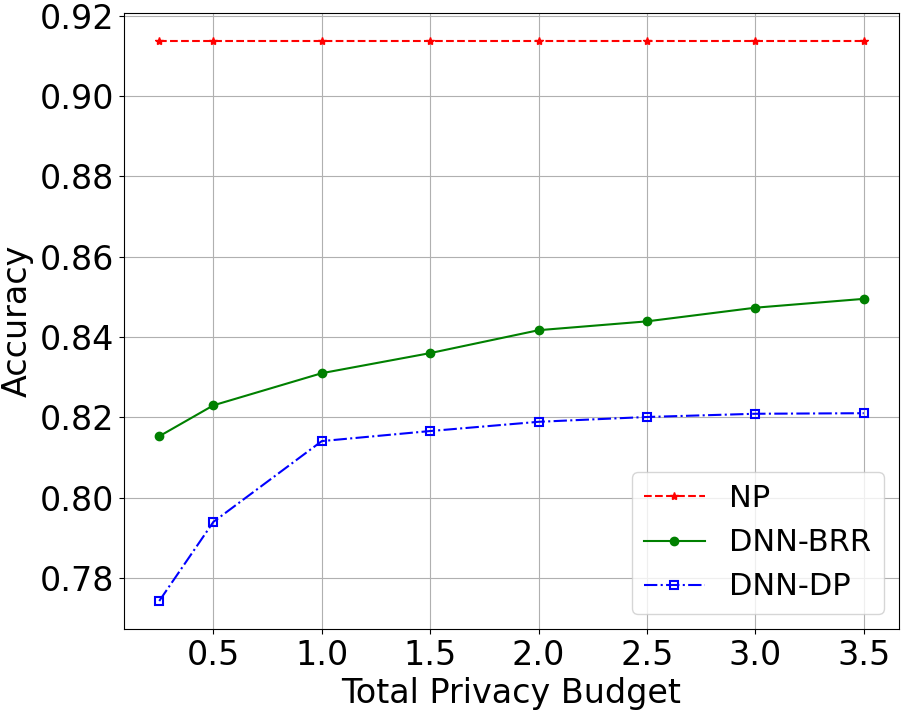} 
 \centerline{\small(b)\quad Bank Marketing}
 \label{bank} 
 \end{minipage}
}
\vspace{-10pt}
\caption{Classification-accuracy  of DNN-DP approaches using the Laplace ~\cite{WGMJ2020} and BRR mechanisms (DNN-BRR), respectively, and the vanilla DNN (NP), on the US Census and bank datasets, with varying total privacy budgets, cf. DNN-DP \cite{WGMJ2020} (using Laplace), DNN-BRR (using BRR), NP \cite{WGMJ2020}.}
\label{BRR_DNN}
\end{figure}

\subsection{BRR for Vector Perturbation in DNN}\label{sect:vector-lap}
To evaluate the effectiveness of the BRR mechanism in vector perturbation, we adopted the DNN-DP framework proposed by Wang et al. ~\cite{WGMJ2020} as a comparative baseline. 
To ensure a fair comparison, we replace the original Laplace noise by 
BRR mechanism while keeping the rest of DNN-DP unchanged. This allows us to compare the two mechanisms' abilities by uniform hyper-parameter optimization to preserve model's performance under privacy constraints.

The experiments are conducted on two publicly available datasets: US Census dataset\footnote{\url{https://archive.ics.uci.edu/ml/datasets/Adult}}, and  Bank Marketing  dataset\footnote{\url{https://archive.ics.uci.edu/ml/datasets/Bank+Marketing}} which contains records from a direct marketing campaign conducted by a Portuguese bank. The former contains $14$ features that were used by ~\cite{WGMJ2020} in the initial DNN-DP experiments, while the latter has been widely adopted in taxonomic studies and contains $20$ features. In order to ensure the fairness and comparability of experimental results, we adopt a uniform hyperparameter optimization strategy. We vary the privacy budget within the range of $0.25$ to $3.5$ to systematically evaluate the classification accuracy of the two mechanisms under different levels of privacy protection.

As shown in Fig.~\ref{BRR_DNN}, the experimental results on the US Census demonstrate that the BRR-based perturbation approach DNN-BRR consistently outperforms the traditional DNN-DP framework 
and also the results shown in \cite{WGMJ2020} as the privacy budget increases from $0.25$ to $3.5$. On average, BRR achieves $3.22\%$ improvement in accuracy over DNN, with a particularly notable gain of $5.29\%$ at the lowest privacy budget ($\epsilon$=0.25). 
Similar results can be observed on the bank dataset where the BRR mechanism also exhibits superior performance compared to the Laplace-based DNN-DP method. 

\subsection{Runtime}\label{append:runtime}

\begin{figure}[tb]
\centering
\subfigure{
 \begin{minipage}{0.45\linewidth}
 \includegraphics[width=1\textwidth,height=3.5cm]{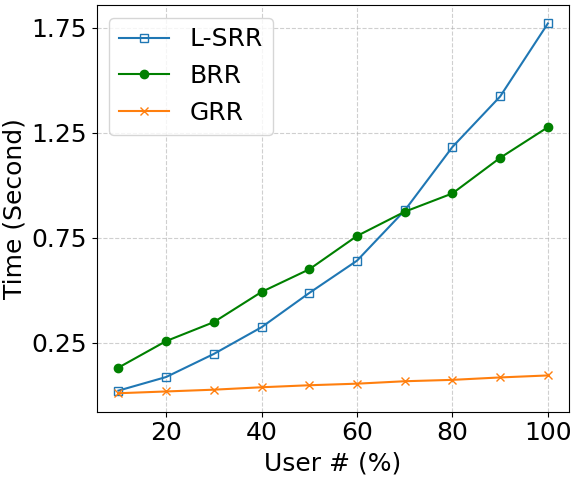} 
 \centerline{\small(a)\quad Gowalla}
 \end{minipage}
}
\subfigure{
 \begin{minipage}{0.45\linewidth}
 \includegraphics[width=1\textwidth,height=3.5cm]{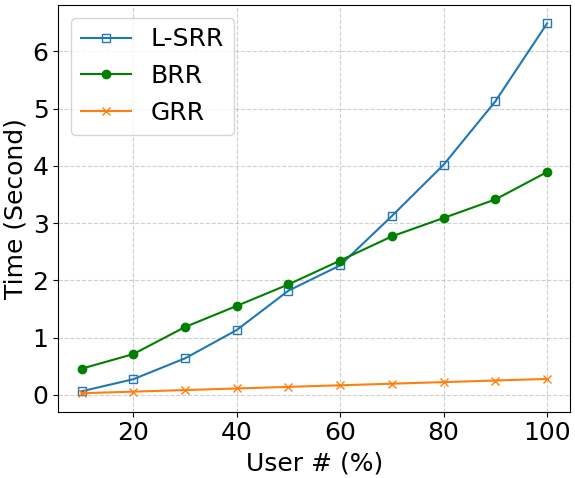} 
 \centerline{\small(b)\quad FourSquare}
 \end{minipage}
}
\vspace{-10pt}
\caption{Runtime of L-SRR \cite{WHXQH2022}, GRR \cite{KOV2016}, and BRR for the server with varying the number of users.}
\label{fig:runtime}
\end{figure}

In this section for runtime, the testing environment consists of the Intel64 Family 6 Model 158 CPU running at 3.41 GHz, 16 GB main memory, Windows 10 version 21H2, and Python version 3.12.

Since each user perturb their locations locally, the user-side
runtime is only $0.03$ second for each user on average. Then we mainly report the server-side runtime by testing $10\%$ to $100\%$ of each dataset demonstrated in Fig.~\ref{distribution}  with comparisons to GRR and L-SRR. For example, the ``$10\%$'' means that the number of users whose locations are to be perturbed accounts to $10\%$ of locations in the whole domain for each dataset.
The division of the whole domain can be finished ahead and is not involved in this testing runtime. 

Fig.~\ref{fig:runtime} shows that GRR takes a little time due to the simple process while L-SRR consumes some more time than BRR due to its complex computation for multiple location groups.
The runtime
of BRR  grows more slowly than L-SRR as the number of users reaches
$3000$ (e.g., $4$ seconds for FourSquare dataset), which is acceptable in real-life applications.

\section{Open Science appendix}

\noindent \textbf{Artifacts provided.}  
To support the reproducibility of our experimental evaluation of the Bipartite Randomized Response (BRR) mechanism and its integration with other differential privacy (DP) techniques, we provide the following artifacts:
\begin{itemize}
    \item Source code implementing the BRR mechanism (\texttt{BRR/BRR.py}) and the local utility optimization algorithm (\texttt{BRR/LocalRS.py})
    \item Implementation of DP-enhanced decision trees (\texttt{DPBoost/})
    \item Reproduction code for the L-SRR\cite{WHXQH2022} scheme by Wang et al. (\texttt{L-SRR/encode.py}, \texttt{L-SRR/partition.py})
    \item Implementations of DP mechanisms applied to Stochastic Gradient Descent and Deep Neural Networks (\texttt{SGDandDNN/})
    \item All experimental datasets are included in the repository, organized within each module’s directory (e.g., DPBoost/data for regression experiments)
    \item Documentation in the form of inline comments and this README, specifying input/output formats, parameter configurations, and execution instructions
\end{itemize}

\noindent \textbf{Access during double-blind review.}  
All artifacts are hosted in an anonymous, publicly accessible repository: 

\url{https://anonymous.4open.science/r/BipartiteRR-5B8E}.  

The repository is self-contained and can be executed independently in each subfolder. We recommend Python 3.12 and standard scientific libraries (e.g., NumPy, Scikit-learn, PyTorch). No authentication or special permissions are required.

\noindent \textbf{Justification for withheld artifacts.}  
All core artifacts necessary to reproduce the results presented in the paper are included in the repository. The experimental datasets are derived from publicly available sources, and no proprietary or sensitive data is used. Therefore, no materials are withheld.
\end{document}